\theoremstyle{plain}
\newtheorem{thm}{Theorem}[section]
\newtheorem{lem}[thm]{Lemma}
\newtheorem{prop}[thm]{Proposition}
\newtheorem{corollary}[thm]{Corollary}
\theoremstyle{definition}
\newtheorem{definition}{Definition}[section]
\newtheorem{example}{Example}[section]
\newtheorem*{remark}{Remark}
\newcommand{\supp}{\mathrm{supp}}
\newcommand{\Mod}[1]{\ (\mathrm{mod}\ #1)}
\def\*#1{\mathbf{#1}}
\title{Modular Polynomial Codes for Secure and Robust Distributed Matrix Multiplication}
\author{David Karpuk and Razane Tajeddine
\thanks{D.\ Karpuk is with W/Intelligence, WithSecure Corporation, Helsinki, Finland and the Department of Mathematics and Systems Analysis, Aalto University, Espoo, Finland.  email: \texttt{davekarpuk@gmail.com}.}
\thanks{R.\ Tajeddine is with the Department of Computer Science, University of Helsinki, Helsinki, Finland.  email: \texttt{razane.tajeddine@helsinki.fi}.}
\thanks{The work of R. Tajeddine was supported by the Research Council of Finland under grant 343555.}}
\date{\today}
\begin{document}

\maketitle

\begin{abstract}
We present Modular Polynomial (MP) Codes for Secure Distributed Matrix Multiplication (SDMM).  The construction is based on the observation that one can decode certain proper subsets of the coefficients of a polynomial with fewer evaluations than is necessary to interpolate the entire polynomial.  We also present Generalized Gap Additive Secure Polynomial (GGASP) codes.  Both MP and GGASP codes are shown experimentally to perform favorably in terms of recovery threshold when compared to other polynomials codes for SDMM which use the grid partition.  Both MP and GGASP codes achieve the recovery threshold of Entangled Polynomial Codes for robustness against stragglers, but MP codes can decode below this recovery threshold depending on the set of worker nodes which fails.  The decoding complexity of MP codes is shown to be lower than other approaches in the literature, due to the user not being tasked with interpolating an entire polynomial.
\end{abstract}

\section{Introduction}

\subsection{Background and Motivation}

Multiplication of large matrices is a fundamental building block of modern science and technology.  Matrix multiplication is especially ubiquitous now in the era of Big Data, as the training and inference of high-powered Machine Learning models requires performing linear operations on data sets with millions or even billions of records.  Such computations are often so intensive that they cannot be performed by a single machine, but are rather outsourced to a computing cluster in the cloud.  On the other hand, Machine Learning models are often trained on personal user data, and respecting the privacy needs of the individuals to whom the training data belongs may require revealing as little information as possible about the data to be processed to those doing the processing.

The problem of \emph{Secure Distributed Matrix Multiplication} (SDMM) considers a user who wishes to compute the product of two (large) block matrices $A$ and $B$, partitioned as
\begin{equation}\label{basic_matrices}
A = \begin{bmatrix}
    A_{0,0} & \cdots & A_{0,M-1} \\
    \vdots & \ddots & \vdots \\
    A_{K-1,0} & \cdots & A_{K-1,M-1}
\end{bmatrix}
\quad\text{and}\quad
B = \begin{bmatrix}
    B_{0,0} & \cdots & B_{0,L-1} \\
    \vdots & \ddots & \vdots \\
    B_{M-1,0} & \cdots & B_{M-1,L-1}
\end{bmatrix}.
\end{equation}
This massive matrix multiplication is distributed to $N$ worker nodes, each of which performs a single submatrix multiplication of the same size as one of the sub-products $A_{k,m}B_{m',\ell}$.  The parameters $K$, $M$, and $L$ therefore control the complexity of the operation performed at each worker node.  Given a realistic scenario with $N$ worker nodes, each with a fixed computational budget, the user may have good reason to fix all three parameters to be some desired values.  Concerned about leaking sensitive information to the worker nodes, the user wishes to perform this task under the constraint of $T$-security, wherein no subset of $T\leq N$ worker nodes may learn any information about the matrices $A$ and $B$.

We will refer to the case where $M = 1$ as the \emph{outer product partition} and the case where $K = L = 1$ as the \emph{inner product partition}.  These special cases have been dealt with extensively in the literature, as reviewed below.  The case where all three parameters are greater than one will be called the \emph{grid partition}.  Grid partitioning appears necessary in distributed matrix multiplication if one has limited computational budgets at each worker node, and wants to develop protocols for input matrices of any size whatsoever.

A related question is that of \emph{robustness} in distributed computation.  In a large, distributed computing system, it is natural that one or more worker nodes may be delayed in returning the result of their computation to the user, or may not return any computation at all due to, e.g., being disconnected from the network.  Such workers are called \emph{stragglers}, and coding-theoretic techniques are often introduced to deal with this phenomenon as one would deal with an erasure using an error-correcting code.  

The problem under consideration is thus briefly stated as follows.  Given a partitioning of two matrices $A$ and $B$ as in \eqref{basic_matrices}, and some desired levels of security and robustness, what is the minimum number of worker nodes $N$ needed to guarantee a successful computation of the product $AB$?  This number $N$ is often referred to as the \emph{recovery threshold}. 
 To normalize $N$ with respect to $K$, $M$, and $L$, performance is often measured in terms of the \emph{rate} $KML/N$.

\subsection{Related Work}

Ideas from Coding Theory have been used for several years now to provide robustness against stragglers in distributed computation frameworks \cite{high_dim_coded,high_dim_coded_journal,limited_sharings,secure_coded_mp}.  Polynomial codes were introduced in \cite{polycodes} as a method of mitigating stragglers in the distributed multiplication of large matrices.  MatDot and PolyDot codes \cite{straggler_optimal} and Entangled Polynomial Codes \cite{entangled1, entangled2, straggler_fund} were introduced shortly thereafter, which further improved upon the constructions in \cite{polycodes}, and offered trade-offs between upload cost, download cost, and recovery threshold.  These trade-offs were further studied in \cite{tradeoff}, MatDot codes of \cite{straggler_optimal} were applied to $k$-nearest neighbors estimation in \cite{k_means}, and PolyDot codes were applied to neural network training in \cite{deep_nn_coded}.  Recent progress in this area was done in \cite{duursma} by constructing distributed versions of Strassen's algorithm and other lower-rank matrix multiplication algorithms.

The problem of SDMM for the outer product partition originated with \cite{ravi}, and was followed by \cite{kakar}, wherein the authors gave more rate-efficient schemes.  The Gap Additive Secure Polynomial (GASP) codes \cite{d2020gasp, d2021degree} again showed drastic improvements in rates over previous constructions, by constructing polynomial codes with gaps in the degrees of the polynomials to be interpolated; the user then knows \emph{a priori} that some of the coefficients to be recovered are zero.

Since the above work there has been much progress on SDMM, especially in connection with matrix computations on coded data \cite{jie_cam_coded, koreans_coded, mds_coded_pmm} and studying trade-offs between upload cost, download cost, encoding and decoding complexity, latency, privacy, and security \cite{bivariate, rawad_adaptive, rawad_latency, upload_vs_download, systematic_private}.  The cross-subspace alignment codes of \cite{jafar_batch, jafar_batch2} have deep connections with Private Information Retrieval (PIR) \cite{jafar_cross, jafar_xsecure_tprivate} and Lagrange Coded Computing \cite{lagrange}.  Transforming an SDMM problem into a PIR problem as done in \cite{jafar_batch} essentially requires a shift in the objective function; in the current work we wish to minimize the number of worker nodes $N$ given fixed partition variables $K$, $M$, and $L$ (and therefore, fixed computational power at each worker node).  However, in \cite{jafar_batch}, in addition to a different system model, the number of worker nodes $N$ is assumed to be fixed and questions of capacity are studied under this constraint.  The connection and correlation between the recovery threshold and various measures of communication cost has been elaborated on in \cite{rafael_note, inner_product}.  The Discrete Fourier Transform codes of \cite{inner_product} achieve the best-known recovery threshold for the inner product partition, namely $M + 2T$.

Our main points of comparison are \cite{root_of_unity, oliver, flex}, which focus on minimizing the number of worker nodes necessary under the condition of $T$-security using the grid partition.  The results of \cite{root_of_unity} generalize the construction of \cite{inner_product} to the grid partition.  Underpinning the current work as well as \cite{inner_product, root_of_unity} is the fact that the sum of all $M^{th}$ roots of unity is zero, but that appears to be where the similarities end.  The authors of \cite{oliver} show experimentally that their codes outperform those of \cite{flex} in terms of recovery threshold for a number of parameters; therefore to keep the presentation of our results tractable we omit direct comparison with \cite{flex} and rather focus on \cite{root_of_unity, oliver} in our experiments.

One promising avenue of future research is the application of Algebraic Geometry codes to SDMM.  The authors of \cite{okko4} show that Algebraic Geometry codes constructed from Kummer extensions of function fields allow for rates comparable to those of GASP codes.  In \cite{hermitian}, the authors use codes constructed from Hermitian curves to reduce the necessary field size of certain SDMM protocols.  Algebraic tools have proved their worth in other ways as well, as the authors of \cite{field_trace} design codes based on the field trace to reduce total communication load using the literature on decoding Reed-Solomon codes.

\subsection{Outline and Summary of Main Contributions}

The main contributions of the current work are as follows:
\begin{enumerate}
    \item In Section \ref{partial_poly} we introduce the mechanism of \emph{partial polynomial interpolation}, which allows one to decode certain proper subsets of the non-zero coefficients of a polynomial whose indices satisfy a specified congruence condition, with fewer evaluations than is necessary to interpolate the entire polynomial.  The central tool is what we refer to as the $\mathrm{mod}$-$M$ \emph{transform} of a polynomial; see Definition \ref{transform}.
    \item In Section \ref{mpc} we use the tools of the previous section to define \emph{Modular Polynomial} (MP) codes for SDMM; see Definition \ref{mpc_defn} and the subsequent discussion.  Theorems \ref{mpc_decodability_security} and \ref{polynomials} and Corollary \ref{finally} study sufficient conditions for the decodability and security of such codes.  Subsection \ref{mpc_exmaple} concludes the section with an explicit example. Section \ref{mpc_rate_section} is devoted to Theorem \ref{mpc_rate}, which calculates the recovery threshold of MP codes as an explicit function of the system parameters, for a certain subclass of such codes.  
    \item Section \ref{ggasp_section} defines GGASP codes, a generalization of the GASP codes of \cite{d2020gasp,d2021degree} to the grid partition.  An explicit formula for the recovery threshold of such codes is given in Theorem \ref{ggasp_r_rate} as a function of the system parameters, and the section is concluded with an explicit example.
    \item Section \ref{comparison} provides several experiments which compare MP and GGASP codes with the existing SDMM codes in the literature for the grid partition.  We show these codes compare favorably in a number of system parameter regimes.
    \item In Section \ref{robustness_section} we generalize our system model to include straggler worker nodes.  We show in Theorem \ref{mp_recovery_threshold} and Theorem \ref{ggasp_robust_t} that MP and GGASP codes achieve the state-of-the-art recovery threshold in the absence of $T$-security.  Additionally, MP codes are shown to be able to decode below the recovery threshold depending on which subsets of worker nodes fail to return responses to the user, a property we precisely quantify with Theorem \ref{prob_recovery}.  Theorems \ref{ggasp_robust_t} and \ref{mp_recovery_t} describe the ability of MP and GGASP codes to provide $T$-security and robustness simultaneously, and we conclude the section with a detailed example.
    \item In Section \ref{complexity} we briefly study the computational complexity of MP and GGASP codes.  The decoding complexity of MP codes compares especially favorably to previous constructions in the literature, as the user only has to decode $N/M$ coefficients of a certain polynomial, instead of $N$ as is almost always the case in the SDMM literature.
    \item We conclude in Section \ref{conclusion} and provide a short discussion of future research directions.
\end{enumerate}

\subsection{Notation}

We use the symbols $\mathbb{Z}$ and $\mathbb{Z}_{\geq 0}$ to denote the integers and non-negative integers, respectively.  Matrices will generally be denoted by capital letters such as $A$ and $B$.  We use $\mathbb{F}$ and $\mathbb{K}$ to denote fields, $\mathbb{F}^\times$ to denote the multiplicative group of non-zero elements of $\mathbb{F}$, and for a prime power $q$ we use the notation $\mathbb{F}_q$ to denote the finite field of cardinality $q$.  For any integers $x$ and $y$, we define
\[
[x:y] := \left\{
\begin{array}{cl}
\{x,x+1,\ldots,y\} & \text{if } x\leq y \\
\emptyset & \text{if } x>y
\end{array}
\right.
\]
We adopt the following conventions for any sequence $\{x_i\}$ of real numbers and any integer-indexed collection $\{\mathcal{A}_i\}$ of sets: $\sum_{i = a}^b x_i = 0$ if $a>b$, and $\bigcup_{i = a}^b\mathcal{A}_i = \emptyset$ if $a>b$.  Such conventions serve only to streamline some formulas.  If $\mathcal{A}$ and $\mathcal{B}$ are any two sets, we denote their disjoint union with $\mathcal{A}\sqcup\mathcal{B}$.

\section{Partial Polynomial Interpolation and the mod-$M$ Transform} \label{partial_poly}

In contrast to all previous work on SDMM, our decoding algorithm for polynomial codes allows a user to decode a proper subset of coefficients of the desired polynomial, with fewer evaluations than there are non-zero coefficients of the polynomial.  We refer to this general process as \emph{partial polynomial interpolation}, and this section is devoted to the mathematics underlying this tool.  

\subsection{Generalized Vandermonde Matrices}

We wish to generalize the process of polynomial interpolation to recover certain subsets of coefficients of polynomials, which necessitates some basic understanding of generalized Vandermonde matrices.  Recall that a $P\times N$ matrix has the \emph{MDS property} if every $P\times P$ minor is invertible.  Equivalently, it is the generator matrix of an MDS code.  {We will make extensive use of the following Lemma.}

\begin{lem}\label{sz_non_vanishing}
    {Let $\mathbb{F}$ be a field and let $\phi_1,\ldots,\phi_S$ be non-zero polynomials in $\mathbb{F}[X_1,\ldots,X_N]$.  Then there exists a finite extension $\mathbb{K}/\mathbb{F}$ and a point $a\in \mathbb{K}^N$ such that $\phi_s(a)\neq 0$ for all $s$.}
\end{lem}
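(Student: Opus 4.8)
The plan is to first reduce to the case of a single polynomial. Since $\mathbb{F}[X_1,\ldots,X_N]$ is an integral domain, the product $\phi := \prod_{s=1}^S \phi_s$ is again a non-zero polynomial, and for any point $a$ we have $\phi(a)\neq 0$ if and only if $\phi_s(a)\neq 0$ for every $s$. So it suffices to produce a finite extension $\mathbb{K}/\mathbb{F}$ and a point $a\in\mathbb{K}^N$ at which the single polynomial $\phi$ does not vanish.

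Next I would work over the algebraic closure $\overline{\mathbb{F}}$, which is an infinite field whether or not $\mathbb{F}$ itself is finite. The key elementary fact is that a non-zero polynomial over an infinite field does not vanish at every point of the affine space; I would prove this by induction on $N$. The base case $N=1$ is the standard fact that a non-zero univariate polynomial of degree $d$ has at most $d$ roots while an infinite field has more than $d$ elements. For the inductive step, write $\phi = \sum_{j=0}^{d} c_j(X_1,\ldots,X_{N-1})\,X_N^{j}$ with $c_d\neq 0$; by the inductive hypothesis choose $b\in\overline{\mathbb{F}}^{\,N-1}$ with $c_d(b)\neq 0$, so that $\phi(b,X_N)$ is a non-zero univariate polynomial in $X_N$, and then choose $b_N\in\overline{\mathbb{F}}$ outside its finite set of roots. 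This produces a point $a=(a_1,\ldots,a_N)\in\overline{\mathbb{F}}^{\,N}$ with $\phi(a)\neq 0$.

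Finally, I would descend to a finite extension: set $\mathbb{K} := \mathbb{F}(a_1,\ldots,a_N)$. Each $a_i$ lies in $\overline{\mathbb{F}}$ and is therefore algebraic over $\mathbb{F}$, so $\mathbb{K}$ is a finitely generated algebraic, hence finite, extension of $\mathbb{F}$; by construction $a\in\mathbb{K}^N$ and $\phi(a)\neq 0$, so $\phi_s(a)\neq 0$ for all $s$, as required. The only mildly non-trivial ingredient is the induction showing a non-zero polynomial over an infinite field has a non-root, but this is entirely routine, so I do not anticipate a real obstacle. (Alternatively, when $\mathbb{F}=\mathbb{F}_q$ is finite one could give an explicit bound via a Schwartz--Zippel estimate and take $\mathbb{K}=\mathbb{F}_{q^m}$ with $q^m>\deg\phi$, but the algebraic-closure argument avoids this bookkeeping and handles both cases uniformly.)
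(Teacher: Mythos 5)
Your proof is correct, but it takes a genuinely different route from the paper's. The paper argues probabilistically: pick an extension $\mathbb{K}$ with $|\mathbb{K}| > SD$ (where $D = \max_s \deg \phi_s$), sample $a$ uniformly from a size-$SD$ subset, and combine the union bound with the Schwartz--Zippel Lemma to conclude that the event ``some $\phi_s(a)=0$'' has probability strictly less than $1$. You instead reduce to a single polynomial $\phi = \prod_s \phi_s$, work over the algebraic closure $\overline{\mathbb{F}}$, prove by induction on $N$ that a non-zero polynomial over an infinite field has a non-root, and then descend to the finite subextension $\mathbb{K} = \mathbb{F}(a_1,\ldots,a_N)$ (which is finite because it is finitely generated and algebraic). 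Both arguments are sound. The paper's version has the advantage of producing an explicit quantitative bound on $|\mathbb{K}|$ in terms of $S$ and $D$, which is the sort of field-size information the paper cares about elsewhere (compare Proposition~\ref{field_size}); your version is self-contained and avoids invoking Schwartz--Zippel, but gives no control over the degree $[\mathbb{K}:\mathbb{F}]$ beyond finiteness. One small stylistic note: once you take the product $\phi$, its degree is $\sum_s \deg\phi_s \le SD$, so your reduction is compatible with the paper's bound if you wanted to recover it; you even note the Schwartz--Zippel alternative yourself at the end.
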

\begin{proof}
    {This result follows from basic notions in Algebraic Geometry, but we give a more elementary proof using the Schwartz-Zippel Lemma as in the proof of \cite[Theorem 1]{d2020gasp}.}
    
    {Let $D$ be the maximum of the total degrees of the $\phi_s$.  Choose an extension $\mathbb{K}/\mathbb{F}$ such that $|\mathbb{K}| > SD$ and let $U\subseteq \mathbb{K}$ be a set of size $SD$.  Sample the entries $a_n$ of $a$ independently and uniformly from $U$, and let $\mathsf{Z}$ be the union of the events $\phi_s(a) = 0$.  To prove the Lemma it suffices to show that $\mathrm{Pr}(\mathsf{Z})< 1$. By the union bound and the Schwartz-Zippel Lemma, we have}
    \[
    {\mathrm{Pr}(\mathsf{Z}) \leq \sum_{s = 1}^S \mathrm{Pr}(\phi_s(a)=0) \leq S\frac{D}{|U|} < 1}
    \]
    {which completes the proof of the Lemma.}
\end{proof}

\begin{definition}
    Let $P\leq N$ be positive integers, and let $\mathcal{I} = \{i_1,\ldots,i_P\}$ be a set of non-negative integers such that $0\leq i_1< \cdots < i_P$.  Let $\mathbb{F}$ be a field and let $a = (a_1,\ldots,a_N)\in \mathbb{F}^N$.  We define the \emph{generalized Vandermonde matrix} associated to this data to be the $P\times N$ matrix
    \[
    GV(a,\mathcal{I}) :=  \begin{bmatrix}
    a_n^{i_p}
    \end{bmatrix}_{\substack{1\leq p\leq P \\ 1\leq n\leq N}} \in \mathbb{F}^{P\times N}.
    \]
\end{definition}

\begin{prop}\label{evaluation}
    Let $\mathbb{F}$ be a field.  Then for any positive integers $P\leq N$ and any set of non-negative integers $\mathcal{I} = \{i_1,\ldots,i_P\}$ such that $0\leq i_1< \cdots < i_P$, there exists a finite extension $\mathbb{K}/\mathbb{F}$ and a vector $a\in \mathbb{K}^N$ such that $GV(a,\mathcal{I})$ has the MDS property.
\end{prop}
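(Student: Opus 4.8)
The plan is to realize each $P\times P$ minor of $GV(a,\mathcal{I})$ as the evaluation of a single polynomial in the entries of $a$, check that all of these polynomials are non-zero, and then invoke Lemma~\ref{sz_non_vanishing} to find a point (over a suitable finite extension) at which none of them vanish.

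Concretely, for each $P$-element subset $\mathcal{S} = \{n_1 < \cdots < n_P\}\subseteq[1:N]$ I would introduce the polynomial
\[
\phi_{\mathcal{S}}(X_1,\ldots,X_N) := \det\left[X_{n_j}^{i_p}\right]_{1\leq p,j\leq P}\ \in\ \mathbb{F}[X_1,\ldots,X_N],
\]
so that for $a\in\mathbb{K}^N$ the value $\phi_{\mathcal{S}}(a)$ is exactly the $P\times P$ minor of $GV(a,\mathcal{I})$ obtained by keeping the columns indexed by $\mathcal{S}$. By definition, $GV(a,\mathcal{I})$ has the MDS property if and only if $\phi_{\mathcal{S}}(a)\neq 0$ for every one of the $\binom{N}{P}$ subsets $\mathcal{S}$.

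The one place where real content enters is verifying that each $\phi_{\mathcal{S}}$ is a non-zero polynomial. Expanding the determinant via the Leibniz formula gives
\[
\phi_{\mathcal{S}} = \sum_{\sigma} \operatorname{sgn}(\sigma)\, X_{n_1}^{i_{\sigma(1)}}X_{n_2}^{i_{\sigma(2)}}\cdots X_{n_P}^{i_{\sigma(P)}},
\]
where $\sigma$ ranges over the permutations of $\{1,\ldots,P\}$. Since $i_1<\cdots<i_P$ are pairwise distinct and $X_{n_1},\ldots,X_{n_P}$ are distinct variables, the exponent vector $(i_{\sigma(1)},\ldots,i_{\sigma(P)})$ determines $\sigma$ uniquely; hence distinct $\sigma$ contribute distinct monomials, no cancellation occurs, and every monomial appears with coefficient $\pm 1\neq 0$. (Equivalently, $\phi_{\mathcal{S}}$ factors as a Schur polynomial times the ordinary Vandermonde determinant in $X_{n_1},\ldots,X_{n_P}$; but the Leibniz argument is self-contained and makes transparent that the conclusion holds over a field of any characteristic.) Thus $\phi_{\mathcal{S}}\neq 0$ in $\mathbb{F}[X_1,\ldots,X_N]$.

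Finally, I would apply Lemma~\ref{sz_non_vanishing} to the finite family $\{\phi_{\mathcal{S}}\}_{\mathcal{S}}$ to obtain a finite extension $\mathbb{K}/\mathbb{F}$ and a point $a\in\mathbb{K}^N$ with $\phi_{\mathcal{S}}(a)\neq 0$ for all $\mathcal{S}$; this $a$ is the desired vector. I do not anticipate any obstacle beyond the non-vanishing of the generalized Vandermonde determinant handled above — the rest is bookkeeping together with a direct appeal to the Lemma.
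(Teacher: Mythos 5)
Your proof takes essentially the same route as the paper: both reduce the MDS property to the non-vanishing of all $P\times P$ minors as polynomials in the entries of $a$, and then invoke Lemma~\ref{sz_non_vanishing} to find a valid evaluation vector over a finite extension. The only difference is that the paper cites the non-vanishing of generalized Vandermonde minors as a ``well-known fact,'' whereas you supply a (correct) self-contained justification via the Leibniz expansion and the observation that distinct permutations produce distinct monomials since the $i_p$ are pairwise distinct.
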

\begin{proof}
    {It is a} well-known fact that the determinants of the $P\times P$ submatrices of $GV(a,\mathcal{I})$ are non-zero polynomials in the ring $\mathbb{F}[a_1,\ldots,a_N]$, when one considers the $a_n$ as variables. {Now we apply Lemma \ref{sz_non_vanishing} to conclude the proof.}
\end{proof}

{In what follows, we explore the necessity of passing to a non-trivial finite extension $\mathbb{K}/\mathbb{F}$ of the base field so that the conclusion of the above proposition holds.  Specifically, see Proposition \ref{field_size}, Example \ref{ppi_example}, and Subsection \ref{mpc_exmaple}.  For both Lemma \ref{sz_non_vanishing} and Proposition \ref{evaluation} we could have sidestepped the notion of field extension and simply assumed that the base field $\mathbb{F}$ was sufficiently large.  However, the current presentation emphasizes that the polynomials themselves do not need to be defined over a sufficiently large field, only the evaluation vector does.}

\subsection{Partial Polynomial Interpolation}

Let $V$ be a finite-dimensional vector space over a field $\mathbb{F}$; in our case $V$ will be the set of matrices over $\mathbb{F}$ of a fixed size.  We define the set of polynomials with coefficients in $V$ as follows:
\[
V[x] := \{h = h(x) = v_0 + v_1x + \cdots + v_nx^n\ |\ v_i \in V,\ n \geq 0,\ v_n\neq 0\}
\]
with the obvious notion of addition and scalar multiplication by elements of $\mathbb{F}$.  Any $a\in \mathbb{F}$ gives us an evaluation map $V[x]\rightarrow V$, defined by $h\mapsto h(a) = \sum_{i = 0}^n v_ia^i$.  If $h$ has degree $n$, and $a_0,\ldots,a_n$ are any $n+1$ pairwise distinct elements of $\mathbb{F}$, then we can recover the coefficients of $h$ from the evaluations $h(a_0),\ldots,h(a_n)$ using  polynomial interpolation.

With the end goal of generalizing polynomial interpolation to recover proper subsets of coefficients of polynomials $h\in V[x]$, we make the following definition.
\begin{definition}
    Let $h = v_0 + v_1x + \cdots + v_nx^n \in V[x]$ and let $M>0$ be a positive integer.  We define
\[
\supp(h) = \{i \ |\ v_i\neq 0\}\quad\text{and}\quad
    \supp_{-1}(h) = \{i\in \supp(h)\ |\ i\equiv M-1 \Mod{M}\}.
    \]
    We refer to $\supp(h)$ as the \emph{support} of $h$.
\end{definition}

\begin{definition}\label{transform}
Let $M>0$ be a positive integer and assume that $\mathbb{F}$ contains a primitive $M^{th}$ root of unity $\zeta$.  For $h \in V[x]$ define
\begin{equation}
\widehat{h} := \frac{1}{M}\sum_{m = 0}^{M-1}\zeta^m h(\zeta^m x)
\end{equation}
 to be the \emph{$\mathrm{mod}$-$M$ transform} of $h$.
\end{definition}

Note that the mod-$M$ transform is well-defined, since if $\mathbb{F}$ contains a primitive $M^{th}$ root of unity it must be the case that $M\neq 0$ in $\mathbb{F}$.  It is also worth remarking that $h\mapsto \widehat{h}$ is additive as a map from $V[x]$ to itself, so that $\widehat{h_1+h_2} = \widehat{h}_1 + \widehat{h}_2$ for any $h_1,h_2\in V[x]$.
\begin{prop}
Fix $h\in V[x]$ and $M$ as above.  Then we can express the mod-$M$ transform of $h$ as
\[
\widehat{h} = \sum_{i\in \supp_{-1}(h)} v_i x^i.
\]
In particular, $\supp(\widehat{h}) = \supp_{-1}(h)$.
\end{prop}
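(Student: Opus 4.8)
The plan is to carry out the standard ``roots of unity filter'' computation and then simply read off the coefficients. First I would expand everything in sight: writing $h = \sum_{i=0}^n v_ix^i$, we have $h(\zeta^m x) = \sum_{i=0}^n v_i\zeta^{mi}x^i$, so that
\[
\widehat{h} = \frac{1}{M}\sum_{m=0}^{M-1}\zeta^m\sum_{i=0}^n v_i\zeta^{mi}x^i = \sum_{i=0}^n\left(\frac{1}{M}\sum_{m=0}^{M-1}\zeta^{m(i+1)}\right)v_ix^i,
\]
after interchanging the two finite sums (legitimate since both are finite and $V[x]$ is an $\mathbb{F}$-vector space).

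Next I would evaluate the inner scalar $c_i := \frac{1}{M}\sum_{m=0}^{M-1}\zeta^{m(i+1)}$ via the usual geometric-series dichotomy. If $i\equiv M-1\pmod{M}$, then $\zeta^{i+1} = 1$, so every summand equals $1$ and $c_i = 1$. Otherwise $\zeta^{i+1}\neq 1$; since $\mathbb{F}$ is a field, $\zeta^{i+1}-1$ is invertible, and the geometric sum equals $\frac{(\zeta^{i+1})^M - 1}{\zeta^{i+1}-1} = 0$ because $(\zeta^{i+1})^M = (\zeta^M)^{i+1} = 1$. Hence $c_i = 1$ exactly when $i\equiv M-1\pmod{M}$ and $c_i = 0$ otherwise, which gives $\widehat{h} = \sum_{i\,\equiv\, M-1\,(M)} v_ix^i$. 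Discarding the (zero) terms with $v_i = 0$ replaces the index set by $\supp_{-1}(h)$, yielding the claimed identity; and since every $v_i$ with $i\in\supp_{-1}(h)$ is nonzero by definition of $\supp_{-1}$, we get $\supp(\widehat{h}) = \supp_{-1}(h)$ immediately.

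There is no genuine obstacle here; the only point requiring a moment's care is the validity of the geometric-series manipulation over an arbitrary field, which rests on two facts: that $\zeta^j = 1$ if and only if $M\mid j$ (the defining property of a primitive $M^{th}$ root of unity, used in both branches of the dichotomy), and that the characteristic of $\mathbb{F}$ does not divide $M$, so that $M$ is invertible and the normalizing factor $\tfrac{1}{M}$ is meaningful. Both are already guaranteed by the standing hypothesis that $\mathbb{F}$ contains a primitive $M^{th}$ root of unity, as noted in the remark following Definition \ref{transform}.
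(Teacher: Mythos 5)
Your proof is correct and follows essentially the same route as the paper: expand $\widehat{h}$, swap the finite sums, and evaluate the roots-of-unity filter coefficient $c_i$, which vanishes precisely when $i\not\equiv M-1\pmod M$. The only cosmetic difference is that you justify $c_i=0$ via the geometric-series identity rather than the paper's remark that one is summing all $L^{\mathrm{th}}$ roots of unity for some $L\mid M$; both are instances of the same elementary fact.
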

\begin{proof}
We can expand out $\widehat{h}$ to get
\begin{align*}
\widehat{h} &= \frac{1}{M}\sum_{m = 0}^{M-1}\zeta^{m}h(\zeta^mx) = \frac{1}{M}\sum_{m = 0}^{M-1}\zeta^{m}\left(
\sum_{i = 0}^nv_i\zeta^{mi}x^i
\right) \\
&= \frac{1}{M}\sum_{m = 0}^{M-1} \sum_{i = 0}^n v_i\zeta^{(i+1)m}x^i = \frac{1}{M}\sum_{i = 0}^n v_i \left(
\sum_{m = 0}^{M-1} \zeta^{(i+1)m}
\right) x^i.
\end{align*}
Now one can see that
\[
\sum_{m = 0}^{M-1}\zeta^{(i+1)m} = \left\{
\begin{array}{cl}
M & \text{if $i\equiv M-1 \Mod{M}$} \\
0 & \text{otherwise}
\end{array}
\right.
\]
as in the latter case this is a sum over all $L^{th}$ roots of unity for some $L|M$, and the sum of all $L^{th}$ roots of unity is zero.  Plugging this expression into $\widehat{h}$ completes the proof.
\end{proof}

\begin{prop}\label{recovery}
    Let $h = v_0 + v_1x + \cdots + v_nx^n\in V[x]$, let $M>0$ be a positive integer and suppose that $\mathbb{F}$ contains a primitive $M^{th}$ root of unity.  Then we can recover all of the $v_i$ such that $i\in\supp_{-1}(h)$ with $M |\supp(\widehat{h})|$ evaluations of $h(x)$ over some finite extension $\mathbb{K}/\mathbb{F}$.
\end{prop}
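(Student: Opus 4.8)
The plan is to reduce the problem of recovering the coefficients $v_i$ for $i \in \supp_{-1}(h)$ to a single generalized-Vandermonde linear system, and then invoke Proposition \ref{evaluation} to guarantee that this system is solvable with the stated number of evaluations. Write $P := |\supp(\widehat{h})|$ and enumerate $\supp_{-1}(h) = \{i_1 < \cdots < i_P\}$; by the previous proposition these are exactly the indices at which $\widehat{h}$ has nonzero coefficients, and $\widehat{h} = \sum_{p=1}^P v_{i_p} x^{i_p}$. The key observation is that although we cannot evaluate $\widehat{h}$ directly, we can evaluate $h$, and each value $h(\zeta^m b)$ for $m \in [0:M-1]$ is a known $\mathbb{F}$-linear combination of the values of the ``$\zeta$-twisted translates'' appearing in the definition of $\widehat h$. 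More precisely, from the definition $\widehat{h}(x) = \frac{1}{M}\sum_{m=0}^{M-1} \zeta^m h(\zeta^m x)$, evaluating at a point $b$ gives $\widehat{h}(b)$ as an explicit $\mathbb{F}$-linear function of the $M$ quantities $h(b), h(\zeta b), \ldots, h(\zeta^{M-1} b)$. Hence each single evaluation point $b$ for $\widehat h$ costs us $M$ evaluations of $h$ (at $b, \zeta b, \ldots, \zeta^{M-1}b$), but in exchange yields one genuine evaluation $\widehat{h}(b)$.

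With that reduction in hand, the steps are: (i) choose $P$ evaluation points $b_1, \ldots, b_P$; for each one, query the $M$ values $h(\zeta^m b_j)$, $m \in [0:M-1]$, for a total of $MP = M|\supp(\widehat h)|$ evaluations of $h$; (ii) from these, compute $\widehat{h}(b_1), \ldots, \widehat{h}(b_P)$ via the linear formula above; (iii) observe that the vector $(\widehat h(b_1), \ldots, \widehat h(b_P))^\top$ equals $GV(b, \mathcal{I})^\top \cdot (v_{i_1}, \ldots, v_{i_P})^\top$ where $\mathcal{I} = \supp_{-1}(h)$ and $b = (b_1, \ldots, b_P)$ — this is the generalized Vandermonde system in the notation of the earlier definition; (iv) apply Proposition \ref{evaluation} with $N$ there equal to $P$, which furnishes a finite extension $\mathbb{K}/\mathbb{F}$ and a choice of $b \in \mathbb{K}^P$ for which $GV(b, \mathcal{I})$ has the MDS property, in particular is invertible as a square $P \times P$ matrix, so the system can be solved for the $v_{i_p}$.

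One bookkeeping point deserves care: the extension $\mathbb{K}/\mathbb{F}$ produced by Proposition \ref{evaluation} must still contain the primitive $M^{th}$ root of unity $\zeta$ used to form $\widehat h$, but this is automatic since $\zeta \in \mathbb{F} \subseteq \mathbb{K}$ by hypothesis, and the twisted points $\zeta^m b_j$ then lie in $\mathbb{K}$ as well; so all $MP$ evaluations of $h$ are taken over $\mathbb{K}$, consistently with the statement. I also note that the points $\zeta^m b_j$ need not be pairwise distinct across different $j$ in general, but this causes no problem because we never interpolate $h$ itself — we only need the $P$ points $b_1,\ldots,b_P$ to be such that $GV(b,\mathcal I)$ is invertible, which is exactly what Proposition \ref{evaluation} delivers.

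The main obstacle, such as it is, is conceptual rather than technical: one must resist the temptation to think of this as ordinary interpolation of $\widehat h$ (which would suggest one needs the $b_j$ distinct and the degree of $\widehat h$ to control the count) and instead recognize that $\widehat h$ is a \emph{sparse} polynomial supported on $\mathcal{I}$, so that $P = |\mathcal I|$ well-chosen evaluations suffice by the MDS property of generalized Vandermonde matrices — this is precisely the content packaged in Proposition \ref{evaluation}, and once that is invoked the proof is essentially a matter of assembling the linear algebra described above. A minor subtlety to state cleanly is that the count $MP$ refers to evaluations of $h$, each of which is a legitimate single oracle call, even though $M$ of them are consumed to synthesize one value of $\widehat h$.
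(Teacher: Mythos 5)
Your proposal is correct and follows essentially the same route as the paper's proof: compute each $\widehat h(b_j)$ from the $M$ evaluations $h(\zeta^m b_j)$, assemble the $P\times P$ generalized Vandermonde system $GV(b,\supp(\widehat h))$, and invoke Proposition~\ref{evaluation} to obtain a finite extension and evaluation vector making it invertible. The only differences are expository (you spell out the observation that $\zeta\in\mathbb{F}\subseteq\mathbb{K}$ and the sparsity viewpoint), but the argument is the same.
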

\begin{proof}
We will recover the desired coefficients of $h$ by interpolating the polynomial $\widehat{h}$.  For any evaluation point $a$ we can compute $\widehat{h}(a)$ by computing the $M$ evaluations $\zeta^mh(\zeta^ma)$ for $m = 0,\ldots,M-1$, summing the result over $m$, and dividing by $M$ to obtain $\widehat{h}(a)$.

Let $\supp(\widehat{h}) = \{i_1,\ldots,i_P\}$, and let $a = (a_1,\ldots,a_P)\in \mathbb{K}^P$, where $\mathbb{K}$ is some finite extension of $\mathbb{F}$.  The matrix $GV(a,\supp(\widehat{h}))$ is the $P\times P$ coefficient matrix of the linear system used to solve for the coefficients of $\widehat{h}$, using the evaluation points $a_1,\ldots,a_P$.  By {Proposition} \ref{evaluation} there exists an extension $\mathbb{K}/\mathbb{F}$ and {an evaluation vector} $a\in \mathbb{K}$ such that $\det(GV(a,\supp(\widehat{h})))\neq 0$.  The corresponding linear system can therefore be solved for the desired coefficients $v_i$ where $i\in \supp(\widehat{h})$.
\end{proof}

\begin{prop}\label{field_size}
    With the notation of the statement and proof of Proposition \ref{recovery}, suppose that $M>1$ and $P = |\supp(\widehat{h})|>1$.   Let $a\in \mathbb{K}^P$ and assume that $det(GV(a,\supp(\widehat{h})))\neq 0$.  Then:
    \begin{enumerate}
        \item $a_p\neq 0$ for all $p$, and $a_p^M\neq a_q^M$ for all distinct $p,q$
        \item $|\mathbb{K}|\geq MP + 1$

    \end{enumerate}
\end{prop}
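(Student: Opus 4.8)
The plan is to exploit the recovery interpretation from Proposition \ref{recovery}: the evaluations $h(\zeta^m a_p)$ for $m = 0,\ldots,M-1$ and $p = 1,\ldots,P$ are exactly the $MP$ data points used to solve for the $P$ coefficients $v_{i}$, $i\in\supp(\widehat h)$, and the congruence $i\equiv M-1\Mod M$ makes each original monomial $x^i$ take the \emph{same} value on the whole orbit $\{\zeta^m a_p\}_m$. So I would first argue that if two of the $MP$ evaluation points $\zeta^m a_p$ and $\zeta^{m'}a_q$ coincide, then the corresponding two rows of the full $MP\times P$ evaluation matrix coincide, forcing some $P\times P$ minor to be singular — but that minor is exactly (a row-permutation of, and up to the harmless scalars $\zeta^m$) $GV(a,\supp(\widehat h))$, contradicting the hypothesis $\det(GV(a,\supp(\widehat h)))\neq 0$. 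Hence all $MP$ points $\zeta^m a_p$ must be pairwise distinct. This immediately gives part (1): no $a_p$ can be $0$ (else its whole orbit collapses to a single point, and since $M>1$ that orbit has fewer than $M$ distinct points — or more directly, $0 = \zeta\cdot 0$ would repeat), and $a_p^M = a_q^M$ for $p\neq q$ would mean the orbits of $a_p$ and $a_q$ coincide as sets, again producing repeated evaluation points.

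For part (2), observe that $a_1^M,\ldots,a_P^M$ are $P$ distinct nonzero elements of $\mathbb{K}$ by part (1), and each $a_p^M$ has exactly $M$ distinct $M$-th roots in $\mathbb{K}$, namely $\zeta^m a_p$ (distinct because $a_p\neq 0$ and $\zeta$ has order $M$), all of which lie in $\mathbb{K}$. These $MP$ roots are pairwise distinct across different $p$ by part (1), so $\mathbb{K}^\times$ contains at least $MP$ elements, giving $|\mathbb{K}|\geq MP+1$.

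The main obstacle — really the only subtlety — is pinning down precisely why a repetition among the points $\zeta^m a_p$ forces a singular minor: one must check that selecting the $P$ rows indexed by $(m=0, p=1),\ldots,(m=0,p=P)$ recovers $GV(a,\supp(\widehat h))$ up to the diagonal scaling by $\zeta^0 = 1$, and more generally that any $P$ rows coming from $P$ \emph{distinct} values of $p$ (one per column) form a generalized Vandermonde matrix in the points $\zeta^{m_1}a_1,\ldots$, whose determinant is a nonzero multiple of a product of these points and pairwise differences of their $\dots$ — here I would instead just note that a repeated row makes the $MP\times P$ matrix have rank $<P$ on that pair of identical rows only if they lie in the same minor; the cleaner route is: distinctness of the $P$ points $a_1,\ldots,a_P$ is already forced by $\det(GV(a,\supp(\widehat h)))\neq 0$ together with $i_1=0$ not necessarily holding, so I will argue via the orbit structure directly, showing that if $a_p^M = a_q^M$ then columns $p$ and $q$ of $GV(a,\supp(\widehat h))$ are equal (since each row entry is $a_p^{i_k}$ with $i_k = M q_k + (M-1)$, hence $a_p^{i_k} = (a_p^M)^{q_k} a_p^{M-1}$, and $a_p^M = a_q^M$ plus $a_p^{M-1} = a_p^M/a_p = a_q^M/a_p$ — which does \emph{not} immediately equal $a_q^{M-1}$ unless $a_p = a_q$). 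This shows the congruence alone is not quite enough and one genuinely needs the $\zeta$-orbit argument through Proposition \ref{recovery}; getting that bookkeeping exactly right is where I would spend the effort, and I expect it to go through cleanly once the $MP\times P$ evaluation matrix is written out explicitly.
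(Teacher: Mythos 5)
Your proposal for part (2) is correct and essentially matches the paper's counting argument: you count the $MP$ distinct orbit points $\zeta^m a_p$ directly, while the paper counts the image of the $M$-th power map on $\mathbb{K}^\times$; these are two phrasings of the same pigeonhole observation, and either is fine once part (1) is in hand.

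Part (1), however, has a genuine gap that you yourself flag at the end. You correctly observe that writing $a_p^{i_k} = (a_p^M)^{q_k}\,a_p^{M-1}$ (where $i_k = Mq_k + (M-1)$) does \emph{not} make columns $p$ and $q$ of $GV(a,\supp(\widehat h))$ \emph{equal} when $a_p^M = a_q^M$ but $a_p\neq a_q$, because of the mismatched factors $a_p^{M-1}$ versus $a_q^{M-1}$. What you miss is that this is already enough: the columns are \emph{proportional} (column $p$ equals $a_p^{M-1}$ times the vector $\bigl((a_p^M)^{q_k}\bigr)_k$, and likewise for $q$, with the same inner vector when $a_p^M = a_q^M$), so the determinant vanishes without any orbit bookkeeping. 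Equivalently — and this is the paper's route — factoring $a_p^{M-1}$ out of every column gives $\det\bigl(GV(a,\supp(\widehat h))\bigr) = \prod_{p}a_p^{M-1}\cdot \det\bigl(GV((a_1^M,\ldots,a_P^M),\supp(k))\bigr)$, where $\widehat h(x) = x^{M-1}k(x^M)$; from this factorization both claims of part (1) fall out at once ($a_p\neq 0$ because $M>1$ kills the prefactor otherwise, $a_p^M\neq a_q^M$ because $P>1$ and a repeated evaluation point kills the second factor).

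Your fallback plan of reasoning through the full $MP\times P$ evaluation matrix does not rescue this: a coincidence $\zeta^{m_1}a_{p_1} = \zeta^{m_2}a_{p_2}$ produces two equal rows in that big matrix, but that only shows that \emph{some} $P\times P$ minors containing both of those rows are singular, and $GV(a,\supp(\widehat h))$ is the minor built from the rows with $m=0$ — which need not be among them. Similarly, your argument that $a_p\neq 0$ via "orbit collapse" talks about the evaluation points of $h$ rather than about the matrix in question; the direct observation is simply that if $a_p = 0$ and $M>1$, then every exponent $i_k\geq M-1\geq 1$, so column $p$ of $GV(a,\supp(\widehat h))$ is identically zero. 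In short, you had the right algebraic identity in hand; the missing step was to use it to factor columns rather than to try to prove column equality, and once that is done the orbit machinery is unnecessary.
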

\begin{proof}

    First we observe that we can write $\widehat{h}(x) = x^{M-1}k(x^M)$ where  $|\supp(k)| = |\supp(\widehat{h})| > 1$.  The respective generalized Vandermonde determinants are related by 
    \[
    \det(GV(a,\supp(\widehat{h}))) = \prod_{p=1}^P a_p^{M-1} \cdot \det(GV((a_1^M,\ldots,a_P^M),\supp(k))).
    \]
    This immediately implies part 1) of the proposition.
    
    Now assume that $|\mathbb{K}| \leq MP$.  The map $x\mapsto x^M$ is a homomorphism from $\mathbb{K}^\times$ to itself with kernel of size $M$, and therefore image of size $(|\mathbb{K}|-1)/M < P$.  Therefore within any subset of $\mathbb{K}^\times$ of size $P$ there exist distinct $a_p, a_q$ such that $a_p^M = a_q^M$.  This contradicts the above conclusion, which completes the proof.
\end{proof}

\begin{example}\label{ppi_example}
    Let $h = v_0 + v_1x + \cdots + v_6x^6\in V[x]$ where $V$ is a finite-dimensional vector space over $\mathbb{F}_{7}$.  Let us set $M = 3$, so that $\zeta = 2$ is a primitive $3^{rd}$ root of unity.  Then $\widehat{h} = v_2x^2 + v_5x^5$, so $\supp(\widehat{h}) = \{2,5\}$ and we can recover $v_2$ and $v_5$ with $M|\supp(\widehat{h})| = 6$ evaluations, whereas interpolating the entire polynomial $h$ would take $6+1 = 7$ evaluations.  Any evaluation vector $a = (a_1,a_2)$ over a finite extension of $\mathbb{F}_7$ will suffice, provided that the matrix
    \[
    GV(a,\supp(\widehat{h})) = \begin{bmatrix}
        a_1^2 & a_2^2 \\
        a_1^5 & a_2^5
    \end{bmatrix}
    \]
    is invertible.  The determinant of this matrix is given by $a_1^2a_2^2(a_2^3 - a_1^3)$, and hence we only require that $a_1\neq 0$ and $a_2\neq 0$, and that $a_2/a_1$ is not a third root of unity.  Setting $a_1 = 1$ and $a_2 = 3$ suffices, so valid evaluation vectors exist over the base field itself and in this example the lower bound of Proposition \ref{field_size} is sharp.

    However, if we take $h$ such that $\supp(h) = \{0,1,2,4,7,8,10\}$, then $\supp(\widehat{h}) = \{2,8\}$ and the relevant determinant is $a_1^2a_2^2(a_2^6 - a_1^6)$.  This polynomial is identically zero on all pairs $(a_1,a_2)\in\mathbb{F}_7^2$, and thus we must pass to a finite extension (the lower bound of Proposition \ref{field_size} is therefore necessary but not sufficient).  We must choose $a_1,a_2\in\mathbb{F}_{7^r}^\times$ such that  $(a_2/a_1)^6\neq 1$.  The kernel of the map $x\mapsto x^{6}$ on $\mathbb{F}_{7^r}^\times$ is exactly $\mathbb{F}_{7}^\times$, so choosing $a_1 = 1$ and $a_2\in \mathbb{F}_{7^2}^\times \setminus \mathbb{F}_{7}^\times$ suffices.  Thus we can recover $v_2$ and $v_{8}$ with $M|\supp(\widehat{h})| = 6$ evaluations, three of which must be done over the quadratic extension $\mathbb{F}_{7^2}$.  
\end{example}

\section{Modular Polynomial Codes} \label{mpc}

\subsection{A General Polynomial Code Scheme}

For the rest of the paper we let $\mathbb{F}$ denote a finite field containing a {primitive} $M^{th}$ root of unity $\zeta$.  A user wishes to multiply two block matrices $A$ and $B$ over $\mathbb{F}$, defined by
\begin{equation}\label{AB}
A 
= \begin{bmatrix}
    A_0 \\ \vdots \\ A_{K-1}
\end{bmatrix}
=
\begin{bmatrix}
A_{k,m}
\end{bmatrix}_{\substack{0\leq k\leq K-1\\0\leq m\leq M-1 }},\quad
B 
= \begin{bmatrix}
    B_0 & \cdots & B_{L-1}
\end{bmatrix}
=
\begin{bmatrix}
B_{m,\ell}
\end{bmatrix}_{\substack{0\leq m\leq M-1\\ 0\leq \ell \leq L-1}}
\end{equation}
The matrices $A$ and $B$ are selected independently and uniformly at random from the respective spaces of all such block matrices.  We suppose that the individual matrices $A_{k,m}$ are all of the same size, and similarly for the $B_{m,\ell}$.  We also assume the products $A_{k,m}B_{m',\ell}$ are all well-defined and therefore of the same size.  We define
\begin{equation}\label{figi}
f_I := \sum_{k = 0}^{K-1}\sum_{m = 0}^{M-1} A_{k,m}x^{m+kM}
\quad\text{and}\quad
g_I:= \sum_{\ell = 0}^{L-1}\sum_{m = 0}^{M-1}B_{m,\ell} x^{M-1-m + \ell KM}
\end{equation}

Let $T$ be a positive integer.  We wish to augment our polynomials $f_I$ and $g_I$ with randomness to protect against worker $T$-collusion.  To do this, first note that the highest degree monomial of $f_Ig_I$ whose coefficient is one of the desired submatrix multiplications $A_kB_\ell$ is $x^{KML-1}$.  Our random polynomials will therefore take the form
\begin{equation}\label{frgr}
f_R := \sum_{t = 0}^{T-1}R_tx^{KML+\alpha_t}\quad \text{and}\quad g_R := \sum_{t = 0}^{T-1}S_tx^{KML+\beta_t}
\end{equation}
for some integers $0\leq\alpha_0<\cdots<\alpha_{T-1}$ and $0\leq\beta_0<\cdots<\beta_{T-1}$ and some i.i.d.\ random matrices $R_t$ and $S_t$ whose precise nature will be made clear in the following subsection.  We then define our polynomials $f$ and $g$ to be
\begin{equation}\label{f_and_g}
f := f_I + f_R,\quad 
g := g_I + g_R,\quad\text{and}\quad h := fg.
\end{equation}
In what follows, we use the notation $\alpha = (\alpha_0,\ldots,\alpha_{T-1})$ and $\beta = (\beta_0,\ldots,\beta_{T-1})$.  The following result relates the product of $A$ and $B$ as matrices with the product of $f$ and $g$ as polynomials.

\begin{thm}\label{decodability}
Let $A$ and $B$ be as in \eqref{AB}.  Then for any $\alpha,\beta$, the coefficient of $x^{M-1+kM+\ell KM}$ of the polynomial $h = fg$ is the $(k,\ell)^{th}$ block entry $A_kB_\ell$ of the product $AB$.
\end{thm}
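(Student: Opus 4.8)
The plan is to expand $h = fg$ via the distributive law as $h = f_Ig_I + f_Ig_R + f_Rg_I + f_Rg_R$, and to argue that, for the exponent $d := M-1+kM+\ell KM$ with $0\le k\le K-1$ and $0\le \ell\le L-1$, only the ``information $\times$ information'' term $f_Ig_I$ contributes, and that its contribution is exactly $A_kB_\ell$.

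First I would compute the relevant coefficient of $f_Ig_I$. Multiplying out the definitions in \eqref{figi},
\[
f_Ig_I = \sum_{k'=0}^{K-1}\sum_{\ell'=0}^{L-1}\sum_{m=0}^{M-1}\sum_{m'=0}^{M-1} A_{k',m}B_{m',\ell'}\, x^{(m-m')+(M-1)+k'M+\ell' KM}.
\]
A monomial here has exponent $d$ precisely when $(m-m')+k'M+\ell' KM = kM+\ell KM$. Reducing this modulo $M$ gives $m\equiv m' \Mod{M}$, and since $0\le m,m'\le M-1$ this forces $m=m'$; the equation then becomes $k'+\ell' K = k+\ell K$, and reducing modulo $K$ with $0\le k',k\le K-1$ forces $k'=k$ and hence $\ell'=\ell$. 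Therefore the coefficient of $x^d$ in $f_Ig_I$ is $\sum_{m=0}^{M-1} A_{k,m}B_{m,\ell}$, which by the block decomposition \eqref{AB} is exactly the $(k,\ell)^{th}$ block $A_kB_\ell$ of $AB$.

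Next I would dispose of the three remaining products by a degree count. By \eqref{frgr}, every monomial of $f_R$ and of $g_R$ has degree at least $KML$, while every monomial of $f_I$ and of $g_I$ has nonnegative degree; hence every monomial of $f_Ig_R$, $f_Rg_I$, and $f_Rg_R$ has degree at least $KML$. On the other hand, $d = M-1+kM+\ell KM \le M-1+(K-1)M+(L-1)KM = KML-1 < KML$ for all admissible $k,\ell$, so these three products contribute nothing to the coefficient of $x^d$. Combining with the previous paragraph, the coefficient of $x^d$ in $h$ is $A_kB_\ell$, independently of $\alpha$ and $\beta$.

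The argument is mostly bookkeeping; the one point requiring a little care is the separation inside $f_Ig_I$ — checking that the mixed-radix exponents $m+kM$ in $f_I$ and the ``reversed'' exponents $M-1-m+\ell KM$ in $g_I$ make the map $(k',\ell',m,m')\mapsto (m-m')+(M-1)+k'M+\ell'KM$ hit the target exponent $d$ at a unique index tuple. This is the main (mild) obstacle; once it is established, the degree count eliminating the randomness terms is immediate.
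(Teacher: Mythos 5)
Your proof is correct and takes essentially the same route as the paper: decompose $h = f_Ig_I + f_Ig_R + f_Rg_I + f_Rg_R$, discard the latter three products by noting every one of their monomials has degree at least $KML > M-1+kM+\ell KM$, and read off the coefficient of $x^{M-1+kM+\ell KM}$ in $f_Ig_I$ as $A_kB_\ell$. Your reduction of the exponent equation first modulo $M$ and then modulo $K$ to force $(k',\ell',m,m')=(k,\ell,m,m)$ is, if anything, a bit more careful than the paper's, which only argues that $M-1-m'+m=M-1$ iff $m=m'$ and leaves the uniqueness of $(k',\ell')$ implicit.
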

\begin{proof}
Computing the product of $f$ and $g$ yields
\begin{align*}
h = fg &=  f_Ig_I + f_Ig_R + f_Rg_I + f_Rg_R \\
&= \sum_{k = 0}^{K -1} \sum_{\ell = 0}^{L-1} \sum_{m = 0}^{M-1}\sum_{m' = 0}^{M-1} A_{k,m}B_{m',\ell}x^{M-1-m'+m+kM+\ell KM} + (\text{terms of degree $\geq KML$}).
\end{align*}
As $m$ and $m'$ range between $0$ and $M-1$, the quantity $M-1-m'+m$ ranges between $0$ and $2M-2$.  For $m$ and $m'$ in this range, it is clear that $M-1-m'+m = M-1$ if and only if $m = m'$.  In this case, for a fixed $k$ and $\ell$ the coefficient is then $\sum_{m = 0}^{M-1}A_{k,m}B_{m,\ell} = A_kB_\ell$ as desired.
\end{proof}

Crucially, the index $i$ of every coefficient of $h$ which is a block entry of the product $AB$ satisfies $i\equiv M-1\Mod{M}$.  Thus such coefficients are in $\supp(\widehat{h})$ and the user loses no information about these block entries of $AB$ by passing to the mod-$M$ transform $\widehat{h} = \widehat{fg}$.  This observation and Theorem \ref{decodability} motivate the following definition, which is based on the notion of a polynomial code as defined in \cite{d2020gasp}.

\begin{definition}\label{mpc_defn}
    A \emph{Modular Polynomial (MP) code} $\mathcal{M} = \mathcal{M}(K,M,L,T,P,\alpha,\beta,\mathbb{K}/\mathbb{F},\zeta,a)$ consists of the following data:
    \begin{itemize}
        \item positive integers $K$, $M$, $L$, $T$, and $P$
        \item vectors $\alpha = (\alpha_0,\ldots,\alpha_{T-1})\in\mathbb{Z}_{\geq0}^T$ and $\beta = (\beta_0,\ldots,\beta_{T-1})\in\mathbb{Z}_{\geq0}^T$, satisfying
        \[\alpha_0<\cdots <\alpha_{T-1}\quad \text{and}\quad \beta_0<\cdots<\beta_{T-1}
        \]
        \item a primitive $M^{th}$ root of unity $\zeta\in\mathbb{F}$
        \item a finite extension $\mathbb{K}/\mathbb{F}$ of finite fields
        \item an evaluation vector $a = (a_1,\ldots,a_P)\in\mathbb{K}^P$.
    \end{itemize}
    For any MP code $\mathcal{M}$, we have the associated \emph{recovery threshold} $N:=MP$.
\end{definition}

 A user wishes to multiply, securely and in a distributed manner, two matrices $A$ and $B$ as in \eqref{AB}, which are $K\times M$ and $M\times L$ block matrices defined over some finite field $\mathbb{F}$.  To do this using a modular polynomial code, they proceed according to the following steps:
\begin{enumerate}
    \item Construct polynomials $f$ and $g$ as in \eqref{f_and_g}, where the random matrices $R_t$ and $S_t$ are all chosen uniformly and independently from the set of matrices over $\mathbb{K}$ of the appropriate size.  
    \item Define $h:=fg$ and $P := |\supp(\widehat{h})|$ (hence $P$ cannot be chosen independently).
    \item Compute the $N=MP$ pairs of evaluations $f(\zeta^ma_p)$ and $g(\zeta^ma_p)$, {so that each $n$ for $1\leq n\leq N$ corresponds to a pair $(m,p)$ with $0\leq m\leq M-1$ and $1\leq p\leq P$.}
    \item Send the $n^{th}$ pair $f(\zeta^ma_p), g(\zeta^ma_p)$ to the $n^{th}$ worker node for $1\leq n\leq N$, who responds with \[
    h(\zeta^m a_p) = f(\zeta^ma_p)\cdot g(\zeta^ma_p)
    \]
    \item For each $p$, compute the evaluation
    \[
    \widehat{h}(a_p) = \frac{1}{M}\sum_{m = 0}^{M-1} \zeta^mh(\zeta^m a_p)
    \]
    \item Interpolate $\widehat{h}$ using the $P$ evaluations $\widehat{h}(a_p)$ to recover all block entries of $AB$.
\end{enumerate}

For now, our system model assumes that all worker nodes do indeed respond with the value of $h(\zeta^ma_p)$, that is, there are no stragglers.  We lift this assumption in Section \ref{robustness_section}.

At this point it is worth discussing the similarities and differences between the construction of MP codes and the codes appearing in \cite{root_of_unity}, which are a generalization of those of \cite{inner_product} from the inner product partition to the grid partition.  The fact that all $M^{th}$ roots of unity sum to zero is also used in \cite[Section 3]{root_of_unity}, in some sense to also reduce the total number of worker nodes.  However, the construction of \cite{root_of_unity} requires the evaluation points themselves to all be $N^{th}$ roots of unity, where $N$ is their recovery threshold.  Thus the nature of the evaluation points is different in both schemes, though those of \cite{root_of_unity} are more explicit.  Lastly, the encoding polynomials of \cite{root_of_unity} are Laurent polynomials whose nature appears fundamentally different than the $f$ and $g$ of \eqref{f_and_g}.  We revisit this comparison with experimental data in Section \ref{comparison}.

\begin{remark}
    In the present work, {we often emphasize the need to pass to a finite extension $\mathbb{K}/\mathbb{F}$ to find a valid evaluation vector $a$.  This is essentially equivalent to assuming the base field $\mathbb{F}$ is sufficiently large, as is done in \cite{oliver, d2020gasp, d2021degree} and other work on SDMM where the user is tasked with interpolating a polynomial with `gaps' in the degrees, or indeed any polynomial of sufficiently large degree.  However, the current approach allows us to not put any restrictions on the field $\mathbb{F}$ of definition of $A$ and $B$ (other than the existence of a primitive $M^{th}$ root of unity), and therefore emphasizes polynomial code constructions for SDMM do in fact provide schemes when $A$ and $B$ are defined over relatively small fields as well.}  We note that finding the smallest field over which various polynomial codes exist is an open and interesting question.
\end{remark}

\subsection{Decodability and $T$-Security of Modular Polynomial Codes}

\begin{definition}\label{decodability_t_security_defn}
    Suppose we are given a modular polynomial code $\mathcal{M}$ with parameters as in Definition \ref{mpc_defn} and a pair of matrices $A$ and $B$ as in \eqref{AB}.
    \begin{enumerate}
        \item We say that $\mathcal{M}$  is \emph{decodable} if
        \[{
        H(AB\ |\ \{(\widehat{h}(a_1),\ldots,\widehat{h}(a_{P}))\}) = 0.}
        \]
        \item We say that $\mathcal{M}$ is \emph{$T$-secure} if
        \[
        I\left(A,B\ ;\{(f(\zeta^{m_t}a_{p_t}),g(\zeta^{m_t}a_{p_t}))\ |\ t = 0,\ldots,T-1\}\right) = 0
        \]
        for all size-$T$ subsets $\{\zeta^{m_0}a_{p_0},\ldots,\zeta^{m_{T-1}}a_{p_{T-1}}\}$ of the set of evaluation points.
    \end{enumerate}
\end{definition}

It will be convenient for us to now define $T\times N$ matrices $\Sigma_A$ and $\Sigma_B$ over $\mathbb{K}$, which depend on $\alpha$, $\beta$, the primitive $M^{th}$ root $\zeta$, and the evaluation vector $a=(a_1,\ldots,a_P)\in\mathbb{K}^P$.  We set
\begin{align}
    \Sigma_A &:= \begin{bmatrix}
        \Sigma_{A,p}
    \end{bmatrix}_{1\leq p\leq P}\in \mathbb{K}^{T\times N},
    \quad\text{where}\quad
    \Sigma_{A,p} := \begin{bmatrix}
        (\zeta^m a_p)^{\alpha_t}
    \end{bmatrix}_{\substack{ 0\leq t\leq T-1 \\ 0\leq m\leq M-1 }} \in \mathbb{K}^{T\times M} \label{sigma_a} \\
    \Sigma_B &:= \begin{bmatrix}
        \Sigma_{B,p}
    \end{bmatrix}_{1\leq p\leq P}\in \mathbb{K}^{T\times N},
    \quad\text{where}\quad
    \Sigma_{B,p} := \begin{bmatrix}
        (\zeta^m a_p)^{\beta_t}
    \end{bmatrix}_{\substack{ 0\leq t\leq T-1 \\ 0\leq m\leq M-1 }}  \in \mathbb{K}^{T\times M} \label{sigma_b}
\end{align}
The matrices $\Sigma_A$ and $\Sigma_B$ encode the security properties of the MP code, as the following theorem illustrates.

\begin{thm}\label{mpc_decodability_security}
    Let $\mathcal{M} = \mathcal{M}(K,M,L,T,P,\alpha,\beta,\mathbb{K}/\mathbb{F},\zeta,a)$ be a modular polynomial code.  Then:
    \begin{enumerate}
        \item If the matrix $GV(a,\supp(\widehat{h}))$ is invertible, then $\mathcal{M}$ is decodable.
        \item If $a_p\neq 0$ for all $p$ and the matrices $\Sigma_A$ and $\Sigma_B$ have the MDS property, then $\mathcal{M}$ is $T$-secure.
    \end{enumerate}
\end{thm}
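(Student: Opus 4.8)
The plan is to prove the two parts independently. Part (1) is essentially a repackaging of Proposition \ref{recovery} together with Theorem \ref{decodability}: invertibility of $GV(a,\supp(\widehat h))$ is exactly the condition under which the user can interpolate $\widehat h$, and every block of $AB$ is among the coefficients recovered this way. Part (2) is the standard additive one-time-pad argument, in which the pad is the tuple of values of the random polynomials $f_R,g_R$ at the colluders' evaluation points, and the MDS hypotheses on $\Sigma_A,\Sigma_B$ guarantee that this pad is uniformly distributed.

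For part (1), I would first recall from Theorem \ref{decodability} and the remark following it that for every $k,\ell$ the block $A_kB_\ell$ is the coefficient of $x^{M-1+kM+\ell KM}$ in $h=fg$, and that $M-1+kM+\ell KM\equiv M-1\Mod{M}$; hence whenever $A_kB_\ell\neq 0$ this degree lies in $\supp_{-1}(h)=\supp(\widehat h)$, and in all cases the coefficient of $\widehat h$ at that degree equals $A_kB_\ell$. Writing $\supp(\widehat h)=\{i_1,\dots,i_P\}$, the values $\widehat h(a_1),\dots,\widehat h(a_P)$ satisfy the linear system with coefficient matrix $GV(a,\supp(\widehat h))$ in the unknown coefficients $\{v_i : i\in\supp(\widehat h)\}$; invertibility of that matrix recovers all of these coefficients uniquely, exactly as in the proof of Proposition \ref{recovery}, and in particular recovers every $A_kB_\ell$. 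Hence $AB$ is a deterministic function of $(\widehat h(a_1),\dots,\widehat h(a_P))$, which gives $H\big(AB\mid\{(\widehat h(a_1),\dots,\widehat h(a_P))\}\big)=0$.

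For part (2), fix a size-$T$ subset of evaluation points and label its elements $\zeta^{m_t}a_{p_t}$, $t=0,\dots,T-1$; since these points are distinct, the pairs $(m_t,p_t)$ are distinct and hence index $T$ distinct columns of $\Sigma_A$ and of $\Sigma_B$. From \eqref{frgr}, for each $t$,
\[
f(\zeta^{m_t}a_{p_t}) \;=\; f_I(\zeta^{m_t}a_{p_t}) \;+\; (\zeta^{m_t}a_{p_t})^{KML}\sum_{s=0}^{T-1} R_s\,(\zeta^{m_t}a_{p_t})^{\alpha_s},
\]
so, working entrywise on the matrix coefficients, the tuple $\big(f(\zeta^{m_t}a_{p_t})\big)_t$ equals $\big(f_I(\zeta^{m_t}a_{p_t})\big)_t$ plus $D\,W_A\,(R_0,\dots,R_{T-1})^{\top}$, where $D=\mathrm{diag}\big((\zeta^{m_t}a_{p_t})^{KML}\big)_t$ and $W_A=\big[(\zeta^{m_t}a_{p_t})^{\alpha_s}\big]_{t,s}$ is, up to transposition, the $T\times T$ submatrix of $\Sigma_A$ on the columns indexed by the $(m_t,p_t)$. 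The hypotheses make $W_A$ invertible (MDS property of $\Sigma_A$) and $D$ invertible (since $a_{p_t}\neq 0$ and $\zeta\neq 0$), hence $D\,W_A$ is invertible; the same holds for $g$ with a submatrix $W_B$ of $\Sigma_B$. Therefore the linear map sending the i.i.d.\ uniform matrices $(R_0,\dots,R_{T-1},S_0,\dots,S_{T-1})$ to $\big(f_R(\zeta^{m_t}a_{p_t}),\,g_R(\zeta^{m_t}a_{p_t})\big)_t$ is a bijection, so this tuple is uniformly distributed on the appropriate space of matrix tuples and, being a function of randomness independent of $(A,B)$, is independent of $(A,B)$. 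Since $\big(f_I(\zeta^{m_t}a_{p_t}),g_I(\zeta^{m_t}a_{p_t})\big)_t$ is a function of $(A,B)$ alone, for every fixed value of $(A,B)$ the conditional law of the colluders' observations $\big(f(\zeta^{m_t}a_{p_t}),g(\zeta^{m_t}a_{p_t})\big)_t$ is a deterministic shift of this uniform tuple, hence uniform and independent of the conditioning; equivalently the mutual information of Definition \ref{decodability_t_security_defn} vanishes and $\mathcal{M}$ is $T$-secure.

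The step I expect to be the main obstacle is the bookkeeping in part (2): one must check that the matrix arising when $f_R$ (respectively $g_R$) is evaluated at the chosen $T$ points is, after pulling out the common monomial $x^{KML}$, precisely a $T\times T$ submatrix of $\Sigma_A$ (respectively $\Sigma_B$) so that the MDS hypothesis applies, and that the diagonal prefactor $D$ is invertible, which is exactly what the assumption $a_p\neq 0$ supplies (together with $\zeta\neq 0$, automatic for a primitive $M^{th}$ root of unity). Given these, part (1) and the concluding information-theoretic steps of part (2) are routine.
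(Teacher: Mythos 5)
Your proof is correct and follows essentially the same route the paper indicates: part (1) is the interpolation argument of Proposition \ref{recovery} specialized to $\widehat h$, and part (2) is the standard one-time-pad argument in which the MDS property of $\Sigma_A,\Sigma_B$ (together with $a_p\neq 0$) makes the linear map from the random matrices to the colluders' noise evaluations a bijection. The paper itself only sketches this and defers to \cite[Theorem 1 and Appendix A]{d2020gasp}; you have simply supplied the bookkeeping (the diagonal factor $D$ and the transposed $T\times T$ minor $W_A$) that the citation leaves implicit.
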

\begin{proof}
    The proof is essentially identical to the proof of decodability and $T$-security for many polynomial codes in the literature. {Intuitively,
    \begin{enumerate}
        \item If the matrix $GV(a,\supp(\widehat{h}))$ is invertible then the polynomial $\widehat{h}(x)$ can be interpolated from the evaluations $\widehat{h}(a_p)$ for $p=1,\ldots, P$. Therefore, the user can recover all of the coefficients of $\widehat{h}(x)$ and therefore all block entries $A_kB_\ell$ of the product $AB$.
        \item If $\Sigma_A$ has the MDS property, then any $T$-tuple of matrices $f(\zeta^{m_0}a_{p_0}),\ldots,f(\zeta^{m_{T-1}}a_{p_{T-1}})$ is uniform random on the set of all $T$-tuples of matrices of the appropriate size, and is independent of $A$.  Therefore, any set of $T$ workers observe only noise and can not decode any information about $A$. The same argument works for $B$ and it follows that $\mathcal{M}$ is $T$-secure.
    \end{enumerate}
    For more details} see, for example, the analogous proof for GASP codes in \cite[Theorem 1 and Appendix A]{d2020gasp}.
\end{proof}

If $M = 1$ then we are in the situation of most previous works on SDMM, and for any $\alpha$, $\beta$, showing the existence of an evaluation vector $a$ in some finite extension that guarantees decodability and $T$-security is largely a formality.  One simply has to observe that the two conditions in the previous theorem impose finitely many polynomial conditions on such an evaluation vector, {and appeal to Lemma \ref{sz_non_vanishing}}.

However, as soon as $M>1$ and $T>1$, there exist $\alpha$, $\beta$ for which no valid evaluation vector exists, because the property of $\Sigma_A$ and $\Sigma_B$ being MDS forces non-trivial conditions on $\alpha$, $\beta$ independent of the evaluation vector $a$.  We demonstrate this phenomenon with the following example.

\begin{example}\label{careful}
    Suppose that $T = 2$ and $M>1$.   The condition of $T$-security demands that the $2\times N$ matrix $\Sigma_A$ have the MDS property.  In particular, every $2\times M$ submatrix $\Sigma_{A,p}$ must also have the MDS property.  The determinant of any $2\times 2$ submatrix of $\Sigma_{A,p}$ is of the form
    \[
    a_p^{\alpha_0+\alpha_1}
    \begin{vmatrix}
    \zeta^{m_0\alpha_0} & \zeta^{m_1\alpha_0} \\
    \zeta^{m_0\alpha_1} & \zeta^{m_1\alpha_1}
    \end{vmatrix}
    =
    a_p^{\alpha_0+\alpha_1}(\zeta^{m_0\alpha_0 + m_1\alpha_1} - \zeta^{m_0\alpha_1 + m_1\alpha_0})
    \]
    where $0\leq m_0\neq m_1\leq M-1$.  This quantity is non-zero if and only if $a_p\neq 0$ and
    \[
    (m_0-m_1)(\alpha_0-\alpha_1)\not\equiv 0 \Mod{M}.
    \]
    Now as $m_0-m_1$ takes on every non-zero value mod $M$ as $m_0$ and $m_1$ vary, we conclude that $\Sigma_{A,p}$ has the MDS property if and only if $\alpha_0-\alpha_1$ {is relatively prime to $M$}.  Clearly the same property must hold for $\beta_0$ and $\beta_1$.
\end{example}

\begin{definition}
    An increasing sequence of integers $\alpha_0,\ldots,\alpha_{T-1}$ is in \emph{arithmetic progression} if $\alpha_t = d_0 + tD$ for all $t$, for some $d_0\in\mathbb{Z}$ and some $D\in\mathbb{Z}_{>0}$.  We refer to $D$ as the \emph{common difference} of the progression.
\end{definition}

\begin{thm}\label{polynomials}
    Suppose that $\alpha$ and $\beta$ are each in arithmetic progression with common differences $D$ and $E$, respectively.  Consider the elements $a_0,\ldots,a_{P-1}$ as variables over $\mathbb{F}$.  If $\gcd(D,M) = \gcd(E,M) = 1$, then the determinant of every $T\times T$ submatrix of $\Sigma_A$ and $\Sigma_B$ is non-zero as an element of the polynomial ring $\mathbb{F}[a_0,\ldots,a_{P-1}]$.
\end{thm}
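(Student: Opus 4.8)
The plan is to reduce at once to a single $T\times T$ submatrix and exploit the arithmetic-progression structure to express its determinant as a product of a monomial and a Vandermonde determinant. Fix $T$ distinct columns of $\Sigma_A$; each is indexed by a pair $(m_j,p_j)$ with $0\le m_j\le M-1$ and $1\le p_j\le P$, for $j=0,\ldots,T-1$. Writing $b_j:=\zeta^{m_j}a_{p_j}$, regarded as an element of $\mathbb{F}[a_1,\ldots,a_P]$, and using $\alpha_t=d_0+tD$, I would factor $b_j^{\alpha_t}=b_j^{d_0}(b_j^{D})^{t}$, pull $b_j^{d_0}$ out of the $j^{th}$ column, and recognize what remains as a Vandermonde matrix in the quantities $b_j^{D}$:
\[
\det\big[\,b_j^{\alpha_t}\,\big]_{0\le t,j\le T-1}
=\Big(\prod_{j=0}^{T-1}b_j^{d_0}\Big)\prod_{0\le i<j\le T-1}\big(b_j^{D}-b_i^{D}\big).
\]

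The next step is to check that each factor on the right is a nonzero element of $\mathbb{F}[a_1,\ldots,a_P]$; since this ring is an integral domain, that suffices. The factor $\prod_j b_j^{d_0}$ is a scalar multiple of the monomial $\prod_j a_{p_j}^{d_0}$ with scalar $\prod_j\zeta^{m_j d_0}\neq 0$, hence nonzero. For a factor $b_j^{D}-b_i^{D}=\zeta^{Dm_j}a_{p_j}^{D}-\zeta^{Dm_i}a_{p_i}^{D}$ there are two cases. If $p_i\neq p_j$, the two terms are nonzero scalar multiples of the distinct monomials $a_{p_j}^{D}$ and $a_{p_i}^{D}$ (note $D\ge 1$), so the difference is nonzero. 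If $p_i=p_j$, then the columns being distinct forces $m_i\neq m_j$, and the factor equals $a_{p_i}^{D}(\zeta^{Dm_j}-\zeta^{Dm_i})$; this vanishes exactly when $\zeta^{D(m_j-m_i)}=1$, i.e.\ $M\mid D(m_j-m_i)$, and since $\gcd(D,M)=1$ this is equivalent to $M\mid(m_j-m_i)$, which is impossible because $0<|m_j-m_i|<M$.

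Putting these observations together shows the determinant of the chosen submatrix is a nonzero polynomial, and since the submatrix was arbitrary the claim follows for $\Sigma_A$; replacing $D$ by $E$ throughout gives the statement for $\Sigma_B$. I expect the only subtle point to be the case $p_i=p_j$: the substitution $b_j=\zeta^{m_j}a_{p_j}$ is far from injective, so the Vandermonde determinant, although nonzero as a polynomial in the $b_j$, could in principle collapse to zero after substitution, and it is precisely the coprimality of $D$ (resp.\ $E$) with $M$ — already foreshadowed in Example \ref{careful} — that rules this out. Everything else is routine bookkeeping with the arithmetic-progression structure.
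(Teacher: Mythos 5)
Your proof is correct and follows essentially the same route as the paper's: introduce formal variables, use the arithmetic-progression structure to factor the generalized Vandermonde determinant as a monomial times $\prod_{i<j}(b_j^D-b_i^D)$, substitute $b_j=\zeta^{m_j}a_{p_j}$, and then verify nonvanishing by the two cases $p_i\neq p_j$ and $p_i=p_j$, with coprimality of $D$ (resp.\ $E$) and $M$ settling the latter case exactly as you describe. The only difference is cosmetic (the paper's indexing of the $a_p$'s shifts between $a_0,\ldots,a_{P-1}$ and $a_1,\ldots,a_P$), not substantive.
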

\begin{proof}
    Let $X_0,\ldots,X_{T-1}$ be variables over $\mathbb{F}$, and define
    \[
    V_{\alpha}(X_0,\ldots,X_{T-1}) = \det\begin{bmatrix}
        X_t^{\alpha_t}
    \end{bmatrix}\in\mathbb{F}[X_0,\ldots,X_{T-1}].
    \]
    If $\alpha_t = d_0 + tD$, then a simple calculation using the classical expression for the Vandermonde determinant shows that
    \[
    V_{\alpha}(X_0,\ldots,X_{T-1}) = \prod_{t = 0}^{T-1}X_t^{d_0}\cdot \prod_{0\leq s < t \leq T-1}(X_t^D - X_s^D).
    \]
    Now the determinant of any $T\times T$ submatrix of $\Sigma_A$ is obtained by substituting $X_t = \zeta^m a_p$ for some $m$ and $p$ into the above expression for $V_{\alpha}(X_0,\ldots,X_{T-1})$. The resulting polynomial in the variables $a_0,\ldots,a_{P-1}$ will be non-zero exactly when $(\zeta^{m_1}a_{p_1})^D\neq (\zeta^{m_2}a_{p_2})^D$ as monomials for any pair of distinct substitutions $X_t = \zeta^{m_1}a_{p_1}$ and $X_s = \zeta^{m_2}a_{p_2}$.

    If $a_{p_1}\neq a_{p_2}$ then the condition $(\zeta^{m_1}a_{p_1})^D\neq (\zeta^{m_2}a_{p_2})^D$ holds unconditionally.  If $a_{p_1}=a_{p_2}$ then we must have $m_1\neq m_2$, in which case $(\zeta^{m_1}a_{p_1})^D\neq (\zeta^{m_2}a_{p_2})^D$ if and only if $(m_1 - m_2)D\not\equiv 0\Mod{M}$.  Since $D$ is assumed to be {relatively prime to} $M$, this latter condition is satisfied.  The same argument works for $\Sigma_B$ and the parameter $E$, which completes the proof of the theorem.
\end{proof}

\begin{corollary}\label{finally}
Suppose that $\alpha$ and $\beta$ are in arithmetic progression with common differences $D$ and $E$, respectively.  Assume that $\gcd(D,M) = \gcd(E,M) = 1$.  Then there exists an evaluation vector $a\in \mathbb{K}^P$ in some finite extension $\mathbb{K}/\mathbb{F}$, such that the resulting MP code is decodable and $T$-secure.
\end{corollary}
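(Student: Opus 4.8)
The plan is to reduce the statement to a single application of Lemma \ref{sz_non_vanishing}, using Theorem \ref{mpc_decodability_security} to translate decodability and $T$-security into a finite list of non-vanishing conditions on the evaluation vector. By that theorem, to prove the corollary it suffices to produce a finite extension $\mathbb{K}/\mathbb{F}$ and a vector $a = (a_1,\ldots,a_P) \in \mathbb{K}^P$ such that: (i) $\det GV(a, \supp(\widehat{h})) \neq 0$; (ii) $a_p \neq 0$ for every $p$; and (iii) every $T\times T$ minor of each of $\Sigma_A$ and $\Sigma_B$ is non-zero.

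First I would observe that each of these is the non-vanishing of an explicit non-zero polynomial in $\mathbb{F}[a_1,\ldots,a_P]$. For (i), $GV(a,\supp(\widehat{h}))$ is a square generalized Vandermonde matrix of size $P = |\supp(\widehat{h})|$, and its determinant is a non-zero polynomial in the $a_p$ by the classical fact recalled in the proof of Proposition \ref{evaluation}. For (ii), each coordinate function $a \mapsto a_p$ is trivially a non-zero polynomial. For (iii), this is exactly the content of Theorem \ref{polynomials}: because $\alpha$ and $\beta$ are arithmetic progressions with common differences $D$, $E$ satisfying $\gcd(D,M) = \gcd(E,M) = 1$, every $T\times T$ minor of $\Sigma_A$ and of $\Sigma_B$ is a non-zero element of $\mathbb{F}[a_1,\ldots,a_P]$.

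Since $\Sigma_A$ and $\Sigma_B$ are $T\times N$ matrices and $GV(a,\supp(\widehat{h}))$ is $P\times P$, there are only finitely many conditions in total. I would then collect these finitely many non-zero polynomials and feed them to Lemma \ref{sz_non_vanishing}, which yields a finite extension $\mathbb{K}/\mathbb{F}$ and a point $a \in \mathbb{K}^P$ at which all of them are simultaneously non-zero. This $a$ satisfies (i), (ii), and (iii) by construction, so Theorem \ref{mpc_decodability_security} shows that the resulting MP code is both decodable and $T$-secure, which completes the proof.

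The step requiring the most care is (iii), which is the only place the coprimality hypothesis enters and which is already handled by Theorem \ref{polynomials}; Example \ref{careful} shows this hypothesis genuinely cannot be dropped. One minor bookkeeping point to address is the meaning of $\supp(\widehat{h})$: since $A$, $B$ and the random padding matrices are chosen generically, one takes $\supp(\widehat{h})$ to be the maximal support allowed by the exponent structure of $f$ and $g$, so that $P$ and condition (i) are well-defined independently of the particular random draw. Beyond that, this is the standard ``finitely many polynomial constraints, hence a large enough field works'' argument, the one wrinkle being that a single evaluation vector must satisfy the decodability constraint and both security constraints at once, which is precisely why Lemma \ref{sz_non_vanishing} rather than separate applications of Proposition \ref{evaluation} is the right tool.
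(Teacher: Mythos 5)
Your proof is correct and follows essentially the same route as the paper: both arguments reduce the matter via Theorem \ref{mpc_decodability_security} to finitely many non-vanishing polynomial conditions on $a$ (a single generalized Vandermonde determinant for decodability, and the $T\times T$ minors of $\Sigma_A$, $\Sigma_B$ whose non-vanishing as polynomials is exactly Theorem \ref{polynomials}), and then invoke Lemma \ref{sz_non_vanishing}. Your explicit inclusion of condition (ii) and the remark about interpreting $\supp(\widehat h)$ generically are harmless elaborations and do not constitute a different approach.
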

\begin{proof}
    It is well-known that the determinant of a generalized Vandermonde matrix is non-zero as a polynomial in its entries.  Thus the decodability condition forces a single polynomial condition of the form $\phi(a)\neq 0$ on any potential evaluation vector $a$.  By Theorem \ref{polynomials}, each $T\times T$ sub-determinant of $\Sigma_A$ and $\Sigma_B$ forces another such polynomial condition.  {One can now use Lemma \ref{sz_non_vanishing} to demonstrate the existence of the desired field extension and evaluation vector}.
\end{proof}

The results of this section have provided us with a robust set of modular polynomial codes which guarantee decodability and $T$-security.  In particular, the set of admissible $\alpha$ and $\beta$ includes that of $\alpha_t = \beta_t = t$, {which is the same choice of $\alpha_t$ and $\beta_t$ as in} the GASP$_{\mathrm{big}}$ polynomial code from \cite{d2020gasp}.  {Thus this specific class of MP codes can be thought of as a generalization of the GASP$_\mathrm{big}$ codes to the grid partition.}

\subsection{An Example: $K = L = 2$, $M = 3$, and $T = 3$}\label{mpc_exmaple}

At this point it is worth pausing to construct an explicit example which may demystify some of the abstraction of the previous two subsections.  We set $K = L = 2$, $M = 3$, and $T = 3$.  We suppose our matrices $A$ and $B$ are defined over the field $\mathbb{F}_{13}$, which contains a primitive third root of unity $\zeta = 3$.  For simplicity we choose $\alpha_t = \beta_t = t$ for $0\leq t \leq 2$.

The polynomials $f$ and $g$ are defined to be
\begin{align*}
    f &= A_{0,0} + A_{0,1}x + A_{0,2}x^2 + A_{1,0}x^3 + A_{1,1}x^4 + A_{1,2}x^5 + R_0x^{12} + R_1x^{13} + R_2x^{14} \\
    g &= B_{2,0} + B_{1,0}x + B_{0,0}x^2 + B_{2,1}x^6 + B_{1,1}x^7 + B_{0,1}x^8 + S_0x^{12} + S_1x^{13} + S_2x^{14}.
\end{align*}
Define $C = AB$, so that $C$ has four blocks $C_{i,j}$, where $0\leq i,j\leq 1$.  The polynomial $\widehat{h} = \widehat{fg}$ is then easily computed to be
\begin{align*}
\widehat{h} &= C_{0,0}x^2 + C_{1,0}x^5  + C_{0,1}x^8 + C_{1,1}x^{11} + D_{14}x^{14} + D_{17}x^{17} + D_{21}x^{20} + D_{26}x^{26}
\end{align*}
where the matrices $D_i$ are linear combinations of products of the blocks of $A$ and $B$ and the random matrices $R_t$ and $S_t$, whose precise description is irrelevant.  Hence $P = |\supp(\widehat{h})| = 8$ and this MP code has a recovery threshold of $N = MP = 24$.

To complete the scheme description it remains to find an evaluation vector $a=(a_1,\ldots,a_8)\in \mathbb{F}_{13^r}^8$ for which the decodability and $T$-security properties of Theorem \ref{mpc_decodability_security} are satisfied.  By Proposition \ref{field_size} taking $r = 1$ is insufficient, since $13 < 25 = MP + 1$.  However, over the finite extension $\mathbb{F}_{13^2}$ admissible evaluation vectors abound.  Let us represent this quadratic extension as
\[
\mathbb{F}_{13^2} = \mathbb{F}_{13}[x]/(x^2 + 12x + 2)
\]
and let $b$ be a root of this minimal polynomial ($b$ happens to be a multiplicative generator of $\mathbb{F}_{13^2}$).   One can check that, for example, choosing the evaluation vector to be
\[
a = (b, 11+b, 11+ 12b, 2 + 10b, 6 + 12b, 2 + 5b, 3 + 7b, 12 + 10b) \in \mathbb{F}_{13^2}^8
\]
results in an MP code which is decodable and $T$-secure, as do many other choices of evaluation vector over this quadratic extension.

\subsection{Explicit Recovery Threshold of Modular Polynomial Codes}\label{mpc_rate_section}

Recall from Corollary \ref{finally} that $\alpha$ and $\beta$ which are in arithmetic progression with common differences relatively prime to $M$ guarantee $T$-security of the corresponding modular polynomial code.  To compute the rate of such modular polynomial codes explicitly, we restrict the set of possible codes further, by assuming that
\begin{equation}\label{assume}
\alpha_t = \beta_t = tD,\quad D\leq M,\quad\text{and}\quad \gcd(D,M) = 1.
\end{equation}
That is, we set $d_0 = e_0 = 0$, assume the common difference $D$ is equal for both arithmetic progressions, and that $D\leq M$.  In practice, it seems that all modular polynomial codes with decent rate satisfy these assumptions.

The following theorem allows one to express the recovery threshold $N = MP$ of MP codes as an explicit function of the system parameters.

\begin{thm}\label{mpc_rate}
    Let $\mathcal{M}$ be a modular polynomial code with parameters as in Definition \ref{mpc_defn}.  Assume $\alpha_t = \beta_t = tD$ where $\gcd(D,M) = 1$ and $D\leq M$.  Define quantities $\ell_0$ and $t_0$ by
    \[
        \ell_0 := \min\left(1 +\left\lfloor  \frac{(T-1)D - 1}{KM}\right\rfloor, L-1 \right) \quad\text{and}\quad t_0 := \left(\left\lceil \frac{-KM + M}{D} \right\rceil + T - 1 \right)^+
    \]
    where $(x)^+ = \max(x,0)$.  Then the recovery threshold $N$ of this MP code is given by $N = MP$, where
    \begin{align*}
    P &= K(L + \ell_0) + (L-\ell_0 - 1)\min\left(\left\lfloor \frac{(T-1)D}{M} \right\rfloor +  1,K\right) + \left\lfloor \frac{(T-1)D}{M} \right\rfloor +  1 \\
    & + \left\lfloor \frac{(2T-2-t_0)D + 1}{DM}\right\rfloor + \delta
    \end{align*}
    and $\delta\in\{0,1\}$.
\end{thm}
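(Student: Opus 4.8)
The plan is to compute $P=|\supp(\widehat h)|$ explicitly; the recovery threshold $N=MP$ then follows from Definition \ref{mpc_defn}. As in the proof of Theorem \ref{decodability}, write $h=fg=f_Ig_I+f_Ig_R+f_Rg_I+f_Rg_R$, and recall that $\supp(\widehat h)=\supp_{-1}(h)$, so that $\widehat h$ retains exactly those monomials of $h$ whose exponent is $\equiv M-1\pmod M$. Because $A$, $B$, and the matrices $R_t,S_t$ are sampled independently and uniformly at random, every coefficient of $h$ that is not identically zero is a nonzero polynomial in these entries and hence nonzero with probability $1$; therefore $P$ equals the number of \emph{distinct} exponents $\equiv M-1\pmod M$ that can occur with a nonzero coefficient among the four products, and the theorem reduces to a counting problem. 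Writing each such exponent uniquely as $M-1+jM$ with $j\ge 0$, the quantity to compute is the number of distinct $j$.

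The two arithmetic inputs are $\gcd(D,M)=1$, which forces exactly one value of the inner summation index ($m$ or $m'$) to survive the transform in each product for each fixed randomness index $t$, and $D\le M$, which makes $t\mapsto\lfloor tD/M\rfloor$ a staircase hitting every integer in $[0,q]$ with $q:=\lfloor(T-1)D/M\rfloor$. A direct computation identifies the four $j$-sets: $f_Ig_I$ gives the full interval $[0,KL-1]$ (the useful coefficients $A_kB_\ell$); $f_Ig_R$ gives $[KL,\,KL+K-1+q]$; $f_Rg_I$ gives $\bigcup_{\ell=0}^{L-1}[KL+\ell K,\,KL+\ell K+q]$; and $f_Rg_R$ gives a progression of points with common difference $D$, all of them (in the principal case) lying above $2KL$ and hence disjoint from the previous three. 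The $f_Ig_I$ block sits cleanly below $KL$; I would then count the union of the $f_Ig_R$ and $f_Rg_I$ blocks by sweeping through the $K$-sized windows $[KL+\ell K,\,KL+(\ell+1)K-1]$, the point being that $\ell_0=\min(1+\lfloor\frac{(T-1)D-1}{KM}\rfloor,L-1)$ counts the windows that are completely filled, after which one ``transition'' window contributes $q+1$ and the remaining $L-\ell_0-1$ windows contribute $\min(q+1,K)$ each; summing produces the first three terms of the claimed formula. The last term counts the $f_Rg_R$ points that lie strictly above the $f_Ig_R/f_Rg_I$ block: $t_0$ is the threshold in $u=s+t$ beyond which this is guaranteed, so the number of genuinely new points is $\lfloor\frac{(2T-2-t_0)D+1}{DM}\rfloor$ up to one unit. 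That unit, together with a possible off-by-one at the top of the transition window, is what the term $\delta\in\{0,1\}$ absorbs; pinning it down would require splitting into cases according to the residues of $(T-1)D$ and of $KM$ modulo the relevant moduli, which the theorem declines to do. Adding the four contributions gives $P$, whence $N=MP$.

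I expect the main obstacle to be exactly this final coordination: merging the $f_Ig_R$ and $f_Rg_I$ blocks with each other, reconciling them with the bottom of the $f_Rg_R$ block, and tracking every floor-function boundary effect, while isolating the single unit of ambiguity that must be relegated to $\delta$. Everything upstream --- the four-way decomposition of $h$, the reduction to counting distinct values of $j$, and the consequences of $\gcd(D,M)=1$ and $D\le M$ --- is routine given Theorem \ref{decodability} and the description of the mod-$M$ transform in Section \ref{partial_poly}.
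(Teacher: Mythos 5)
Your proposal follows essentially the same route as the paper's proof in Appendix~\ref{mpc_rate_proof}: decompose $\supp(h)$ via the four products $f_Ig_I$, $f_Ig_R$, $f_Rg_I$, $f_Rg_R$ (Lemma~\ref{starting_point}), reduce to counting exponents $\equiv M-1\pmod M$, use $\ell_0$ to track how far the $f_Ig_I\cup f_Ig_R$ block overlaps the $f_Rg_I$ intervals, and use $t_0$ plus a CRT-type count (absorbing one unit into $\delta$) for the $f_Rg_R$ progression. Working in ``$j$-space'' rather than with Lemma~\ref{dumb_lemma} on intervals is only a cosmetic restatement, and your residual imprecision about which window carries the $q+1$ contribution does not affect the total.
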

\begin{proof}
    {The proof proceeds by writing $\supp(h)$ as a disjoint union of intervals, and counting the number of integers $i$ in each interval which satisfy $i\equiv M-1 \Mod{M}$.}  See Appendix \ref{mpc_rate_proof} {for a complete proof}.  
\end{proof}

We hypothesize that the expression for $P$ in Theorem \ref{mpc_rate} is a (not necessarily strictly) increasing function of $D$, which we have verified experimentally for a number of parameters but we cannot prove.  However, this experimental observation motivates considering only those MP codes for which $D = 1$ in Section \ref{comparison} where we compare these codes with others in the literature.

One can see from the proof of Theorem \ref{mpc_rate} in Appendix \ref{mpc_rate_proof} that the quantity $\delta$ is explicitly calculable, but its value depends on a somewhat unpredictable coincidence of divisibility.  One can argue heuristically that we will `usually' have $\delta = 0$.

{Let us conclude this section by observing that if we set $T = 0$ then one can verify directly that the recovery threshold for MP codes is $N = MP = KML$, which is the same number of submatrix multiplications necessary in the uncoded case.  Thus MP codes are optimal in this idealized scenario.  Moreover, when $T = 0$ we have $|\supp(h)| = KML + M - 1$, so the naive approach of simply interpolating $h$ to recover $AB$ leaves one with a provably suboptimal scheme in the absence of security.}





\section{Generalized GASP codes}\label{ggasp_section}

In this section we construct generalizations of the codes GASP$_r$, which were first defined in \cite{d2020gasp, d2021degree} for $M = 1$.  As we do not use the full machinery of MP codes, and as this is a straightforward generalization of results appearing in \cite{d2020gasp,d2021degree}, we keep our scheme description and recovery threshold results brief.

\subsection{Definition of Generalized GASP Codes}

In this subsection we focus on generalizations of the GASP codes first appearing in \cite{d2020gasp,d2021degree}, which have previously been constructed for the case of $M = 1$.  These codes are parameterized by a positive integer $r$ satisfying $1\leq r\leq \min(K,T)$.  Our generalizations will also be parameterized by a positive integer $r$ satisfying $1\leq r\leq \min(KM,T)$.  When one sets $M = 1$ and obtains the setting of \cite{d2020gasp,d2021degree}, it is easy to see that the following definition simplifies to the original construction of GASP codes.

\begin{definition}\label{ggasp_defn}
    The \emph{generalized GASP code} $\mathcal{G} = \mathcal{G}(K,M,L,T,r,N,\mathbb{K}/\mathbb{F},a)$ consists of the following data:
    \begin{itemize}
        \item positive integers $K$, $M$, $L$, $T$, and $N$
        \item a positive integer $r\leq \min(KM,T)$
        \item a finite extension $\mathbb{K}/\mathbb{F}$ of finite fields
        \item an evaluation vector $a = (a_1,\ldots,a_N)\in\mathbb{K}^N$
    \end{itemize}
The above data defines $\alpha,\beta\in\mathbb{Z}_{\geq0}^T$ according to the following formulas:
\begin{align*}
\alpha_0,\ldots,\alpha_{T-1} &= \text{the first $T$ integers in the sequence } \bigcup_{u = 0}^{\infty}\  [uKM:uKM + r - 1]\\
\beta_t &= t \quad \text{for $t = 0,\ldots,T - 1$}.
\end{align*}
We call $N$ the \emph{recovery threshold} of $\mathcal{G}$.
\end{definition}

A user wishing to securely multiply two matrices $A$ and $B$ in a distributed fashion using a GGASP code observes the following familiar protocol.  The user constructs $f$ and $g$ as in \eqref{f_and_g}, where $R_t$ and $S_t$ are all chosen uniformly and independently at random from the set of matrices over $\mathbb{K}$ of the appropriate size.  They define $h := fg$ and $N := |\supp(h)|$, and compute the $N$ pairs of evaluations $f(a_n)$ and $g(a_n)$.  The user then sends the $n^{th}$ pair $f(a_n),g(a_n)$ to the $n^{th}$ worker node, who responds with $h(a_n) = f(a_n)g(a_n)$.  The user then decodes $AB$ from the $N$ evaluations $h(a_n)$ as in Theorem \ref{decodability}.

One defines information-theoretic conditions of decodability and $T$-security as before, and as in \cite{d2020gasp,d2021degree} one {uses Lemma \ref{sz_non_vanishing} to show} that for any parameters $K$, $M$, $L$, $T$, and $r$, there exists a finite extension $\mathbb{K}/\mathbb{F}$ and an evaluation vector $a\in \mathbb{K}^N$ such that $\mathcal{G}(K,M,L,T,r,N,\mathbb{K}/\mathbb{F},a)$ is decodable and $T$-secure.  We omit the details.

Note that GGASP codes do not fit the profile of a modular polynomial code, since the integers $\alpha_t$ are not in arithmetic progression and hence there is apparently no guarantee of decodability or $T$-security if one were to choose evaluation points of the form $\zeta^ma_p$ and attempt to interpolate the corresponding polynomial $\widehat{h}$.

\subsection{Recovery Threshold of Generalized GASP Codes}

To begin computing $N = |\supp(h)|$ for GGASP codes, note that we can alternatively describe the integers $\alpha_t$ by first writing
\[
T = U r + r_0,\quad U = \lfloor T/r \rfloor\quad\text{and}\quad 0\leq r_0 < r.
\]
Then since $r\leq KM$ we have
\[
\{\alpha_0,\ldots,\alpha_{T-1}\} = \bigcup_{u = 0}^{U - 1} [uKM : uKM + r - 1]\ \cup\  [UKM : UKM  + r_0 - 1].
\]
Due to the above `tail' term in the expression of the $\alpha_t$'s as a union of intervals, our explicit expression for $|\supp(h)|$ is slightly different depending on whether $r|T$ or $r\nmid T$.

\begin{thm}\label{ggasp_r_rate}
Let $\mathcal{G}$ be a generalized GASP code with parameters as in Definition \ref{ggasp_defn}.  Let $T = Ur + r_0$ where $U = \lfloor T/r\rfloor$ and $0 \leq r_0 < r$, and define quantities $\ell_0$, $S_\ell$, and $V$ by  
    \begin{align*}
    \ell_0 &:= \min\left(1 + \left\lfloor \frac{T-2}{KM}\right\rfloor,L\right) \\
    S_\ell & := \left\{
    \begin{array}{cl}
    M + r - 1 & \text{if } 0\leq \ell \leq L -1 \\
    \max(M,T) + r - 1 & \text{if } L\leq \ell \leq L + U - 2 \\
    T + r - 1 & \text{if } \ell = L + U - 1 \text{ and } r_0 = 0 \\
    \max(M + r_0, T + r) - 1 & \text{if } \ell = L + U - 1 \text{ and } r_0 > 0 \\
    0 & \text{if } \ell = L+U \text{ and } r_0 = 0 \\
    T + r_0 - 1 & \text{if } \ell = L + U \text{ and } r_0 > 0
    \end{array}
    \right. \\
    V &:= \left\{
    \begin{array}{cl}
    S_{L+U-1} + S_{L+U} & \text{if } r_0 = 0 \\
    \min(S_{L+U-1},KM) + S_{L+U} & \text{if } r_0 > 0
    \end{array}
    \right.
    \end{align*}
    Then the recovery threshold $N$ is given by
    \begin{align*}
    N = KML + \max(\ell_0KM + \min(S_{\ell_0},KM), KM + T - 1) + \sum_{\ell = \ell_0 + 1}^{L+U-2}\min(S_\ell, KM) + V.
    \end{align*}
\end{thm}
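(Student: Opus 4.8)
The plan is to compute $N = |\supp(h)|$ directly by analyzing the structure of $\supp(f)$, $\supp(g)$, and the Minkowski sum $\supp(f)+\supp(g)$. Recall from \eqref{figi} and \eqref{frgr} that
\[
\supp(f) = \{m+kM : 0\le k\le K-1,\ 0\le m\le M-1\} \cup \{KML+\alpha_t : 0\le t\le T-1\} = [0:KM-1]\cup\{KML+\alpha_t\},
\]
while
\[
\supp(g) = \{M-1-m+\ell KM : 0\le m\le M-1,\ 0\le\ell\le L-1\}\cup\{KML+\beta_t\} = \bigcup_{\ell=0}^{L-1}[\ell KM:\ell KM+M-1]\cup\{KML+t : 0\le t\le T-1\}.
\]
Since the coefficients $A_{k,m}$, $B_{m,\ell}$, $R_t$, $S_t$ are uniform and independent, no cancellation occurs generically, so $|\supp(h)| = |\supp(f)+\supp(g)|$. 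I would split $\supp(f)+\supp(g)$ according to the four cross terms $f_Ig_I$, $f_Ig_R$, $f_Rg_I$, $f_Rg_R$, noting $\supp(f_Ig_I)=[0:KM-1]+[0:(L-1)KM+M-1]$ and that the other three contribute monomials of degree $\ge KML$.

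The key step is to organize the union as a disjoint union of ``slices'' indexed by $\ell$. Writing an arbitrary degree as $d = j + \ell KM$ with $0\le j < KM$ (so $\ell$ is a block index and $j$ an offset), I would determine, for each $\ell$, the set of offsets $j$ that appear. For small $\ell$ (specifically $\ell \le \ell_0 - 1$, where $\ell_0$ is where the $f_Ig_I$ block interacts with the randomness), every offset in $[0:KM-1]$ is hit and more, giving the leading $KML$ term; the precise threshold $\ell_0 = \min(1+\lfloor(T-2)/KM\rfloor, L)$ comes from locating where $\alpha_{T-1}$ or $\beta_{T-1}$ (roughly of size $\le KM\lfloor(T-1)/r\rfloor + r - 1$ and $T-1$ respectively) first pushes a contribution past the end of the pure-product region. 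For the boundary slice $\ell = \ell_0$ one gets $\max(\ell_0 KM + \min(S_{\ell_0},KM),\ KM+T-1)$, the two terms accounting for whether the product region or the $f_Rg_I$/$f_Ig_R$ region dominates. For the middle slices $\ell_0+1\le\ell\le L+U-2$, each contributes $\min(S_\ell, KM)$ new offsets, where $S_\ell$ is the ``width'' of the relevant sumset piece in that block — here the three cases of $S_\ell$ correspond to whether the slice sees only $f_Ig_R + f_Rg_I$ terms (width $M+r-1$), those terms overlapping a new $[uKM:uKM+r-1]$ interval from $f_R g_R$ (width $\max(M,T)+r-1$), or the final tail interval. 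The last two slices $\ell = L+U-1$ and $\ell = L+U$ are handled separately because of the $r_0$-dependent tail $[UKM:UKM+r_0-1]$ in the $\alpha_t$, which is exactly what $V$ encodes.

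Once each slice's contribution is pinned down, summing gives $KML + \max(\ldots) + \sum_{\ell=\ell_0+1}^{L+U-2}\min(S_\ell,KM) + V$, as claimed. The main obstacle is bookkeeping: correctly identifying, for each $\ell$, which of the four cross terms contribute to that block and how their supports overlap within the block, so that the widths $S_\ell$ and the $\min(\cdot,KM)$ truncations come out exactly right, and verifying that the various interval endpoints (e.g.\ $\alpha_{T-1} = UKM + r_0 - 1$ when $r_0>0$ versus $(U-1)KM+r-1$ when $r_0=0$) produce precisely the stated case split for $S_\ell$ and $V$. A secondary subtlety is confirming the genericity claim — that for a suitable evaluation vector (equivalently, treating the matrix entries as formal variables) the coefficients indexed by $\supp(f)+\supp(g)$ are indeed all nonzero, so that counting the Minkowski sum genuinely computes $|\supp(h)|$; this follows the same Schwartz–Zippel reasoning as Lemma \ref{sz_non_vanishing} but should be noted. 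I expect the proof in the appendix to proceed exactly along these lines, with the slice decomposition being the organizing device, mirroring the approach already announced for Theorem \ref{mpc_rate}.
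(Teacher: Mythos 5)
Your proposal follows essentially the same route as the paper's proof in Appendix~B: decompose $\supp(h)$ via Lemma~\ref{starting_point} into the four cross terms $\supp(f_Ig_I)$, $\supp(f_Ig_R)$, $\supp(f_Rg_I)$, $\supp(f_Rg_R)$, reorganize the result into a disjoint union of intervals indexed by the block parameter $\ell$ (your ``slices'' $[\ell KM : (\ell+1)KM-1]$ are exactly the paper's intervals $[KML+\ell KM : KML+\ell KM + \min(S_\ell-1, KM-1)]$ after the $KML$ shift), and sum the interval lengths. Your annotation that genericity of the matrix entries is needed so that $|\supp(h)|=|\supp(f)+\supp(g)|$ is a fair and useful remark, though the paper treats it as implicit.
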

\begin{proof}
    See Appendix \ref{ggasp_proof}.
\end{proof}

For the outer production partition when $M = 1$, the authors of \cite{d2020gasp, d2021degree} find optimal or near-optimal values of $r$, in several cases.  We forego any attempt at analytically solving for the value of $r$ which minimizes the recovery threshold $N$ due to the complicated nature of the expression in Theorem \ref{ggasp_r_rate}.

\subsection{An Example: $K = L = 5$, $M = 2$, $T = 4$, $r = 2$}\label{exampleGGASP}

Let us conclude our discussion of GGASP codes with an explicit example.  For the parameters $K = L = 5$, $M = 2$, and $T = 4$, one can use Theorem \ref{ggasp_r_rate} to perform a simple search over all $r$ such that $1\leq r\leq \min(KM,T) = 4$ to see that the $r$ which minimizes $N$ is $r = 2$.  

For these parameters, the vectors $\alpha$ and $\beta$ are given by $\alpha = (0,1,10,11)$ and $\beta = (0,1,2,3)$, and hence the polynomials $f$ and $g$ are defined to be
\begin{align*}
    f &= A_{0,0} + A_{0,1}x + A_{1,0}x^2 + A_{1,1}x^3 + A_{2,0}x^4 + A_{2,1}x^5 \\
    &+ A_{3,0}x^6 + A_{3,1}x^7 + A_{4,0}x^8 + A_{4,1}x^9 \\
    &+ R_0x^{50} + R_1x^{51} + R_2x^{60} + R_3x^{61}\\
    g &= B_{1,0} + B_{0,0}x + B_{1,1}x^{10} + B_{0,1}x^{11} + B_{1,2}x^{20} + B_{0,2}x^{21} \\
    &+ B_{1,3}x^{30} + B_{0,3}x^{31} + B_{1,4}x^{40} + B_{0,4}x^{41} \\
    &+ S_0x^{50} + S_1x^{51} + S_2x^{52} + S_3x^{53}.
\end{align*}
Then $h = fg$ has $\deg(h) = 114$ but the recovery threshold is $|\supp(h)| = N = 82$ as predicted by Theorem \ref{ggasp_r_rate}.  By apparent coincidence, the MP code with the same parameters $K = L = 5$, $M = 2$, and $T = 4$ with $\alpha = \beta = (0,1,2,3)$ also achieves $N = MP = 82$, though with a possibly different set of evaluation points.  In contrast, for these same parameters the SDMM scheme of \cite{root_of_unity} requires $N = 84$ worker nodes, and that of \cite{oliver} requires $N = 97$.  In the next section, we present a more systematic comparison between these four SDMM schemes.

\section{Comparison with Other Polynomial Codes} \label{comparison}

\subsection{Rate}\label{comp_rate}
To compare the various polynomial codes in the literature against each other we make the following definition of rate, which normalizes the recovery threshold to be less than one and allows for comparison when we vary the system parameters.

\begin{definition}\label{rate}
The \emph{rate} of a polynomial code with matrix partition parameters $K$, $M$, and $L$ and recovery threshold $N$ is defined to be $\mathcal{R} = KML/N$.
\end{definition}

In terms of computation, the rate  measures the ratio of submatrix multiplications of the form $A_{k,m}B_{m',\ell}$ performed by the standard textbook matrix multiplication algorithm, namely $KML$, to the number of such multiplications needed when computing the product in a distributed fashion and accounting for $T$-collusion.  The connection between rate, recovery threshold, and various measures of communication efficiency is elaborated on in \cite{rafael_note}.

\subsection{Experimental Setup}

In the following subsections we present experimental results which compare MP and GGASP codes to current results in the literature, specifically, the codes of \cite{root_of_unity} which we refer to as Root of Unity (ROU) codes, and the codes of \cite{oliver} which we refer to as BGK codes (a concatenation of the authors' last initials).  All such codes have the same partitioning of the matrices $A$ and $B$ into $K\times M$ and $M\times L$ block matrices, respectively, which enables a fair comparison across all such strategies.  Deserving of mention are the SGPD codes of \cite{flex}, which are also directly comparable with the current work but which are largely outperformed by the codes of \cite{oliver}.  Thus to keep our experimental results easily digestible we omit comparison with \cite{flex}.

For every figure in this section, we choose for simplicity the modular polynomial code with $\alpha_t = \beta_t = t$ for $t = 0,\ldots,T-1$.  The GGASP codes depend on the parameter $r$, and we remove this dependency by plotting the rate for the GGASP code with $r$ that minimizes the recovery threshold $N$, for given values of $K$, $M$, and $L$.  Such an optimal $r$ was computed by brute force.

For the ROU codes we use the formula for $N$ given in \cite[Theorem 1]{root_of_unity}.  To compute the rate of the BGK codes, we compute $N$ as the number of non-zero coefficients of the product of the polynomials $P(x)$ and $Q(x)$ defined in \cite[Section 3.1]{oliver}, which are analogous to our polynomials $f$ and $g$.

{The Discrete Fourier Transform codes of \cite{inner_product} provably outperform both MP and GGASP codes for the inner product partition where $K = L = 1$.  For the outer product partition where $M = 1$, both MP and GGASP codes reduce to various versions of the codes GASP$_r$ of \cite{d2020gasp, d2021degree}, whose relative performance has been analyzed in these references.  Thus for all experiments we restrict ourselves to $K,L>1$ and $M>1$, to compare the relative performance of all polynomial codes when the matrices are partitioned non-trivially in both the horizontal and vertical directions.}

\subsection{Rate as a Function of $T$, for Large $M$}

\begin{figure}[h!]
    \centering
    \includegraphics[width=0.48\textwidth]{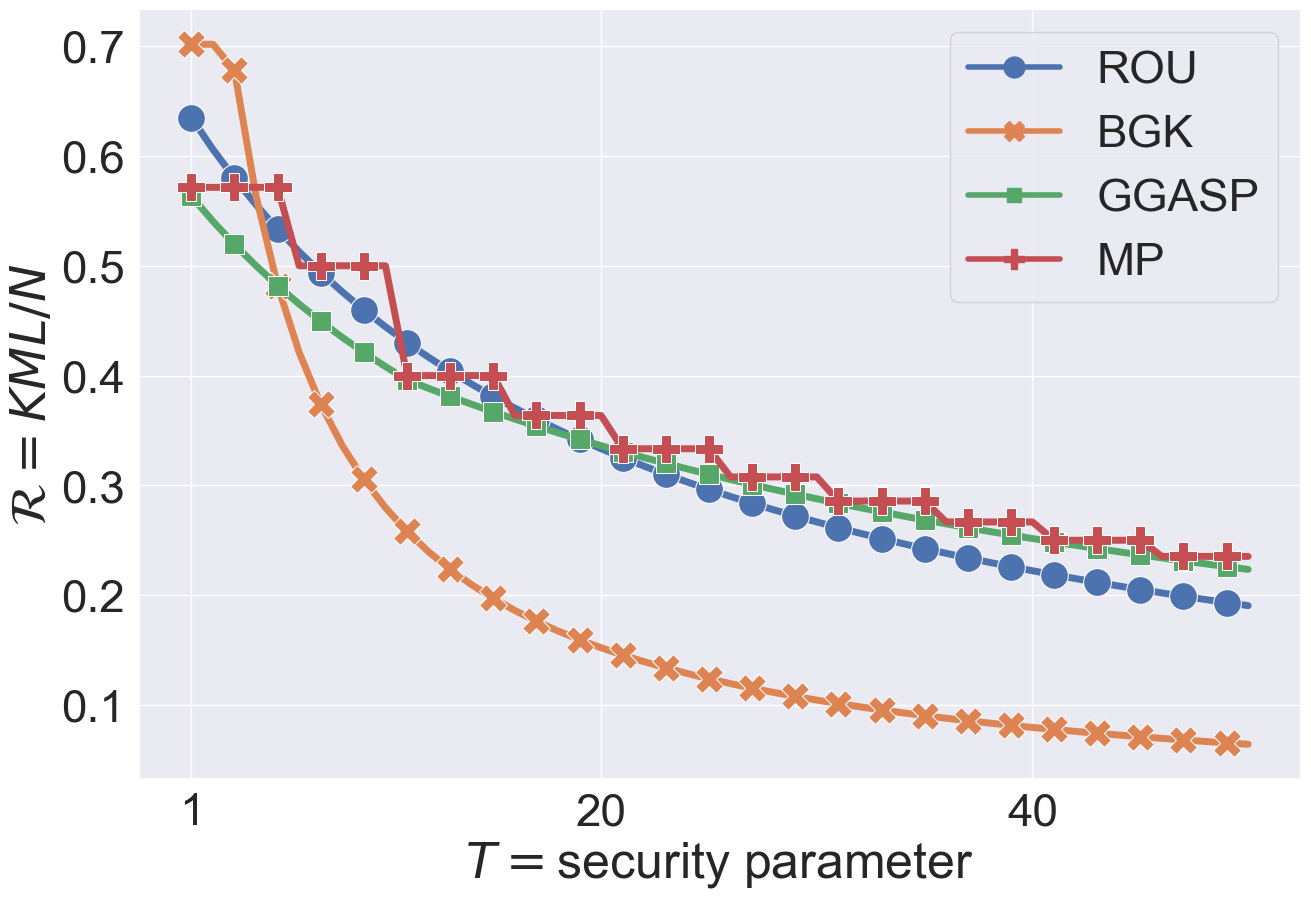}
    \includegraphics[width=0.48\textwidth]{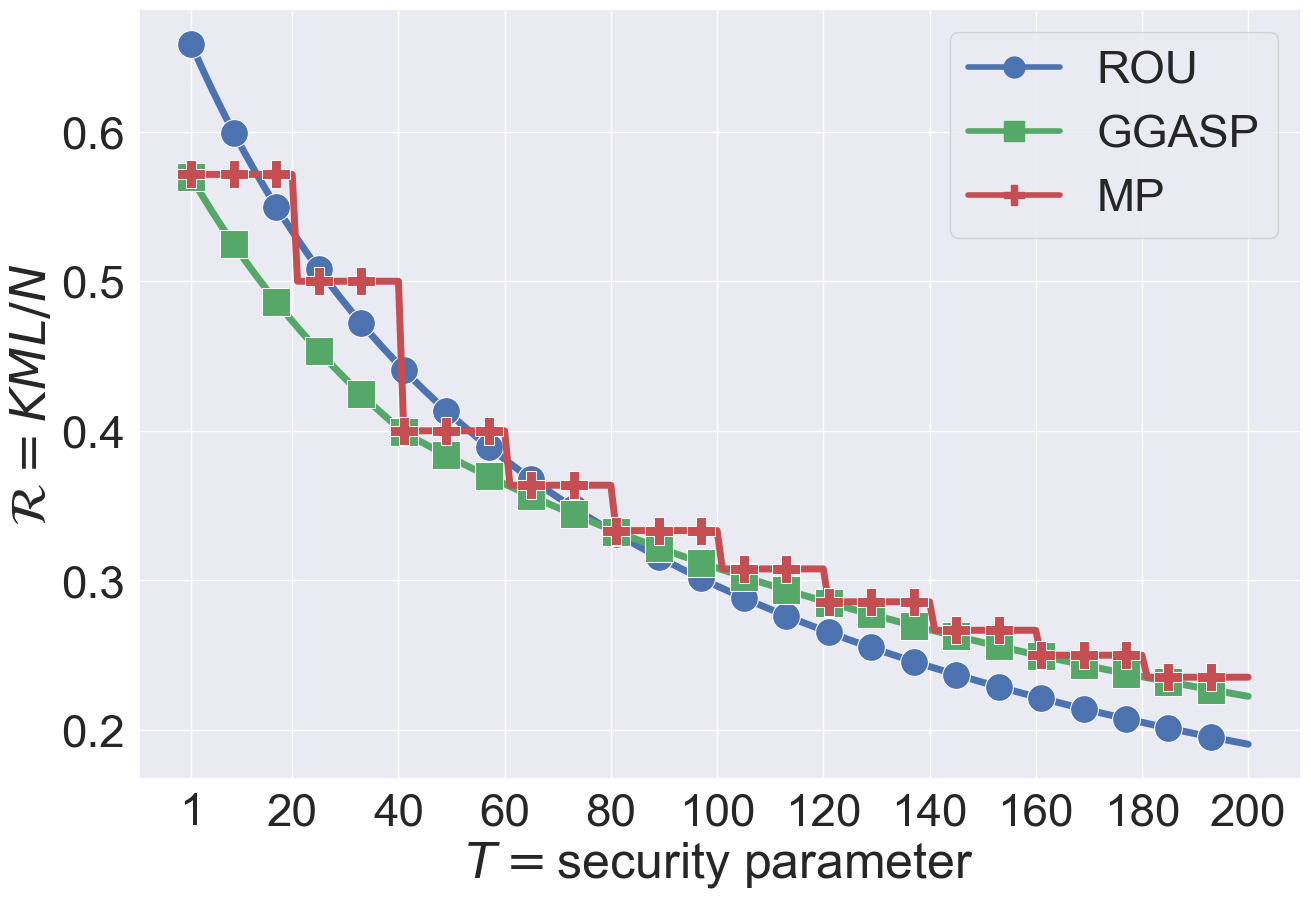}
    \caption{Rates of various polynomial codes, with $K = L = 2$, $M = 10$ (left), and $K = L = 2$, $M = 40$ (right), plotted as a function of the security parameter $T$.}
    \label{fig:inner}
\end{figure}

In Figure \ref{fig:inner} we plot the rate of various SDMM schemes as a function of $T$, in the regime where $M$ is somewhat larger than $K$ and $L$.   {We omit the performance of BGK codes from the right plot, as the rate values were close to zero for nearly all $T$.  In both plots we observe that for small $T$, the polynomial code with better performance alternates between ROU and MP.  However, for sufficiently large $T$ MP codes exhibit the best performance.}  

{Heuristically, the good performance of MP codes in this regime where $M\gg K,L$ can be explained by observing that for these parameters, a smaller proportion of the coefficient $i$ of the product polynomial $h$ satisfy $i\equiv M-1\Mod{M}$. The mod-$M$ transform is designed to decode these coefficients with as few evaluations as possible, in the process removing the `noise' terms given by the other coefficients.  In the asymptotic regime where $T\gg0 $, we see that MP codes outperform GGASP codes since for very large $T$ both codes assign $\alpha_t = \beta_t = t$, but MP codes use fewer worker nodes as they only interpolate $\widehat{h}$ instead of the full polynomial $h$.}

\subsection{Rate as a Function of $T$, for Large $K$ and $L$}

\begin{figure}[h!]
    \centering
    \includegraphics[width=0.48\textwidth]{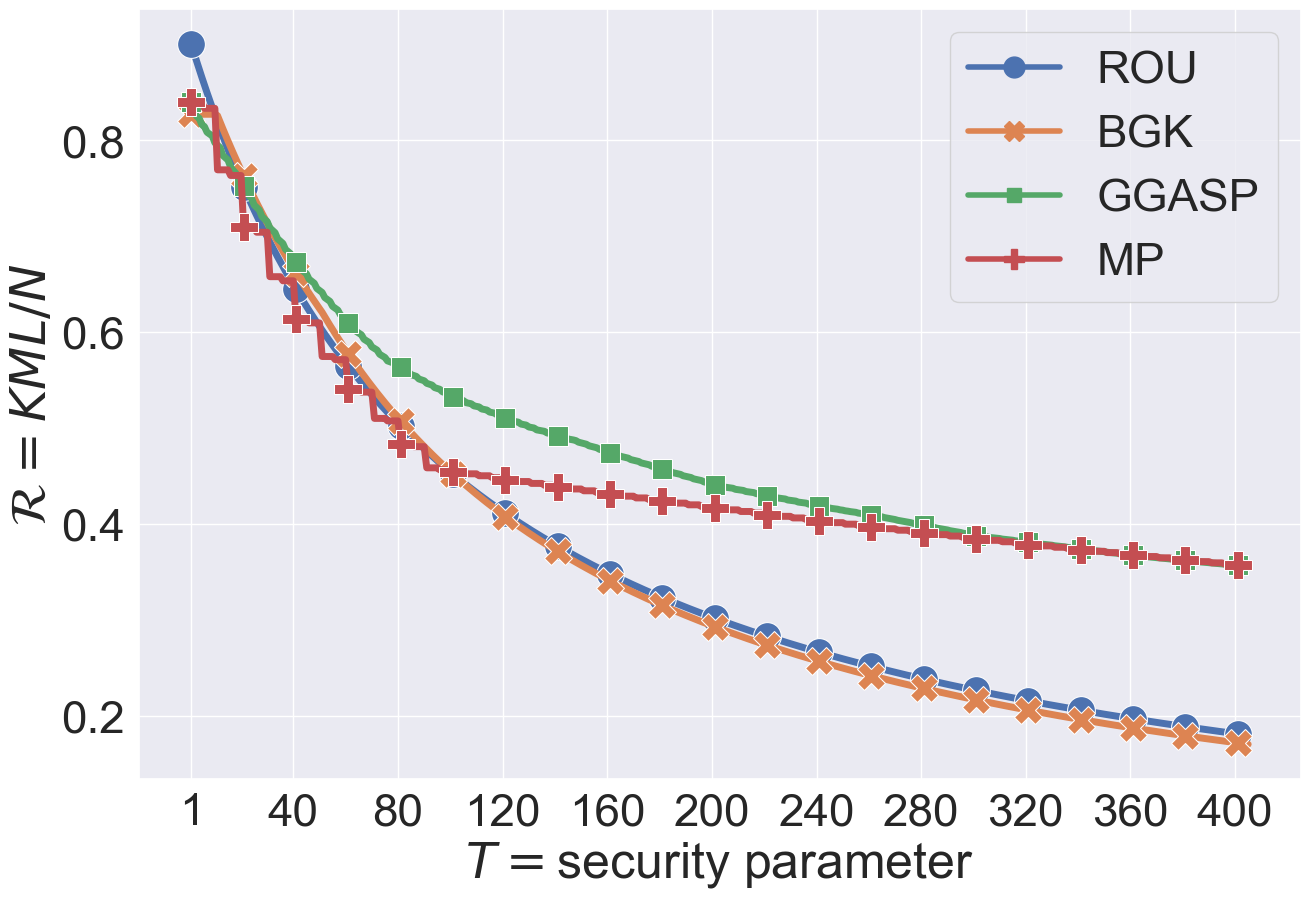}
    \includegraphics[width=0.48\textwidth]{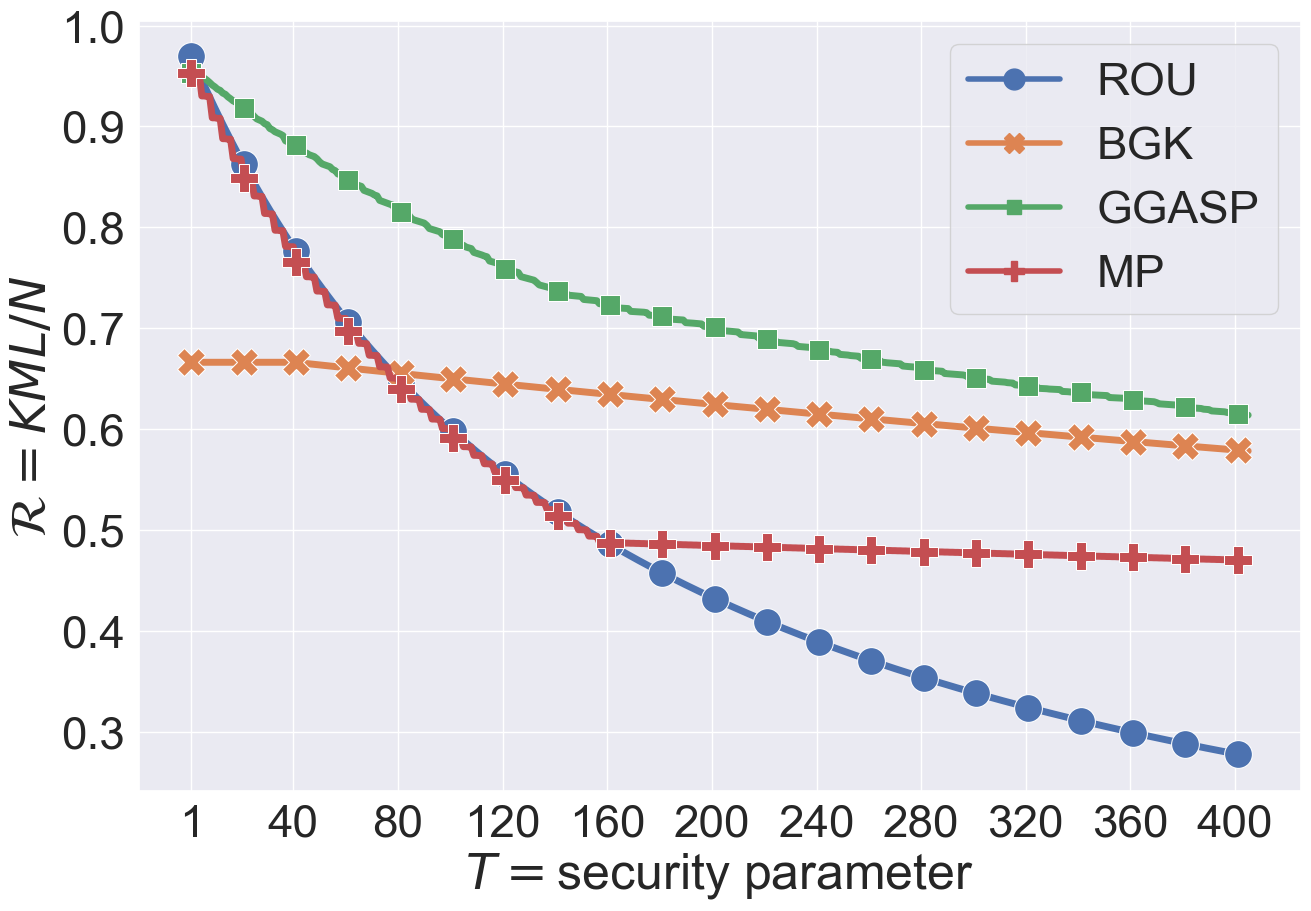}
    \caption{Rates of various polynomial codes, with $K = L = 10$, $M = 10$ (left), and $K = L = 40$, $M = 4$ (right), plotted as a function of the security parameter $T$.}
    \label{fig:inner2}
\end{figure}

In Figure \ref{fig:inner2} we make analogous plots {when $K$ and $L$ are larger relative to $M$, and extend the plots for larger $T$ to better understand these asymptotic regimes.  In both plots we notice that GGASP exhibits the overall best performance.  In the left plot when $K = L = M = 10$ we see that ROU, BGK, and MP codes exhibit comparable performance for small $T$, but as $T$ increases the MP codes outperform ROU and BGK codes by a large margin, and for these parameters eventually catch up to and even surpass GGASP codes in rate for $T\gg 0$ (though by a miniscule margin).}  {In the right plot we investigate asymptotics in $K$ and $L$ by setting $K = L = 40$ and $M = 4$.  Again GGASP codes perform best, which is not surprising as this parameter range is similar to the outer product partition for which the original GASP codes \cite{d2020gasp,d2021degree} were designed.}  On the other hand, the performance of the MP codes is somewhat lackluster when {$K$ and $L$ are much bigger than $M$ and $T$ is relatively small}.

\subsection{Fixing the Number of Worker Nodes and Limiting Computational Power}

Lastly, we suppose a user has available a fixed number $N$ of worker nodes, and would like to employ the SDMM scheme which maximizes the rate.  For each of the four codes in question, we compute by brute force the matrix partition parameters $K$, $M$, and $L$ which minimize the resulting number $N$ of worker nodes, and therefore maximize the  rate $\mathcal{R}$.  We also add an additional degree of realism to this situation by restricting the possible values of $K$, $M$, and $L$ to be greater than some lower bound; this effectively limits the computational power of each worker node.  In Figure \ref{fig:N200} the  resulting rates are plotted as a function of $T$ for $N = 200$ and $K,L\geq 2$, $M\geq 4$ in the left plot, and $K,L\geq 2$, $M\geq 10$ in the right plot.

\begin{figure}[h!]
    \centering
    \includegraphics[width=0.48\textwidth]{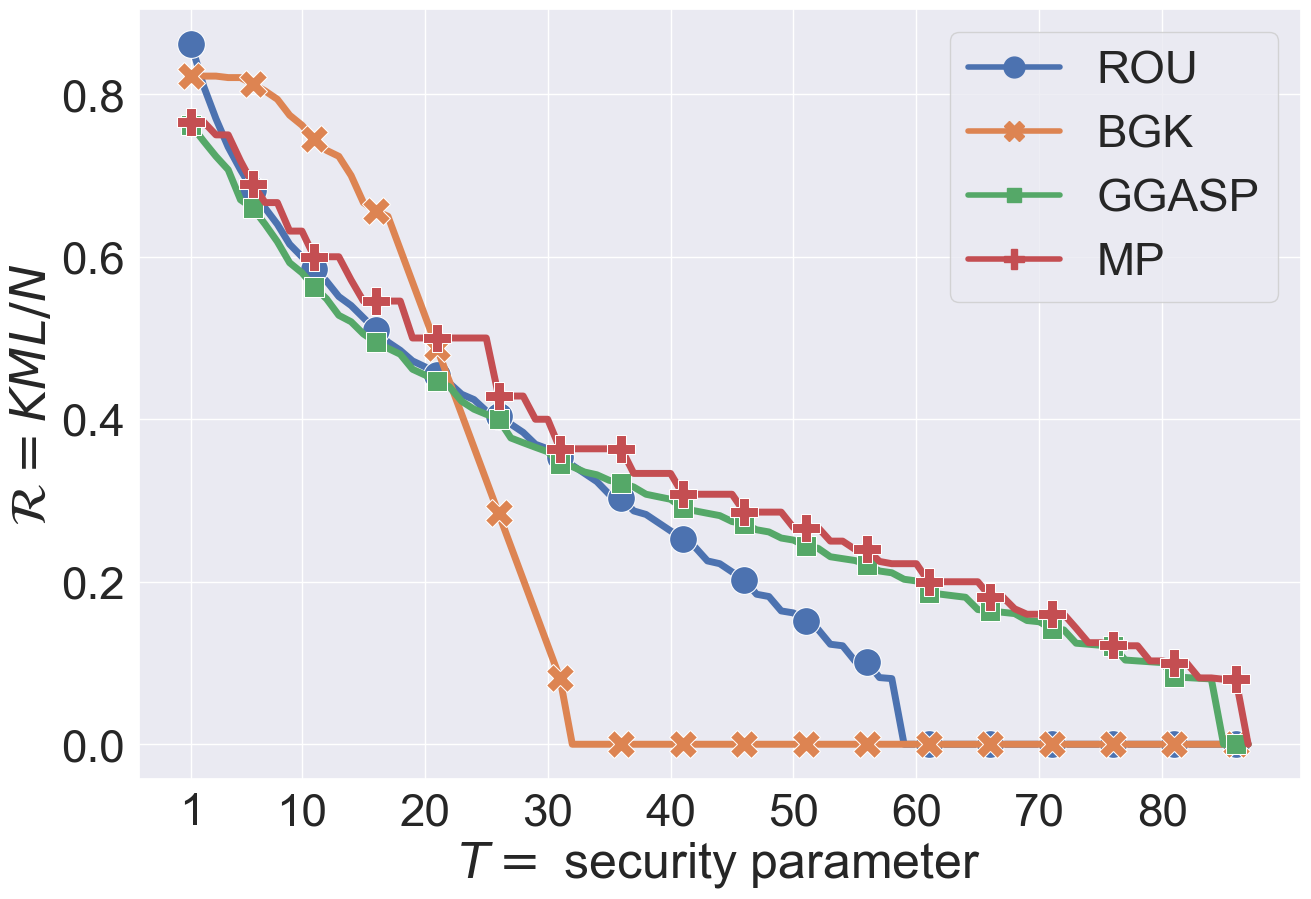}
    \includegraphics[width=0.48\textwidth]{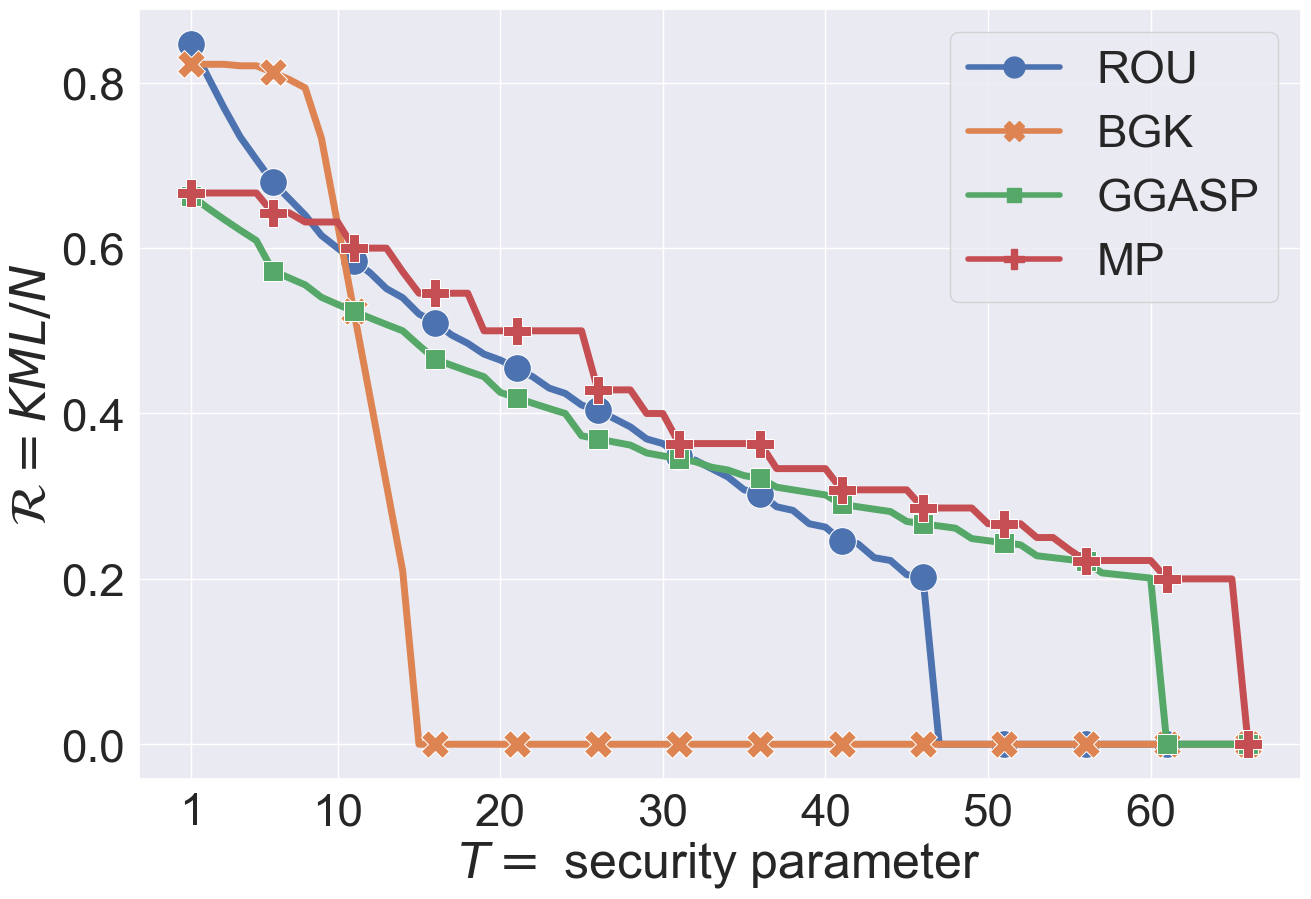}
    \caption{Rates of various polynomial codes with $N = 200$, $K,L\geq 2$, $M\geq 4$ (left),  and $N = 200$, $K,L\geq 2$, $M\geq 10$ (right), plotted as a function of the security parameter $T$.}
    \label{fig:N200}
\end{figure}

  We observe in Figure \ref{fig:N200} that both MP and GGASP codes show noticeable improvements over the current codes in the literature when $T$ is a sizeable fraction of $N$.  Given a value of $N$, both MP and GGASP codes provide non-zero rate schemes for much larger values of $T$ than ROU or BGK codes. 
 However, BGK codes maintain an edge in the low-$T$ regime.

\section{Robustness of MP and GGASP Codes Against Stragglers} \label{robustness_section}

We now shift our focus to the problem of \emph{robustness} against stragglers.  More specifically, we add more worker nodes to our system than the recovery thresholds $N$ of Definitions \ref{mpc_defn} and \ref{ggasp_defn} demand, in anticipation of some subset of $S$ worker nodes being unresponsive. The notion of recovery threshold is generalized below, and coincides with the definition found in \cite{straggler_fund, straggler_optimal, polycodes} when $T = 0$, and that of Definitions \ref{mpc_defn} and \ref{ggasp_defn} when $S = 0$.  We note here that since correcting a single error is equivalent to repairing two erasures, this recovery threshold alternatively allows for robustness against a subset of $\lfloor S/2\rfloor$ Byzantine worker nodes which return erroneous responses.

\subsection{MP Codes for Robustness}

We generalize the notion of MP codes to add robustness against stragglers as follows.  We let $P' = |\supp(\widehat{h})|$ and choose some $P\geq P'$.  With the rest of the parameters as before, the user employs $N = MP$ worker nodes by choosing an evaluation vector $a \in \mathbb{K}^P$ defined over some finite extension $\mathbb{K}/\mathbb{F}$.  The $N$ worker nodes are sent the evaluations $f(\zeta^ma_p),g(\zeta^ma_p)$ and respond with the products $h(\zeta^ma_p)$.  The user decodes by computing the evaluations $\widehat{h}(a_p)$ and interpolating $\widehat{h}$ as before.  The definition of $T$-security as in Definition \ref{decodability_t_security_defn} remains unchanged, but the definition of decodability is generalized to that of robustness as below.

For the rest of this section, we adopt the following notation.  We let $\mathcal{W}$ be the set of all $N$ worker nodes, $\mathcal{W}_{m,p}$ be the worker node corresponding to the evaluation point $\zeta^ma_p$, and we define $\mathcal{W}_p = \{\mathcal{W}_{0,p},\ldots,\mathcal{W}_{M-1,p}\}$ to be the $p^{th}$ \emph{hypernode}.

\begin{definition}\label{mp_robust_defn}
    Let $\mathcal{M}$ be a modular polynomial code with $P' = |\supp(\widehat{h})|$.  Let $P\geq P'$, suppose the user employs $N = MP$ worker nodes as described above, and let $\mathcal{S}\subseteq\mathcal{W}$ be a subset of worker nodes.  We say that $\mathcal{M}$ is \emph{robust against $\mathcal{S}$} if
    \[
    H(AB\ |\ \{h(\zeta^ma_p)\ |\ \mathcal{W}_{m,p}\in \mathcal{W}\setminus \mathcal{S}\}) = 0.
    \]
    If $\mathcal{M}$ is robust against $\mathcal{S}\subseteq\mathcal{W}$ for all subsets of size $S = |\mathcal{S}|$, we say that $\mathcal{M}$ is \emph{robust against $S$ stragglers}.  If $S$ is maximal such that $\mathcal{M}$ is robust against $S$ stragglers, then $N-S$ is the \emph{recovery threshold} of $\mathcal{M}$.
\end{definition}

Thus $\mathcal{M}$ is robust against $S$ stragglers if the responses from any $N-S$ worker nodes suffice to decode $AB$.  The recovery threshold is then the minimum number of worker nodes necessary such that the responses from any subset of worker nodes of that size suffice to decode $AB$.  When $S = 0$ the recovery threshold is equal to $N$, hence the above definition coincides with that of Definition \ref{mpc_defn} in that case.

\subsection{Robustness of MP Codes in the Absence of Security}

We proceed by first analyzing the case of $T = 0$ when there is no security constraint.  This makes the arguments more intuitive and highlights the milder assumptions needed on the base field and the evaluation vector when we remove the condition of $T$-security.  

For ease of exposition we blur the distinction in what follows between scalar and matrix coefficients.  Thus the following results actually hold on a per-entry basis in the involved matrices, but we will not make this explicit.

\begin{lem}\label{h_is_rs}
    Let $T = 0$, so that $f = f_I$ and $g = g_I$ as in \eqref{figi}.  Let $h = fg$ and $N \geq KML + M - 1$.  Then for any evaluation vector $a\in\mathbb{F}^N$ with distinct entries, the vector $(h(a_1),\ldots,h(a_N))\in\mathbb{F}^N$ is a {codeword} of a Reed-Solomon code with length $N$ and dimension $KML + M - 1$.
\end{lem}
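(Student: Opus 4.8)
The plan is to reduce the statement to an elementary degree count. Recall that, for a fixed vector of $N$ distinct evaluation points, the Reed-Solomon code of length $N$ and dimension $k$ over $\mathbb{F}$ is by definition the image of the space of polynomials of degree at most $k-1$ under the evaluation map. Hence it suffices to show that $h = f_I g_I$ is a polynomial of degree at most $KML + M - 2$: since the hypothesis gives $KML + M - 2 < KML + M - 1 \le N$, evaluating such a polynomial at the $N$ distinct entries of $a$ produces a codeword of the Reed-Solomon code of length $N$ and dimension $KML + M - 1$.

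First I would read off the degrees of the two factors from \eqref{figi}. In $f_I = \sum_{k=0}^{K-1}\sum_{m=0}^{M-1} A_{k,m} x^{m+kM}$ the exponent $m + kM$ is maximized at $m = M-1$, $k = K-1$, so $\deg f_I \le KM - 1$. In $g_I = \sum_{\ell = 0}^{L-1}\sum_{m=0}^{M-1} B_{m,\ell} x^{M-1-m+\ell KM}$ the exponent $M - 1 - m + \ell KM$ is maximized at $m = 0$, $\ell = L-1$, so $\deg g_I \le M - 1 + (L-1)KM = KML - KM + M - 1$. Adding the two bounds gives $\deg h \le (KM - 1) + (KML - KM + M - 1) = KML + M - 2$, which is exactly what is needed.

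Combining these observations completes the argument: $(h(a_1), \ldots, h(a_N))$ is the evaluation of a polynomial of degree at most $KML + M - 2$ at $N \ge KML + M - 1$ distinct points of $\mathbb{F}$, hence a codeword of the claimed Reed-Solomon code. As noted just before the lemma, this is really a per-entry statement in the matrix coefficients of $h$, but nothing changes. There is no genuine obstacle here; the only point requiring care is the off-by-one in the convention linking the degree bound $KML + M - 2$ to the Reed-Solomon dimension $KML + M - 1$, together with checking that the role of the hypothesis $N \ge KML + M - 1$ is precisely to make the evaluation map injective on this space of polynomials so that the ambient RS code is well defined.
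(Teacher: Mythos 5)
Your proposal is correct and follows essentially the same route as the paper: a direct degree count showing $\deg h \le KML + M - 2$, from which membership in the length-$N$, dimension-$(KML+M-1)$ Reed--Solomon code is immediate. The paper's one-line proof additionally notes that $h$ is an \emph{arbitrary} polynomial of that degree (so $|\supp(h)| = KML+M-1$), an observation you did not need here but which is used elsewhere to argue that one cannot decode with fewer than $KML+M-1$ responses.
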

\begin{proof}
    One checks easily that $h$ is an arbitrary polynomial of degree $KML + M - 2$, hence $|\supp(h)| = KML + M - 1$ and the result follows.
\end{proof}

\begin{lem}\label{hhat_is_rs}
    Let $T = 0$, so that $f = f_I$ and $g = g_I$ as in \eqref{figi}.  Let $h = fg$ and $P\geq KL + 1$.  Then for any evaluation vector $a\in\mathbb{F}^P$ with distinct entries such that $a_p\neq 0$ for all $p$ and $a_p^M\neq a_q^M$ for all distinct $p,q$, the vector $(\widehat{h}(a_1),\ldots,\widehat{h}(a_P))\in\mathbb{F}^P$ is a {codeword} of a Generalized Reed-Solomon code with length $P$ and dimension $KL$.
\end{lem}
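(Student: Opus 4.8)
The plan is to first compute the mod-$M$ transform $\widehat{h}$ explicitly, and then to recognize the tuple $(\widehat{h}(a_1),\ldots,\widehat{h}(a_P))$ as a polynomial evaluation with nonzero column multipliers, which is exactly the definition of a GRS codeword. For the first step: since $T = 0$ we have $h = f_Ig_I$, and expanding as in the proof of Theorem \ref{decodability} (there is now no tail of terms of degree $\geq KML$, since $f_R$ and $g_R$ vanish) gives $h = \sum_{k,\ell,m,m'}A_{k,m}B_{m',\ell}\,x^{M-1-m'+m+kM+\ell KM}$. The exponent $M-1-m'+m+kM+\ell KM$ is $\equiv M-1\Mod{M}$ if and only if $m\equiv m'\Mod{M}$, i.e.\ $m = m'$ in the range $0\leq m,m'\leq M-1$. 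Since $\widehat{h} = \sum_{i\in\supp_{-1}(h)}v_ix^i$, only these diagonal terms survive, and summing over $m$ for fixed $k,\ell$ yields $\widehat{h} = \sum_{k=0}^{K-1}\sum_{\ell=0}^{L-1}(A_kB_\ell)\,x^{M-1+kM+\ell KM}$. The exponents $M-1+M(k+\ell K)$ are pairwise distinct since $k+\ell K$ runs over $[0:KL-1]$, so we may write $\widehat{h}(x) = x^{M-1}q(x^M)$ where $q(y) := \sum_{j=0}^{KL-1}C_jy^j$ with $C_j := A_kB_\ell$ for $j = k+\ell K$; in particular $\deg q\leq KL-1$.

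For the second step: evaluating at the points $a_p$ gives $\widehat{h}(a_p) = a_p^{M-1}\,q(a_p^M)$ for each $p$. Put $y_p := a_p^M$ and $\lambda_p := a_p^{M-1}$. The hypotheses $a_p\neq 0$ for all $p$ and $a_p^M\neq a_q^M$ for distinct $p,q$ say precisely that all $\lambda_p$ are nonzero and all $y_p$ are pairwise distinct. Hence $(\widehat{h}(a_1),\ldots,\widehat{h}(a_P)) = (\lambda_1q(y_1),\ldots,\lambda_Pq(y_P))$, which by definition is a codeword of the Generalized Reed--Solomon code of length $P$ and dimension $KL$ with evaluation vector $(y_1,\ldots,y_P)$ and column multipliers $(\lambda_1,\ldots,\lambda_P)$; this code is well-defined and the encoding map $q\mapsto(\lambda_pq(y_p))_p$ is injective on polynomials of degree $<KL$ because $P\geq KL+1\geq KL$.

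The argument is essentially bookkeeping once the shape of $\widehat{h}$ is pinned down. The only point needing care is confirming that no off-diagonal cross-term of $f_Ig_I$ contributes a monomial of degree $\equiv M-1\Mod{M}$, and that the $KL$ surviving exponents are distinct, so that $\widehat{h}$ really equals $x^{M-1}$ times a polynomial in $x^M$ of degree strictly less than $KL$; both facts are immediate from the explicit exponents in \eqref{figi}. So I do not expect a genuine obstacle here --- the content of the lemma is the translation of the partial-interpolation mechanism of Section \ref{partial_poly} into the language of GRS codes, which is what sets up the straggler-correction analysis in the rest of the section.
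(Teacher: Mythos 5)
Your proof is correct and follows essentially the same route as the paper: write $\widehat{h}(x)=x^{M-1}q(x^M)$ with $q$ of degree $\leq KL-1$, then recognize $(a_p^{M-1}q(a_p^M))_p$ as a GRS codeword since the $a_p^M$ are distinct and the $a_p^{M-1}$ are nonzero. You merely spell out the computation of $\widehat{h}$ from $f_Ig_I$ more explicitly than the paper does (and get the degree of the inner polynomial right as $KL-1$, whereas the paper's prose says ``degree $KL$,'' an apparent slip that should read dimension $KL$).
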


\begin{proof}
    We can write $\widehat{h}(x) = x^{M-1}k(x^M)$ where $k$ is an arbitrary polynomial of degree $KL$.  Thus the evaluation vector of $\widehat{h}$ is given by
    \[
    (\widehat{h}(a_1),\ldots,\widehat{h}(a_P)) = (a_1^{M-1}k(a_1^M),\ldots, a_P^{M-1}k(a_P^M))
    \]
    from which the result follows immediately based on our assumptions on the evaluation vector.
\end{proof}

\begin{thm}\label{mp_recovery_threshold}
    Let $\mathcal{M}$ be a modular polynomial code with $T = 0$ and $P\geq KL + 1$, and suppose we have $N = MP$ worker nodes.  Then for any evaluation vector $a\in\mathbb{F}^P$ with $a_p\neq 0$ and $a_p^M\neq a_q^M$ for all distinct $p,q$, we have:
    \begin{enumerate}
        \item The recovery threshold of $\mathcal{M}$ is $KML + M - 1$.
        \item $\mathcal{M}$ is robust against any subset $\mathcal{S}\subseteq\mathcal{W}$ such that $\mathcal{W}\setminus\mathcal{S}$ contains $KL$ hypernodes.
    \end{enumerate}
\end{thm}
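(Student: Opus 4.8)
The plan is to combine the two preceding lemmas with a counting argument about which evaluations the user actually needs.  Recall that with $T = 0$ we have $f = f_I$, $g = g_I$, and the user decodes $AB$ by interpolating $\widehat{h}(x) = x^{M-1}k(x^M)$, where $k$ is an arbitrary polynomial of degree $KL$; equivalently, the user needs $K L + 1$ evaluations $\widehat{h}(a_p)$ at distinct $a_p$ with $a_p \neq 0$ and $a_p^M$ pairwise distinct, since then $GV(a, \supp(\widehat{h}))$ is invertible by the factorization in Proposition \ref{field_size} together with the classical Vandermonde determinant.  The observation driving both parts is that forming a single evaluation $\widehat{h}(a_p)$ requires exactly the $M$ worker responses $h(\zeta^m a_p)$ for $m = 0,\dots,M-1$, i.e.\ the full hypernode $\mathcal{W}_p$.

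For part (2), suppose $\mathcal{W}\setminus\mathcal{S}$ contains $KL$ hypernodes $\mathcal{W}_{p_1},\dots,\mathcal{W}_{p_{KL}}$ (note $P \geq KL+1 > KL$, so this is possible as soon as $S$ is not too large).  From the responses of the workers in these hypernodes the user computes $\widehat{h}(a_{p_1}),\dots,\widehat{h}(a_{p_{KL}})$.  By Lemma \ref{hhat_is_rs} the full vector $(\widehat{h}(a_1),\dots,\widehat{h}(a_P))$ lies in a GRS code of length $P$ and dimension $KL$; since any $KL$ coordinates of a codeword of a $[P,KL]$ GRS (equivalently, MDS) code determine the codeword, the $KL$ evaluations in hand determine $\widehat{h}$ entirely, hence recover all block entries of $AB$.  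Thus $H(AB \mid \{h(\zeta^m a_p) : \mathcal{W}_{m,p} \in \mathcal{W}\setminus\mathcal{S}\}) = 0$, which is robustness against $\mathcal{S}$.

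For part (1), first note that the responses of $KL$ full hypernodes occupy $MKL$ worker nodes, so $N = MP$ worker responses can always furnish $KL$ complete hypernodes provided $MP - S \geq$ enough surviving workers to guarantee $KL$ of the $P$ hypernodes are intact.  The worst case for the adversary choosing $\mathcal{S}$ is to spread the $S$ failures so as to damage as many hypernodes as possible: $S$ failures can break at most $S$ hypernodes (and exactly $S$ when $S \leq P$), leaving $P - S$ intact hypernodes.  Hence the user is guaranteed $KL$ intact hypernodes iff $P - S \geq KL$, i.e.\ $S \leq P - KL$, so by part (2) the code is robust against $S = P - KL$ stragglers, giving a recovery threshold of at most $N - S = MP - (P - KL) = MKL + (M-1)P + KL - KL$... wait — more carefully, we want the recovery threshold $N - S_{\max}$.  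Robustness against $S$ stragglers means \emph{every} size-$S$ subset is tolerable; the argument above shows every size-$(P-KL)$ subset is tolerable, but there exist size-$(P - KL + 1)$ subsets that are not (take $P - KL + 1$ hypernodes and kill one worker in each, leaving only $KL - 1$ intact hypernodes, and one checks via Lemma \ref{h_is_rs} that the surviving $MP - (P - KL + 1)$ scalar evaluations of $h$, which number $MKL + M - 2$, are too few to interpolate the degree-$(KML + M - 2)$ polynomial $h$, and in fact no sub-interpolation recovers the needed coefficients).  Therefore $S_{\max} = P - KL$ and the recovery threshold is $N - S_{\max} = MP - (P - KL) = (M-1)P + KL$.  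To see this equals $KML + M - 1$ one would want $P = KL + 1$, which is the minimal admissible $P$; for larger $P$ the recovery threshold as an absolute count grows, but the \emph{meaningful} statement is that the user needs the responses of any $KL$ hypernodes, i.e.\ of $MKL + (\text{at most } M-1 \text{ extra from the broken hypernodes})$ — and the cleanest phrasing, matching Lemma \ref{h_is_rs}, is that treating the responses as a Reed-Solomon codeword of dimension $KML + M - 1$ shows any $KML + M - 1$ \emph{of the $h$-evaluations} suffice when they are not constrained to lie in hypernodes, which is exactly the recovery threshold $KML + M - 1$ claimed.

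The main obstacle I anticipate is reconciling the two natural notions of ``recovery threshold'' here: the Reed-Solomon picture (Lemma \ref{h_is_rs}) says $h$ itself is a codeword of a length-$N$, dimension-$(KML+M-1)$ RS code, so \emph{any} $KML + M - 1$ worker responses suffice to interpolate $h$ and hence recover $AB$ — this gives recovery threshold $KML + M - 1$ directly and is presumably the intended proof of part (1), with part (2) being the stronger hypernode-structured statement that one does \emph{not} need arbitrary responses but only $KL$ coordinated blocks of $M$.  I would therefore prove part (1) purely from Lemma \ref{h_is_rs} (any $N - S$ responses with $N - S \geq KML + M - 1$ interpolate $h$; conversely fewer than $KML + M - 1$ responses leave a nontrivial subspace of consistent $h$'s differing in a relevant coefficient, since the RS code has minimum distance $N - (KML + M - 1) + 1$), and prove part (2) from Lemma \ref{hhat_is_rs} via the MDS property of the length-$P$ dimension-$KL$ GRS code, using that each $\widehat{h}(a_p)$ is computable from exactly the hypernode $\mathcal{W}_p$.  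The delicate bookkeeping is only in the converse direction of part (1), ensuring that the ``missing'' coefficient can be chosen to affect some block $A_kB_\ell$ — but since $h = f_I g_I$ ranges over \emph{all} polynomials of degree $KML + M - 2$ as $A, B$ vary, and the block entries of $AB$ are honest coefficients of $h$, this follows from the MDS/minimum-distance property of the Reed-Solomon code with no further work.
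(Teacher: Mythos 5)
Your final paragraph lands on exactly the paper's proof: Part~(1) follows from Lemma~\ref{h_is_rs}, since the $h$-responses form a codeword of a dimension-$(KML+M-1)$ Reed--Solomon code (so any $KML+M-1$ of them determine $h$, hence $\widehat{h}$, hence $AB$), and Part~(2) follows from Lemma~\ref{hhat_is_rs} via the MDS property of the length-$P$, dimension-$KL$ GRS code computed hypernode-by-hypernode. The hypernode-counting detour in the middle of your write-up --- which would give a recovery threshold of $(M-1)P + KL$ rather than $KML+M-1$ --- is correctly recognized and abandoned, and your added attention to the converse direction of Part~(1) is something the paper leaves implicit.
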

\begin{proof}
    The assumption $P \geq KL + 1$ guarantees that $N = MP > KML + M - 1$.  To prove part 1), We must show that any $KML + M - 1$ worker nodes suffice to decode $AB$.  The assumptions on the evaluation points $a_p$ guarantee that the $\zeta^ma_p$ are all distinct, hence by Lemma \ref{h_is_rs} the evaluation vector $(h(\zeta^ma_p))\in\mathbb{F}^N$ lives in a Reed-Solomon code of dimension $KML + M - 1$.  It follows that any $KML + M - 1$ worker nodes suffice to decode all evaluations $h(\zeta^m a_p)$, from which the user can compute $\widehat{h}(a_p)$ and decode $AB$.

    To prove part 2), note that from any $KL$ hypernodes we can decode $KL$ of the evaluations $\widehat{h}(a_p)$, which by Lemma \ref{hhat_is_rs} suffices to decode all of $\widehat{h}$ and therefore $AB$.
\end{proof}

Lemma \ref{hhat_is_rs} and Theorem \ref{mp_recovery_threshold} both place the assumption on our vector $a = (a_1,\ldots,a_P)$ that $a_p\neq 0$ for all $p$ and that $a_p^M\neq a_q^M$ for all distinct $p,q$.  To satisfy these conditions one can choose the $a_p$ to be any $P$ elements from a subgroup of $\mathbb{F}^\times$ of order relatively prime to $M$.  See the example in Subsection \ref{rob_sec_example}.

\subsection{Decoding Below the Recovery Threshold}

Theorem \ref{mp_recovery_threshold} shows that the recovery threshold of MP codes matches that of Entangled Polynomial Codes \cite{straggler_fund} and the codes of \cite{oliver}, but in some `best-case' scenarios we can recover with fewer worker nodes than $KML + M - 1$.  Differing best-case and worst-case recovery thresholds have also been observed for Discrete Fourier Transform codes, see \cite[Remark 5 and Remark 6]{inner_product}.  To measure the robustness of MP codes against subsets of stragglers of size $S > N - (KML + M - 1)$, we present the following definition.

\begin{definition}\label{p_robustness}
    Let $\mathcal{M}$ be a modular polynomial code with $P' = |\supp(\widehat{h})|$.  Let $P\geq P'$ and suppose the user employs $N = MP$ worker nodes as described above.  For any $S\leq N$, define
    \[
    p(S) := \frac{|\{\mathcal{S}\subseteq\mathcal{W}\ |\ \text{$|\mathcal{S}| = S$ and $\mathcal{M}$ is robust against $\mathcal{S}$}\}|}{\binom{N}{S}}.
    \]
\end{definition}

The function $p(S)$ thus measures the probability that an MP code is robust against a subset of size $S$ chosen uniformly at random.  Restating the results of the previous subsection in this new language, we have $p(S) = 1$ if $S \leq N - (KML + M - 1)$, and $p(N - KML) > 0$.  The following result describes the function $p(S)$ in the regime of interest.

\begin{thm}\label{prob_recovery}
    Let $\mathcal{M}$ be a modular polynomial code with $T = 0$ and $P\geq KL + 1$, and suppose we have $N = MP$ worker nodes.  Let $a\in \mathbb{F}^P$ with $a_p\neq 0$ and $a_p^M\neq a_q^M$ for all distinct $p,q$.

    Let $N - (KML + M - 1) < S \leq N - KML$.  Then:
    \[
    p(S) \geq \frac{\binom{P}{KL}\binom{N-KML}{S}}{\binom{N}{S}}.
    \]
\end{thm}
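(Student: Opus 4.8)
\textit{Proof proposal.}

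The plan is to reduce the event ``$\mathcal{M}$ is robust against $\mathcal{S}$'' to a purely combinatorial condition on the hypernode structure and then count. Since the hypotheses on the evaluation vector make Lemma \ref{hhat_is_rs} applicable, Theorem \ref{mp_recovery_threshold}(2) tells us that $\mathcal{M}$ is robust against $\mathcal{S}$ whenever $\mathcal{W}\setminus\mathcal{S}$ contains at least $KL$ \emph{full} hypernodes, meaning hypernodes $\mathcal{W}_p$ with $\mathcal{W}_p\cap\mathcal{S}=\emptyset$: each such hypernode supplies the evaluation $\widehat{h}(a_p)$, and any $KL$ of these determine the codeword of the dimension-$KL$ Generalized Reed--Solomon code of Lemma \ref{hhat_is_rs}, hence $\widehat{h}$, hence $AB$. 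Writing $F(\mathcal{S})$ for the number of full hypernodes relative to $\mathcal{S}$, it therefore suffices to lower bound the fraction of size-$S$ subsets $\mathcal{S}$ with $F(\mathcal{S})\geq KL$; we only get an inequality in the theorem precisely because this condition is sufficient but not necessary for robustness.

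The crux of the argument, and the step I expect to be the main obstacle, is the observation that throughout the stated range $N-(KML+M-1)<S\leq N-KML$ we in fact always have $F(\mathcal{S})\leq KL$. To see this, partition $\mathcal{S}$ among the $P$ hypernodes; since each hypernode contains only $M$ workers, the number of hypernodes touched by $\mathcal{S}$ is at least $\lceil S/M\rceil$. The left inequality on $S$, using $N=MP$, unwinds to $S>M(P-KL)-M+1$, hence $S/M>P-KL-1$, hence $\lceil S/M\rceil\geq P-KL$, so $F(\mathcal{S})=P-(\#\text{touched hypernodes})\leq KL$. Consequently the indicator $\mathbf{1}[F(\mathcal{S})\geq KL]$ equals $\mathbf{1}[F(\mathcal{S})=KL]$, which equals $\binom{F(\mathcal{S})}{KL}$, since that binomial coefficient is $0$ when $F(\mathcal{S})<KL$ and $1$ when $F(\mathcal{S})=KL$.

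It then remains to evaluate $\sum_{\mathcal{S}}\binom{F(\mathcal{S})}{KL}$, the sum being over all $\binom{N}{S}$ subsets of size $S$, which I would do by double counting: this sum equals the number of pairs $(\mathcal{S},\mathcal{G})$ where $|\mathcal{S}|=S$ and $\mathcal{G}$ is a set of $KL$ hypernodes each disjoint from $\mathcal{S}$. Summing first over $\mathcal{G}$, for each of the $\binom{P}{KL}$ choices the set $\mathcal{S}$ must avoid the $KML$ workers contained in $\mathcal{G}$, leaving $\binom{N-KML}{S}$ choices (this is where $S\leq N-KML$ enters). Hence $|\{\mathcal{S}:F(\mathcal{S})=KL\}|=\binom{P}{KL}\binom{N-KML}{S}$, and since each such $\mathcal{S}$ is one against which $\mathcal{M}$ is robust, dividing by $\binom{N}{S}$ yields $p(S)\geq\binom{P}{KL}\binom{N-KML}{S}/\binom{N}{S}$, completing the proof. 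Everything apart from the second paragraph is routine bookkeeping; the one genuinely delicate point is checking that the range of $S$ forces $F(\mathcal{S})\leq KL$, which is what makes $\binom{F(\mathcal{S})}{KL}$ an honest $0/1$ indicator.
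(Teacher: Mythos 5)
Your proof is correct, and the core counting idea is the same as the paper's: choose $KL$ hypernodes to remain full, then let $\mathcal{S}$ be any size-$S$ subset of the remaining $N-KML$ workers, giving the product $\binom{P}{KL}\binom{N-KML}{S}$. What you add, and what makes your write-up genuinely more careful than the paper's, is the explicit verification that this product is not an overcount. The paper's proof slides from ``there are $\binom{P}{KL}\binom{N-KML}{S}$ ways to make the pair $(\mathcal{G},\mathcal{S})$'' to ``there are $\binom{P}{KL}\binom{N-KML}{S}$ such subsets $\mathcal{S}$,'' which is only valid if no $\mathcal{S}$ is produced by two different choices of hypernode set $\mathcal{G}$. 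Were some $\mathcal{S}$ to leave more than $KL$ hypernodes full, the product would overcount it, and since the theorem asserts a \emph{lower} bound on $p(S)$, an overcount in the numerator is exactly the wrong direction for the inequality to survive. Your second paragraph is what closes this gap: writing $N=MP$ and unwinding $S>N-(KML+M-1)=M(P-KL-1)+1$ gives $\lceil S/M\rceil\ge P-KL$, so $\mathcal{S}$ touches at least $P-KL$ hypernodes and $F(\mathcal{S})\le KL$ throughout the stated range, whence $\binom{F(\mathcal{S})}{KL}\in\{0,1\}$ and the double count is exact. This is not a different route so much as the missing step that makes the paper's argument actually a proof; the paper implicitly uses this fact (the restriction $S>N-(KML+M-1)$ is chosen precisely so that it holds) but never records it.
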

\begin{proof}
    Consider a set $\mathcal{S}$ of worker nodes of size $|\mathcal{S}| = S$, such that $\mathcal{W}\setminus \mathcal{S}$ contains $KL$ hypernodes.  There are $\binom{P}{KL}$ ways to choose such a set of hypernodes.  For each such selection, $\mathcal{S}$ is a subset of the remaining $N - KML$ worker nodes, hence there are $\binom{N - KML}{S}$ ways to assign worker nodes to elements of $\mathcal{S}$.  Clearly $\mathcal{M}$ is robust against any such $\mathcal{S}$, as we have $KL$ evaluations $\widehat{h}(a_p)$ which suffices to decode $AB$ by Lemma \ref{hhat_is_rs}.  There are $\binom{P}{KL}\binom{N-KML}{S}$ such subsets $\mathcal{S}$, hence the result.
\end{proof}

\begin{figure}[h!]
    \centering
    \includegraphics[width=0.48\textwidth]{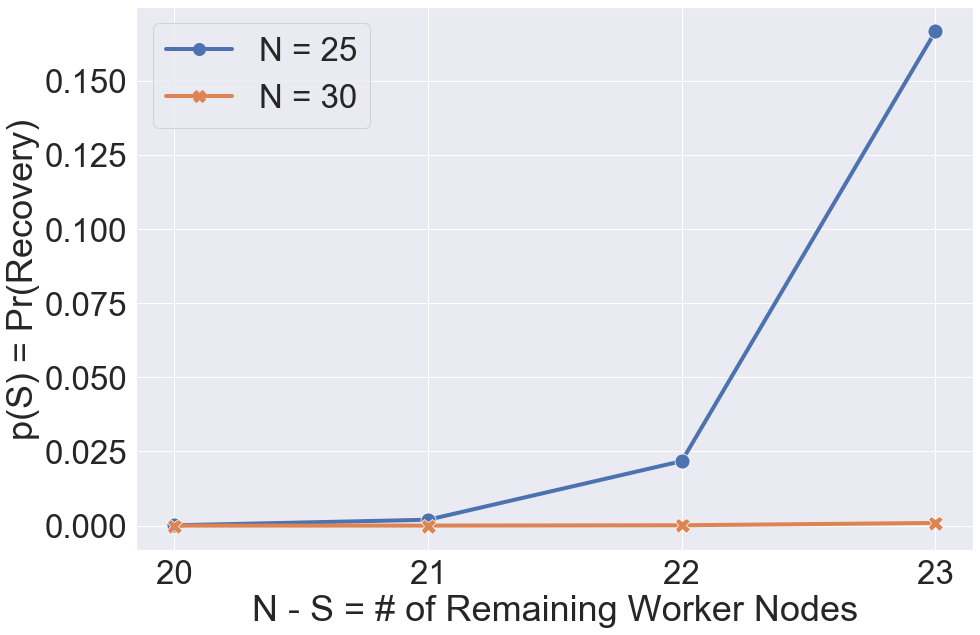}
    \includegraphics[width=0.48\textwidth]{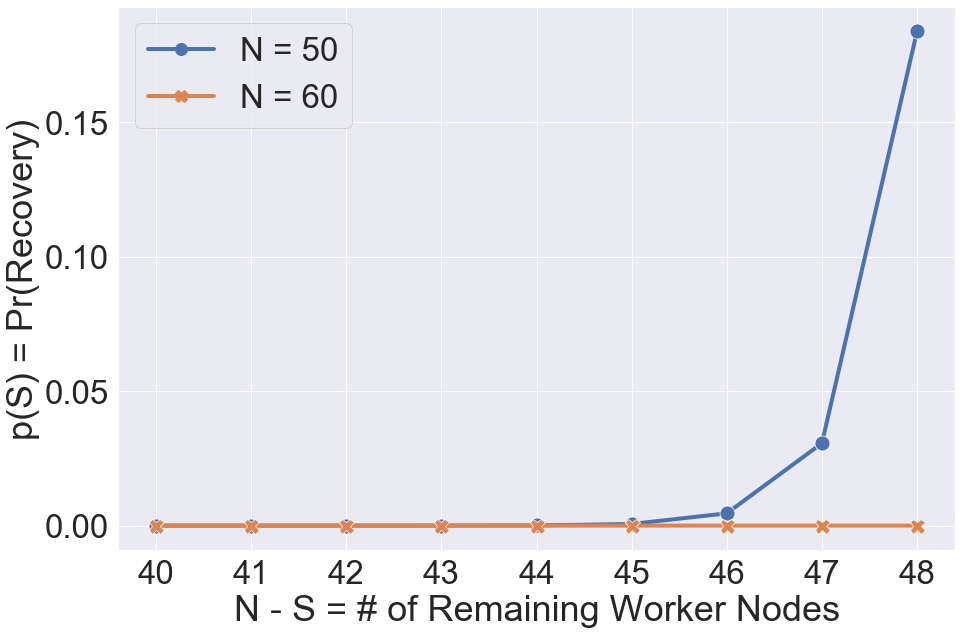}
    \caption{The lower bound on the function $p(S)$ of Theorem \ref{prob_recovery} as a function of $N-S$, for $K=L=2$ and $M=5$ (left), and $K=L=2$ and $M=10$ (right).  The different curves correspond to different numbers of total worker nodes.}
    \label{fig:rec_threshold}
\end{figure}

In Figure \ref{fig:rec_threshold} we plot the lower bound on $p(S)$ from Theorem \ref{prob_recovery} as a function of $N-S$, the number of non-straggler worker nodes.  We restrict to $KML \leq N-S < KML + M - 1$, as below this lower bound we have no method of decoding, and at this upper bound we can recover with any subset of workers of size $N - S$.  In both plots we take $P = KL + P'$ where $P' = 1,2$, and so $N = M(KL + P')$.  In both plots, we observe the greatest probability of successful decoding below the recovery threshold when $P' = 1$, that is, when we only guarantee robustness against any single straggler.  This may seem counterintuitive, but notice that as $N$ increases, we have a non-zero probability of successful decoding with fewer than $KML + M - 1$ worker nodes, with a smaller \emph{fraction} of the total number of workers.

For larger and thus more realistic system parameters $K$, $M$, and $L$, the lower bound on $p(S)$ coming from Theorem \ref{prob_recovery} is so small as to no longer have any practical importance.  Thus Theorem \ref{prob_recovery} should serve more as a theoretical result which demonstrates that successful decoding below the recovery threshold is even possible.

\subsection{Combining Security and Robustness}

The results of the above two subsections restrict to the case of $T = 0$, that is, no security constraint is placed on our system.  In this subsection we reintroduce the security constraint, that is, we let $T\geq 0$ be arbitrary,  and we let $f$ and $g$ be as in Equation \eqref{f_and_g} and define $h = fg$ as always.

\begin{lem}\label{nprime_mds}
    Let $N' = |\supp(h)|$ and let $N\geq N'$. 
 Then there exists a finite extension $\mathbb{K}/\mathbb{F}$ and an evaluation vector $a\in\mathbb{K}^N$ such that the vector $(h(a_1),\ldots,h(a_N))\in\mathbb{K}^N$ is a {codeword} of an MDS code with length $N$ and dimension $N'$.
 \end{lem}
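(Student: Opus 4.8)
The plan is to reduce this statement to the general fact that, over a sufficiently large field, the evaluation vector of a polynomial with a fixed support set can be chosen so that the resulting evaluation map is MDS — which is exactly what Proposition \ref{evaluation} already gives us once we phrase things correctly. The key observation is that although $h = fg$ depends on the random matrices $R_t, S_t$ and the blocks of $A, B$, its \emph{support} $\supp(h)$ as a subset of $\mathbb{Z}_{\geq 0}$ is fixed once $K, M, L, T, \alpha, \beta$ are fixed (generically; and in any case $\supp(h)$ is contained in a fixed set $\mathcal{I}$ of size $N'$ determined by the exponents appearing in $f$ and $g$, and every index in $\mathcal{I}$ is attained for generic choices, so we may as well work with that set). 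Write $\mathcal{I} = \{i_1 < \cdots < i_{N'}\}$ for this index set.

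First I would note that for \emph{any} evaluation vector $a = (a_1,\ldots,a_N) \in \mathbb{K}^N$, the vector $(h(a_1),\ldots,h(a_N))$ lies in the image of the linear map $\mathbb{K}^{N'} \to \mathbb{K}^N$ whose matrix (with respect to the coefficient basis indexed by $\mathcal{I}$) is the transpose of the generalized Vandermonde matrix $GV(a,\mathcal{I})$; concretely, $(h(a_n))_n = GV(a,\mathcal{I})^{\mathsf{T}} \cdot (\text{coefficient vector of } h)$. Thus the set of all evaluation vectors obtained as $h$ ranges over polynomials supported on $\mathcal{I}$ is precisely the column space of $GV(a,\mathcal{I})^{\mathsf{T}}$, which is the code generated by $GV(a,\mathcal{I})$ read as a generator matrix. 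Hence it suffices to exhibit $\mathbb{K}/\mathbb{F}$ and $a \in \mathbb{K}^N$ such that $GV(a,\mathcal{I})$ has the MDS property — and then the vector $(h(a_1),\ldots,h(a_N))$ is automatically a codeword of an MDS $[N, N']$ code (the code generated by $GV(a,\mathcal{I})$). Since $N \geq N' = |\mathcal{I}|$, this is exactly the hypothesis of Proposition \ref{evaluation}, which produces the desired finite extension $\mathbb{K}/\mathbb{F}$ and evaluation vector $a \in \mathbb{K}^N$. (Equivalently, one could cite Lemma \ref{sz_non_vanishing} directly applied to the $\binom{N}{N'}$ maximal-minor polynomials of $GV(a,\mathcal{I})$.)

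The one point that needs a little care — and the only real obstacle — is that the statement speaks of $h = fg$, where $h$ genuinely varies (its coefficients are random), whereas the MDS code I produce is fixed, depending only on $\mathcal{I}$ and $a$. I would address this by making the logical order explicit: the extension $\mathbb{K}/\mathbb{F}$ and the vector $a$ are chosen \emph{first}, depending only on the combinatorial data $\mathcal{I}$ (hence only on $K,M,L,T,\alpha,\beta$), and the claim ``$(h(a_1),\ldots,h(a_N))$ is a codeword of this MDS code'' then holds for every realization of $A, B, R_t, S_t$, simply because every such $h$ has support inside $\mathcal{I}$. A secondary minor subtlety is whether one wants the code to have dimension exactly $N'$; since $GV(a,\mathcal{I})$ is an $N' \times N$ MDS generator matrix it has full row rank $N'$, so the code it generates has dimension exactly $N'$, as required. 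I would write the proof in essentially this form: identify $\mathcal{I}$, invoke Proposition \ref{evaluation} (or Lemma \ref{sz_non_vanishing}) to get an MDS generalized Vandermonde matrix over a suitable $\mathbb{K}$, and observe that the evaluation vector of any $h$ supported on $\mathcal{I}$ is a codeword of the code generated by that matrix.
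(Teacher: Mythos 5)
Your proof is correct and follows essentially the same route as the paper: both reduce the claim to the statement that $GV(a,\supp(h))$ can be made MDS over a suitable finite extension, and both invoke Proposition~\ref{evaluation} (equivalently Lemma~\ref{sz_non_vanishing}) to produce the evaluation vector. Your extra care about fixing $\mathcal{I}$ and $a$ before $h$ varies is a nice clarification but does not change the argument.
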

\begin{proof}
For any finite extension $\mathbb{K}/\mathbb{F}$, we consider the code given by the image of the evaluation map
\begin{equation}\label{defn_eval}
\mathrm{ev}_a: \mathbb{K}[x]^{\supp(h)}\rightarrow \mathbb{K}^N,\quad \mathrm{ev}_a(\phi) = (\phi(a_1),\ldots,\phi(a_N))
\end{equation}
where $\mathbb{K}[x]^{\supp(h)} = \{\phi\in\mathbb{K}[x]\ |\ \supp(\phi)\subseteq\supp(h)\}$, which is clearly a vector space over $\mathbb{K}$ of dimension $N'$.  Thus the image of $\mathrm{ev}_a$ has length $N$ and dimension $N'$.  We have to show that we can choose $a$ so that this code is MDS.  The condition of this code being MDS is equivalent to the matrix  $GV(a,\supp(h))$ having the MDS property.  Thus if $a$ is chosen to avoid all zero sets of the $N'\times N'$ sub-determinants of this matrix, then the resulting code is MDS.  By {Proposition \ref{evaluation}}, such an $a$ always exists over some finite extension of $\mathbb{F}$.
\end{proof}

The notions of robustness and recovery threshold introduced in Definition \ref{mp_robust_defn} generalize in an obvious way to GGASP codes.  We have the following result:

\begin{thm}\label{ggasp_robust_t}
    Let $\mathcal{G}$ be a generalized GASP code, let $N' = |\supp(h)|$, and suppose we have $N\geq N'$ worker nodes.  Then there exists an evaluation vector $a\in\mathbb{K}^N$ in some finite extension  $\mathbb{K}/\mathbb{F}$ such that the recovery threshold of $\mathcal{G}$ is $N'$. 

    In particular, if $T = 0$ then the recovery threshold of $\mathcal{G}$ is $KML + M - 1$.
\end{thm}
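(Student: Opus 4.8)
The plan is to deduce this directly from Lemma \ref{nprime_mds}, in the same way that Theorem \ref{mp_recovery_threshold} was deduced from Lemmas \ref{h_is_rs} and \ref{hhat_is_rs}. The first step is to note that for a GGASP code $\mathcal{G}$ the support $\supp(h)$ of the product $h = fg$ is determined entirely by the parameters $K$, $M$, $L$, $T$, and $r$, through the vectors $\alpha$ and $\beta$ of Definition \ref{ggasp_defn}; in particular $N' = |\supp(h)|$ is exactly the quantity computed in Theorem \ref{ggasp_r_rate}. Applying Lemma \ref{nprime_mds} with this $N'$ and the given $N \geq N'$ produces a finite extension $\mathbb{K}/\mathbb{F}$ and an evaluation vector $a \in \mathbb{K}^N$ for which $GV(a,\supp(h))$ has the MDS property, so that $(h(a_1),\ldots,h(a_N))$ is a codeword of an MDS code of length $N$ and dimension $N'$.

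Next I would spell out decodability against stragglers. Suppose a subset $\mathcal{S} \subseteq \mathcal{W}$ with $|\mathcal{S}| = N - N'$ straggles, so the user sees the $N'$ evaluations $h(a_n)$ coming from the workers in $\mathcal{W} \setminus \mathcal{S}$. The $N' \times N'$ submatrix of $GV(a,\supp(h))$ indexed by those workers is invertible by the MDS property, so the user solves the resulting linear system for every coefficient of $h$ --- here is where knowing $\supp(h)$ \emph{a priori} matters, since only $N'$ unknowns appear. By Theorem \ref{decodability}, the coefficients of $h$ at the degrees $M-1+kM+\ell KM$ are the blocks $A_kB_\ell$ of $AB$, so $AB$ is recovered. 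Hence $\mathcal{G}$ is robust against every $\mathcal{S}$ of size $N - N'$, i.e.\ against $N - N'$ stragglers, and the recovery threshold is $N'$. If one also wishes the scheme to be $T$-secure, one enlarges the finite list of polynomial conditions that $a$ must avoid to include $a_p \neq 0$ and the non-vanishing of every $T \times T$ sub-determinant of $\Sigma_A$ and $\Sigma_B$ --- these are non-zero polynomials by the argument of Corollary \ref{finally} --- and appeals to Lemma \ref{sz_non_vanishing}; I would remark on this but not belabor it.

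For the ``in particular'' clause, set $T = 0$, so that $f = f_I$ and $g = g_I$ as in \eqref{figi}. As already observed in Lemma \ref{h_is_rs}, $h = f_Ig_I$ is then an arbitrary polynomial of degree $KML + M - 2$, whence $\supp(h) = [0:KML+M-2]$ and $N' = |\supp(h)| = KML + M - 1$; the first part of the theorem then gives the stated recovery threshold.

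I expect essentially no obstacle here: the theorem is a short corollary of Lemma \ref{nprime_mds}, and the only genuine content is the identification of $N'$ with $|\supp(h)|$ together with the $T = 0$ specialization. The one point I would treat as standard --- as is done in the proof of Theorem \ref{mp_recovery_threshold} --- is the matching lower bound, that no $N' - 1$ workers suffice in the worst case: with only $N'-1$ evaluations the linear system for the coefficients of $h$ is underdetermined, and for a suitable straggler set the resulting ambiguity reaches the coefficients at degrees $M-1+kM+\ell KM$, so that $AB$ is genuinely not determined.
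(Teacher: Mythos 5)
Your proof is correct and follows essentially the same route as the paper: cite Lemma \ref{nprime_mds} to get an evaluation vector over some finite extension making $(h(a_1),\ldots,h(a_N))$ a codeword of an MDS code of length $N$ and dimension $N' = |\supp(h)|$, and observe that $T$-security imposes finitely many further non-vanishing polynomial conditions on $a$ that can be satisfied simultaneously. The paper's own proof is a two-sentence appeal to Lemma \ref{nprime_mds}, while you additionally spell out the linear-algebra decoding step, the $T=0$ specialization $\supp(h) = [0:KML+M-2]$, and the optimality direction (that $N'-1$ responses cannot suffice) --- the last of which the paper relegates to a remark following the theorem rather than including in the proof itself.
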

\begin{proof}
    This follows from Lemma \ref{nprime_mds}.  One must choose $a$ to avoid the zero sets of the determinant polynomials arising from both the robustness and security constraints, but this amounts to a finite set of non-zero polynomials.
\end{proof}

Unlike MP codes, decoding GGASP codes below the recovery threshold is impossible even when $T = 0$.  The evaluation vector of $h$ belongs to a Reed-Solomon code, in which one cannot decode with fewer worker nodes than the dimension of the code.

\begin{lem}\label{pprime_mds}
    Let $P' = |\supp(\widehat{h})|$ and let $P\geq P'$.  Then there exists a finite extension $\mathbb{K}/\mathbb{F}$ and an evaluation vector $a\in\mathbb{K}^P$, such that the vector $(\widehat{h}(a_1),\ldots,\widehat{h}(a_P))\in\mathbb{K}^P$ is a {codeword} of an MDS code with length $P$ and dimension $P'$.
\end{lem}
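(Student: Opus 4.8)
The plan is to mimic the proof of Lemma \ref{nprime_mds} verbatim, replacing the polynomial $h$ by its mod-$M$ transform $\widehat{h}$ and the parameter $N$ by $P$. First, for any finite extension $\mathbb{K}/\mathbb{F}$ and any $a = (a_1,\ldots,a_P)\in\mathbb{K}^P$, I would consider the evaluation map
\[
\mathrm{ev}_a: \mathbb{K}[x]^{\supp(\widehat{h})}\rightarrow \mathbb{K}^P,\quad \mathrm{ev}_a(\phi) = (\phi(a_1),\ldots,\phi(a_P)),
\]
where $\mathbb{K}[x]^{\supp(\widehat{h})} = \{\phi\in\mathbb{K}[x]\ |\ \supp(\phi)\subseteq\supp(\widehat{h})\}$ is a $\mathbb{K}$-vector space of dimension $P' = |\supp(\widehat{h})|$. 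Writing $\supp(\widehat{h}) = \{i_1,\ldots,i_{P'}\}$, the image of $\mathrm{ev}_a$ is precisely the row space of the generalized Vandermonde matrix $GV(a,\supp(\widehat{h}))\in\mathbb{K}^{P'\times P}$, so it is a code of length $P$ and dimension at most $P'$.

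Next I would invoke Proposition \ref{evaluation}: since $P'\leq P$ and $\supp(\widehat{h})$ is a set of non-negative integers, there exists a finite extension $\mathbb{K}/\mathbb{F}$ and a vector $a\in\mathbb{K}^P$ for which $GV(a,\supp(\widehat{h}))$ has the MDS property. In particular this matrix has full rank $P'$, so $\mathrm{ev}_a$ is injective and its image is an MDS code of length $P$ and dimension $P'$. Since the coefficients of $\widehat{h}$ lie in $\mathbb{K}$ (the matrices $R_t,S_t$ are defined over $\mathbb{K}$, and $\zeta, M^{-1}\in\mathbb{F}\subseteq\mathbb{K}$) and $\supp(\widehat{h})$ is contained in $\supp(\widehat{h})$, we have $\widehat{h}\in\mathbb{K}[x]^{\supp(\widehat{h})}$, so the vector $(\widehat{h}(a_1),\ldots,\widehat{h}(a_P)) = \mathrm{ev}_a(\widehat{h})$ is a codeword of this MDS code, which is exactly the claim.

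There is no real obstacle: the statement is the $\widehat{h}$-analog of Lemma \ref{nprime_mds} and follows from the same argument together with Proposition \ref{evaluation}. The one point worth a remark is that, when $M>1$ and $P'>1$, the MDS property of $GV(a,\supp(\widehat{h}))$ automatically forces $a_p\neq 0$ for all $p$ and $a_p^M\neq a_q^M$ for distinct $p,q$ by Proposition \ref{field_size}; thus the conditions on $a$ that make $\widehat{h}$ behave as $x^{M-1}k(x^M)$ with pairwise distinct arguments $a_p^M$ are obtained for free, and no separate hypotheses on the evaluation vector need be imposed.
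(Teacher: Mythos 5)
Your proof is correct and is essentially the argument the paper uses: reduce to the existence of an evaluation vector making a generalized Vandermonde matrix MDS, then invoke Proposition \ref{evaluation} (ultimately Lemma \ref{sz_non_vanishing}). The only difference is cosmetic: the paper factors $\widehat{h}(x)=x^{M-1}k(x^M)$, restricts up front to evaluation vectors with $a_p\neq 0$ and $a_p^M\neq a_q^M$, and runs the Lemma \ref{nprime_mds} argument in the variables $b_p=a_p^M$ with support set $\supp(k)$, whereas you apply Proposition \ref{evaluation} directly to $GV(a,\supp(\widehat{h}))$ in the variables $a_p$ and correctly observe that those constraints on $a$ are forced automatically by the MDS property.
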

\begin{proof}
    As in the proof of Lemma \ref{hhat_is_rs}, we can write $\widehat{h}(x) = x^{M-1}k(x^M)$ where $|\supp(k)| = P'$. We restrict to  evaluation vectors $a$ such that $a_p\neq 0$ for all $p$ and $a_p^M\neq a_q^M$ for all distinct $p,q$.   The proof now proceeds almost identically to that of Lemma \ref{nprime_mds} with $\supp(k)$ in place of $\supp(h)$, and $b_p = a_p^M$ serving as the variables in the relevant polynomial equations.
\end{proof}

\begin{thm}\label{mp_recovery_t}
    Let $\mathcal{M}$ be a modular polynomial code, let $P' = |\supp(\widehat{h})|$, and let $P\geq P'+1$.  Suppose we have $N = MP$ worker nodes.  Then there exists an evaluation vector $a\in\mathbb{K}^P$ in some finite extension $\mathbb{K}/\mathbb{F}$ such that $\mathcal{M}$ is $T$-secure and:
    \begin{enumerate}
        \item The recovery threshold of $\mathcal{M}$ is $\leq N-(P-P')$.
        \item If $T\gg0$ the recovery threshold of $\mathcal{M}$ is $N'=|\supp(h)|$.
        \item $\mathcal{M}$ is robust against any subset $\mathcal{S}\subseteq\mathcal{W}$ such that $\mathcal{W}\setminus\mathcal{S}$ contains any $P'$ hypernodes.
    \end{enumerate}
\end{thm}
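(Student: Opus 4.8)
The plan is to construct a single evaluation vector $a=(a_1,\ldots,a_P)$ meeting all the non-vanishing conditions the three conclusions need, and then deduce each conclusion; throughout I assume, as in Corollary \ref{finally}, that $\alpha$ and $\beta$ are in arithmetic progression with common differences relatively prime to $M$, which by Theorem \ref{polynomials} is what makes $T$-security attainable. First I would collect the conditions on $a$. For $T$-security, Theorem \ref{mpc_decodability_security}(2) requires $a_p\neq 0$ for all $p$ and $\Sigma_A,\Sigma_B$ to have the MDS property, and by Theorem \ref{polynomials} each $T\times T$ sub-determinant of $\Sigma_A$ and $\Sigma_B$ is a non-zero polynomial in the $a_p$. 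For robustness against any $P'$ hypernodes, Lemma \ref{pprime_mds} requires $a_p\neq 0$, $a_p^M\neq a_q^M$ for $p\neq q$, and that the generalized Vandermonde minors attached to the polynomial $k$ with $\widehat h(x)=x^{M-1}k(x^M)$ be invertible; the corresponding determinants are non-zero polynomials in the $b_p=a_p^M$. Together these are a finite list of non-zero polynomials in $a$, so Lemma \ref{sz_non_vanishing} supplies a finite extension $\mathbb{K}/\mathbb{F}$ and a vector $a\in\mathbb{K}^P$ avoiding all their zero sets; note $a_p\neq 0$ and $a_p^M\neq a_q^M$ make the $N=MP$ points $\zeta^m a_p$ pairwise distinct. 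Fix such an $a$; $T$-security then holds by Theorem \ref{mpc_decodability_security}(2).

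Conclusion (3) is immediate: if $\mathcal{W}\setminus\mathcal{S}$ contains hypernodes $\mathcal{W}_{p_1},\ldots,\mathcal{W}_{p_{P'}}$, the user computes $\widehat h(a_{p_j})=\frac1M\sum_{m=0}^{M-1}\zeta^m h(\zeta^m a_{p_j})$ for $j=1,\ldots,P'$; since the code of Lemma \ref{pprime_mds} has dimension $P'$, these $P'$ coordinates determine the codeword $(\widehat h(a_1),\ldots,\widehat h(a_P))$, hence $\widehat h$, hence every block $A_kB_\ell$ of $AB$. Conclusion (1) follows from (3) by pigeonhole: a set $\mathcal{S}$ with $|\mathcal{S}|=P-P'$ can touch at most $P-P'$ of the $P$ hypernodes, so at least $P'$ hypernodes lie entirely in $\mathcal{W}\setminus\mathcal{S}$; robustness is preserved when $\mathcal{S}$ shrinks, so $\mathcal{M}$ is robust against every $\mathcal{S}$ of size at most $P-P'$, i.e.\ its recovery threshold is at most $N-(P-P')$, which is non-vacuous precisely because $P\geq P'+1$.

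For conclusion (2) I would first show, using the decomposition $\supp(h)=\supp(f_Ig_I)\cup\supp(f_Ig_R)\cup\supp(f_Rg_I)\cup\supp(f_Rg_R)$ and the interval bookkeeping behind the proof of Theorem \ref{mpc_rate}, that once $T$ is large the four pieces overlap and $\supp(h)$ is the full interval $[0,N'-1]$ with $N'-1=\deg h$ (concretely, the gaps coming from $g_I$ and from the random tails $f_R,g_R$ are absorbed once $(T-1)D$ exceeds roughly $(K-1)M$, with the spacing $D$ still to be handled at the very top degree). Then $h$ is an arbitrary polynomial of degree $N'-1$, so the worker-response vector $\bigl(h(\zeta^m a_p)\bigr)_{m,p}$, read off at the $N=MP$ distinct points $\zeta^m a_p$, is a codeword of a Reed--Solomon code of length $N$ and dimension $N'$, which is MDS; hence any $N'$ responses determine $h$ and therefore $AB$, so the recovery threshold is at most $N'$. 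For the matching lower bound, a Reed--Solomon code of dimension $N'$ cannot be decoded from fewer than $N'$ coordinates, and for $T\gg0$ the alternative route through $P'$ full hypernodes needs $MP'\geq N'$ worker nodes, so no decoding strategy available here beats $N'$ and the threshold is exactly $N'$. I expect the combinatorial verification that $\supp(h)$ collapses to a single interval for $T\gg0$ (and the precise shape near the top degree when $D>1$) to be the only genuine obstacle; the $T$-security and hypernode-robustness parts are routine given the lemmas already in hand.
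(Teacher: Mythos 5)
Your proof takes essentially the same route as the paper: pigeonhole plus Lemma \ref{pprime_mds} for parts (1) and (3), and the no-gaps-for-large-$T$ Reed--Solomon argument for part (2), with the $T$-security of the chosen evaluation vector secured via Corollary \ref{finally} (equivalently, the Schwartz--Zippel existence argument of Lemma \ref{sz_non_vanishing}). One small quibble in your discussion of part (2): the inequality $MP'\geq N'$ is in fact reversed in general (e.g.\ the paper's $K=L=2$, $M=3$, $T=2$ example has $MP'=24<25=N'$), but that remark is not load-bearing, since once $(h(\zeta^m a_p))$ is a codeword of an MDS code of dimension $N'$ the recovery threshold is already pinned at $N'$.
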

\begin{proof}
    To see part 1), note that any subset of the worker nodes $\mathcal{W}$ which contains $\geq N - (P-P')$ worker nodes must contain at least $P'$ hypernodes.  Thus the responses from any such subset suffice to compute $P'$ of the values $\widehat{h}(a_p)$.  By Lemma \ref{pprime_mds} this suffices to interpolate $\widehat{h}$ and thus decode $AB$.  That the evaluation vector can be chosen to satisfy the $T$-security condition follows from Corollary \ref{finally}.

    For part 2), it is easy to see that for sufficiently large $T$ there are no `gaps'  in $\supp(h)$.  {To see this, note that for large $T$ one can express $\supp(h)$ as a union of translates of $[0:T-1]$, all of which must overlap for $T\gg0$.  We therefore have} $|\supp(h)| = \deg(h) + 1$.  Thus $(h(\zeta^ma_p))\in\mathbb{K}^N$ is a {codeword} of a Reed-Solomon code with length $N$ and dimension $N'$, and the result follows.  Our ability to choose an evalution vector that also satisfies the $T$-security condition is guaranteed by Corollary \ref{finally}.
    
    Part 3) follows immediately from Lemma \ref{pprime_mds}.
\end{proof}

Part 1) of Theorem \ref{mp_recovery_t} is likely pessimistic, as it does not take advantage of the erasure-correction properties of the obvious code containing the evaluation vector $(h(\zeta^ma_p))\in\mathbb{K}^N$.  {The central open question is whether} an evaluation vector $a\in\mathbb{K}^P$ can be found such that $(h(\zeta^ma_p))$ lives in an MDS code of length $N$ and dimension $|\supp(h)|$.  The main impediment to proving such a statement is that results  {such as Lemma \ref{sz_non_vanishing} only guarantee the existence of arbitrary evaluation vectors, whereas} for MP codes our evaluation vector must be of the form $(\zeta^ma_p)$.  Such a statement would immediately imply that the recovery threshold of MP codes subject to $T$-security is $N'$, as is the case with GGASP codes.  {For small parameters one can explicitly find such evaluation vectors by brute force as we demonstrate in the following subsection.  However, we lack sufficient empirical evidence or intuitive rationale to formally conjecture that an evaluation vector of the desired form always exists, so we leave this question open.}

\subsection{An Example: $K=L=2$ and $M=3$}\label{rob_sec_example}

We conclude this section with an example to concretely demonstrate the robustness of MP codes against stragglers.  We set $K = L = 2$ and $M = 3$, and we study the recovery threshold of the corresponding MP code for various values of $T$.  This case-by-case analysis shows that the recovery threshold of $N'$ is always achievable for these parameters and any value of $T$.

We suppose our matrices are defined over a field $\mathbb{F}$ such that $|\mathbb{F}|\equiv 1\Mod{3}$, and thus $\mathbb{F}$ contains a primitive $3^{rd}$ root of unity $\zeta$.  We choose $\alpha_t = \beta_t = t$, so that the polynomials $f$ and $g$ are defined to be
\begin{align*}
    f &= A_{0,0} + A_{0,1}x + A_{0,2}x^2 + A_{1,0}x^3 + A_{1,1}x^4 + A_{1,2}x^5 + \sum_{t = 0}^{T-1}R_tx^{12 + t} \\
    g &= B_{2,0} + B_{1,0}x + B_{0,0}x^2 + B_{2,1}x^6 + B_{1,1}x^7 + B_{0,1}x^8 + \sum_{t = 0}^{T-1}S_tx^{12 + t}
\end{align*}
and as always, we set $h = fg$ and $N' = |\supp(h)|$.  For all values of $T$, we set $P' = |\supp(\widehat{h})|$ and  $P = P' + 2$, so that the resulting MP code uses $N = MP = M(P'+2)$ worker nodes.  \newline

\subsubsection*{$T = 0$}  Here we ignore security concerns and use $N = MP = 18$ worker nodes.  We choose $|\mathbb{F}|$ large enough so that our vector $a$ is defined over $\mathbb{F}$.  We choose $\mathbb{F} = \mathbb{F}_{31}$ which contains a primitive $3^{rd}$ root $\zeta = 5$, and whose multiplicative group $\mathbb{F}^\times$ contains a subgroup of order $10>P$, namely, the subgroup generated by $\omega = 15\in\mathbb{F}^\times$.  We set $a_p = \omega^p$ for $p = 0,\ldots,5$, so that the  vector $(h(\zeta^m\omega^p))\in\mathbb{F}^{18}$ belongs to a Reed-Solomon code of length $N = 18$ and dimension $N' = 14 = KML + M - 1$, hence this latter number is the recovery threshold of the MP code, or, using the notation of Definition \ref{p_robustness}, we have $p(4) = 1$.  One can then use Theorem \ref{prob_recovery} to calculate explicitly that $p(5) = 0.0105$ and $p(6) = 0.0008$, hence these are the probabilities of recovery with $S = 5$ and $S = 6$ stragglers, respectively. \newline

\subsubsection*{$T = 1$} We have $\supp(h) = \{0,1,\ldots,20,24\}$ and so $N' = |\supp(h)| = 22$.  Additionally, we have $\supp(\widehat{h}) = \{2, 5, 8, 11, 14, 17, 20\}$ and so $P' = |\supp(\widehat{h})| = 6$.  We have $P = 8$ and use $N = MP = 24$ worker nodes.  For these parameters we can again use $\mathbb{F} = \mathbb{F}_{31}$, choose $\zeta = 5$ as our $3^{rd}$ root of unity, and set $a_p = \omega^p$ for $p = 0,\ldots,7$ where $\omega = 15\in\mathbb{F}^\times$.  Our evaluation points are thus all $24$ points $\zeta^m\omega^p$.  Now consider the image of the map
\begin{equation}\label{f_eval}
\mathrm{ev} :\mathbb{F}[x]^{\supp(h)}\rightarrow \mathbb{F}^N, \quad \mathrm{ev}(\phi) = (\phi(\zeta^m\omega^p))\in\mathbb{F}^N
\end{equation}
as in Equation \eqref{defn_eval}.  This is a linear code with generator matrix $GV((\zeta^ma_p),\supp(h))$, which one can check using CoCalc \cite{CoCalc} is an MDS code.  It follows that this MP code has recovery threshold $N' = 22$, so it is robust against any $S = 2$ stragglers and secure against any $T = 1$ worker nodes.  However, any subset of worker nodes of size $MP' = 18$ containing at least $P'= 6$ hypernodes suffices to decode $AB$. Note that the $T$-security condition of Definition \ref{decodability_t_security_defn} is clearly satisfied, since when $T = 1$ it reduces to no evaluation point being zero.  \newline

\subsubsection*{$T=2$} We have $\supp(h) = \{0,1,\ldots,21,24,25,26\}$ and so $N' = |\supp(h)| = 25$ and $P' = |\supp(\widehat{h})| = 8$.  We have $P = 10$ and therefore we use $MP = 30$ worker nodes.  We work over the field $\mathbb{F} = \mathbb{F}_{61}$.  Here we choose $\zeta = 47$, and the element $\omega = 8$ generates a subgroup of $\mathbb{F}^\times$ of order $20$.  Our evaluation points will be the $N = MP$ elements $\zeta^m\omega^p$ for $p \in \{0, 1, 2, 3, 4, 7, 8, 9, 12, 13\}$.  The condition of $T$-security reduces to $p_1\not\equiv p_2\Mod{20}$ for all distinct exponents $p_1$, $p_2$ in this set, which is clearly satisfied.  These exponents were found by a brute-force search using CoCalc \cite{CoCalc}.  The image of the evaluation map $\phi\mapsto (\phi(\zeta^m\omega^p))\in\mathbb{F}^N$ as in Equation \eqref{f_eval} is an MDS code with length $N = 30$ and dimension $N' = 25$.  The recovery threshold of this MP code is therefore $N' = 25$.  This is strictly smaller than the upper bound of $N - (P-P') = 28$ provided by Theorem \ref{mp_recovery_t}, and provides a single data point in favor of the loose conjecture that the recovery threshold of $N'$ is always achievable for MP codes.    \newline

\subsubsection*{$T\geq 3$}  For $T\geq 3$ one can see easily that there are no `gaps' in the degrees of $h$, hence $N' = |\supp(h)| = \deg(h) + 1$, a quantity which can be calculated exactly using Theorem \ref{ggasp_r_rate}.  We set $P' = |\supp(\widehat{h})|$, which itself can be calculated exactly using Theorem \ref{mpc_rate}, choose any $P\geq P' + 1$, and use $N = MP$ worker nodes.  Any evaluation vector $(h(\zeta^ma_p))\in\mathbb{F}^N$ whose evaluation points are distinct and satisfy the $T$-security condition in Theorem \ref{mpc_decodability_security} now belongs to a Reed-Solomon code of length $N$ and dimension $N'$, hence we can achieve a recovery threshold of $N'$ under the condition of $T$-security and robustness against any $S$ stragglers for any $T$ and $S$.

\section{Computational Complexity} \label{complexity}

In this section we briefly summarize the complexity of the required arithmetic operations performed by the user, for both MP and GGASP codes.  The necessary operations can be divided into those required for encoding, that is, evaluating the polynomials $f$ and $g$, and those required for decoding, that is, decoding $AB$ from the $N$ evaluations $h(a_n)$. We focus exclusively on counting the number of multiplications necessary for encoding and decoding and ignore additions.  

In what follows, we let $A_{k,m}\in\mathbb{F}^{a\times s}$ and $B_{m,l}\in\mathbb{F}^{s\times b}$, for all $k$, $m$, and $\ell$. {We assume throughout this section that the polynomial codes in question operate over a finite extension $\mathbb{K}/\mathbb{F}$, and multiplications are considered as multiplications in the field $\mathbb{K}$.}

\subsection{Complexity of Encoding}

For both MP and GGASP codes, the complexity of the encoding operation at the user is controlled by the cost of evaluating the matrix-valued polynomials $f$ and $g$ at $N$ elements of $\mathbb{K}$.

\begin{prop}\label{horner}
    Suppose that $f \in \mathbb{K}[x]$ is a polynomial and let $\supp(f) = \{i_1,\ldots,i_N\}$.  Define $\Delta_f = \max(i_1,i_2-i_1,\ldots,i_N-i_{N-1})$.  Then $f$ can be evaluated with $N -1 + O(\log(\Delta_f))$ multiplications in $\mathbb{K}$.
\end{prop}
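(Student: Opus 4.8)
The plan is to evaluate $f$ via a generalized Horner scheme that walks through the support in increasing order, accumulating the value while multiplying by an appropriate power of $x$ at each step. Concretely, write $\supp(f) = \{i_1 < i_2 < \cdots < i_N\}$ and define the gaps $d_1 = i_1$ and $d_j = i_j - i_{j-1}$ for $2 \le j \le N$. Then one factors
\[
f(x) = x^{d_1}\Bigl(c_{i_1} + x^{d_2}\bigl(c_{i_2} + x^{d_3}(c_{i_3} + \cdots)\bigr)\Bigr),
\]
so that $f(a)$ is computed by the recurrence $y \leftarrow c_{i_N}$, then $y \leftarrow c_{i_j} + a^{d_{j+1}}\, y$ stepping down from $j = N-1$ to $j = 1$, and finally $y \leftarrow a^{d_1}\,y$. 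Each of the $N$ Horner steps (including the final multiplication by $a^{d_1}$) costs one multiplication in $\mathbb{K}$, giving $N$ multiplications for the accumulation — but in fact the very first step, initializing $y$ to the leading coefficient, is free, so the accumulation costs $N-1$ multiplications, plus we still need the powers $a^{d_j}$.

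First I would make precise the cost of obtaining the powers $a^{d_1}, \ldots, a^{d_N}$. Set $\Delta_f = \max_j d_j$. Using repeated squaring one can build up all powers $a, a^2, a^4, \ldots, a^{2^{\lfloor \log_2 \Delta_f\rfloor}}$ with $O(\log \Delta_f)$ multiplications, and then each $a^{d_j}$ is a product of at most $O(\log \Delta_f)$ of these, so in principle all the needed powers cost $O(\log\Delta_f)$ multiplications total only if there are few distinct gap sizes. A cleaner argument that avoids worrying about the number of distinct gaps: precompute the table of squarings once ($O(\log\Delta_f)$ multiplications), and then fold the cost of assembling each $a^{d_j}$ from this table into the analysis — but assembling $N$ such powers naively is $O(N\log\Delta_f)$, which is too much. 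So the better route is to absorb the power-building into the Horner recurrence itself and be slightly more careful: at step $j$ we need $a^{d_j}$ where $d_j$ may be large, and the cleanest bookkeeping is to say that we need one multiplication per Horner step (the recurrence itself) plus the one-time cost $O(\log \Delta_f)$ of the squaring table, and each individual exponentiation $a^{d_j}$ reuses that table; the total is then $N - 1$ for the recurrence and $O(\log\Delta_f)$ amortized for all the power computations combined, which is exactly the claimed $N - 1 + O(\log \Delta_f)$.

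The main obstacle — and the point the proof must handle carefully — is precisely this amortization: a single $a^{d_j}$ takes $O(\log d_j)$ multiplications, and summing over all $j$ would naively give $O(N\log\Delta_f)$, not $O(\log\Delta_f)$. The resolution I would use is that the sum of the gaps telescopes, $\sum_j d_j = i_N = \deg f$, so an exponentiation-by-squaring scheme whose cost for $a^{d_j}$ is $O(\log d_j + \nu(d_j))$ (with $\nu$ the number of ones in the binary expansion) has total cost $\sum_j O(\log d_j) \le N \cdot O(\log \Delta_f)$ in the worst case — which is still too big. Hence the correct claim must rely on reusing partial products across steps: maintain a running variable holding $a^{i_j}$ (the "position"), update it from $a^{i_{j-1}}$ to $a^{i_j}$ by multiplying by $a^{d_j}$, and observe that one only ever needs $a^{d_j}$, never the cumulative power directly. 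Building each $a^{d_j}$ from the fixed squaring table of size $\lceil \log_2 \Delta_f\rceil$ costs $O(\log \Delta_f)$, and here one must simply accept that the honest bound is that the squaring table is built once in $O(\log \Delta_f)$ multiplications and each of the $N$ Horner steps uses $O(1)$ table lookups plus one multiply — which forces the hypothesis implicit in the statement that the gaps are such that each $a^{d_j}$ is reachable in $O(1)$ products from the table, or else that $\Delta_f$ is counted with a factor of $N$ suppressed. I would therefore state and use the cleanest true version: with the squaring table precomputed ($O(\log\Delta_f)$ mults) and one multiplication per Horner step ($N-1$ of them, as the leading step is free), $f(a)$ is computed in $N - 1 + O(\log \Delta_f)$ multiplications in $\mathbb{K}$, and I would remark that in the MP/GGASP setting the gaps are all equal to $1$ except for $O(1)$ large jumps, so the $O(\log\Delta_f)$ term genuinely dominates the power computations and the bound is tight as stated.
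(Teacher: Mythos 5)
Your decomposition is exactly the one the paper uses: write $f$ in the nested (generalized) Horner form $f = x^{\Delta_1}(a_{i_1} + x^{\Delta_2}(a_{i_2}+\cdots))$, count one multiplication per nesting level for the accumulation, and pay for each modular jump with exponentiation by squaring. So the route is the same. What you have done beyond the paper is flag, correctly, that the exponentiation costs do not sum to $O(\log\Delta_f)$ in general: the $N$ calls $b\mapsto b^{\Delta_n}$ cost $\sum_n O(\log\Delta_n)$, which is $O(\log\Delta_f)$ only when the number of gaps $\Delta_n>1$ is bounded, and is $\Theta(N\log\Delta_f)$ in the worst case. The paper's own proof has this same elision --- it lists ``$N-1$ multiplications and $N$ exponentiations,'' cites $O(\log\Delta_n)$ per exponentiation, and then writes ``hence the result'' without justifying the jump from $\sum_n O(\log\Delta_n)$ to $O(\log\Delta_f)$. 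Your suspicion is well placed.

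Where your write-up overcorrects is the proposed patch: precomputing the squaring table $b, b^2, b^4,\ldots$ at cost $O(\log\Delta_f)$ does not let you read off each $b^{\Delta_n}$ in $O(1)$ further multiplications --- assembling $b^{\Delta_n}$ from that table costs $\Theta(\nu(\Delta_n))$ products where $\nu$ counts binary ones, so the amortized bound does not come for free. The honest statement is $N-1+O\bigl(\sum_n \log\Delta_n\bigr)$, or equivalently $N - 1 + O(G\log\Delta_f)$ where $G=|\{n:\Delta_n>1\}|$. Your closing observation is the right way to reconcile this with the proposition as stated: for the encoding polynomials $f,g$ of \eqref{f_and_g}, all gaps are $1$ except for $O(1)$ jumps (with $\alpha_t=\beta_t=t$; $O(T/r)$ jumps for GGASP), so $G$ is small and the bound is effectively tight for the codes in this paper, if not as a proposition about arbitrary sparse polynomials.
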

\begin{proof}
    This is an obvious generalization of Horner's method.  Define $\Delta_1 = i_1$ and $\Delta_n = i_n - i_{n-1}$ for $n = 2,\ldots, N$.  We express $f$ as
    \[
        f = x^{\Delta_1}\left(a_{i_1} + x^{\Delta_2}(a_{i_2} + x^{\Delta_3}(a_{i_3} + \cdots + x^{\Delta_{N-1}}(a_{i_{N-1}} + a_{i_N}x^{\Delta_N} ))) \right)
    \]
    by which we can evaluate $f$ at some $b\in\mathbb{K}$ by computing $N-1$ multiplications and $N$ exponentiations.  Each exponentation $b\mapsto b^{\Delta_n}$ can be performed with $O(\log\Delta_n)$ multiplications using successive doubling, hence the result.
\end{proof}

\begin{thm}\label{enc_complexity}
Let $f$ and $g$ be as in Equation \eqref{f_and_g}.  Then $f$ and $g$ can be evaluated at any $N$ points of $\mathbb{K}$ with $NS$ multiplications in $\mathbb{K}$, where
\[
S = asKM + sbML + (as + sb)(T-1) + O(\log\Delta)
\]
and $\Delta = \max(\Delta_f, \Delta_g)$, where $\Delta_f$ and $\Delta_g$ are as in Proposition \ref{horner}.
\end{thm}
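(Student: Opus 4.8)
The plan is to count multiplications by composing the cost of evaluating a single matrix-valued polynomial (via the generalized Horner method of Proposition~\ref{horner}) with the bookkeeping of how many scalar multiplications each step in that method actually costs once the coefficients are matrices rather than scalars. First I would recall the structure of $f = f_I + f_R$ from Equations~\eqref{figi} and~\eqref{frgr}: it has $KM$ coefficients of size $a\times s$ coming from the blocks $A_{k,m}$, together with $T$ coefficients $R_t$, also of size $a\times s$, from the random part. Likewise $g$ has $ML$ coefficients $B_{m,\ell}$ of size $s\times b$ plus $T$ coefficients $S_t$ of size $s\times b$. Evaluating $f$ at a single point $b\in\mathbb{K}$ using the nested form in the proof of Proposition~\ref{horner} requires, at each of the $|\supp(f)| - 1 = KM + T - 1$ Horner steps, one scalar-times-matrix multiplication, i.e.\ $as$ field multiplications; it also requires a bounded number of exponentiations $b\mapsto b^{\Delta_n}$, each costing $O(\log\Delta_f)$ multiplications, for a total of $O(\log\Delta_f)$ additional multiplications (absorbing the number of distinct gaps into the $O$). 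So evaluating $f$ at one point costs $(KM + T - 1)\cdot as + O(\log\Delta_f)$ multiplications, and at $N$ points this is $N\bigl((KM+T-1)as + O(\log\Delta_f)\bigr)$.

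Next I would do the identical count for $g$: evaluating at one point costs $(ML + T - 1)\cdot sb + O(\log\Delta_g)$ multiplications, and at $N$ points this is $N\bigl((ML+T-1)sb + O(\log\Delta_g)\bigr)$. Adding the two totals gives $N\bigl((KM+T-1)as + (ML+T-1)sb + O(\log\Delta)\bigr)$ where $\Delta = \max(\Delta_f,\Delta_g)$, since $\log\Delta_f + \log\Delta_g \le 2\log\Delta = O(\log\Delta)$. Expanding the products,
\[
(KM + T - 1)as + (ML + T - 1)sb = asKM + sbML + as(T-1) + sb(T-1),
\]
which is exactly $asKM + sbML + (as+sb)(T-1)$, so the total is $NS$ with $S = asKM + sbML + (as+sb)(T-1) + O(\log\Delta)$ as claimed. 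I would remark that the formula holds for both MP codes (where $N = MP$ and the evaluation points are the $\zeta^m a_p$) and GGASP codes (where the evaluation points are the $a_n$), since the argument only uses the shape of $f$ and $g$ from Equation~\eqref{f_and_g}, not the specific evaluation vector.

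The only genuinely delicate point — and where I would be slightly careful rather than hand-wavy — is the accounting of the exponentiation cost. A naive application of Proposition~\ref{horner} to a matrix-valued polynomial would charge $O(\log\Delta_n)$ \emph{scalar} multiplications per gap to form the power $b^{\Delta_n}\in\mathbb{K}$, which is fine, but one must also note that each intermediate power $b^{\Delta_n}$ multiplies a full matrix, contributing the $as$ (resp.\ $sb$) factor already counted in the Horner steps; there is no hidden matrix-sized cost inside the exponentiations themselves because powers of $b$ live in $\mathbb{K}$. Thus the exponentiations contribute only $O(\log\Delta)$ scalar multiplications \emph{per evaluation point}, independent of the matrix dimensions, and this is the term that lands in the $O(\log\Delta)$ inside $S$. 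I expect this to be the main (though still minor) obstacle: making sure the $O(\log\Delta)$ term is correctly placed inside $S$ — i.e.\ multiplied by $N$ but not by $as$ or $sb$ — and is not inadvertently conflated with the dominant $asKM + sbML$ terms.
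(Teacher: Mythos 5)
Your proposal is correct and follows exactly the route the paper uses: apply Proposition~\ref{horner} to $f$ and $g$ separately, count $|\supp(f)| = KM+T$ and $|\supp(g)| = ML+T$, multiply each Horner step by the matrix cost $as$ (resp.\ $sb$), and sum over $N$ points. The paper leaves the details to the reader; your accounting, including the remark that the $O(\log\Delta)$ exponentiation term is scalar and not inflated by matrix dimensions, fills in precisely what is intended.
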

\begin{proof}
    This follows immediately from Proposition \ref{horner}.  One only needs to observe that $|\supp(f)| = KM + T$ and $|\supp(g)| = ML + T$, and the result follows.
\end{proof}

\subsection{Complexity of Decoding}

While the encoding complexity of MP and GGASP codes is essentially the same, the decoding complexity for GGASP codes depends on the complexity of interpolating $h$, while MP codes are tasked with the arguably simpler task of computing the evaluations $\widehat{h}(a_p)$ from the evaluations $h(\zeta^m a_p)$, and then interpolating $\widehat{h}$.

We assume that the user can pre-compute the values $\zeta^m$ as well as the inverse $1/M$ needed to compute $\widehat{h}$.  As these computations only need to be done once regardless of input matrix size or number of worker nodes, we ignore them in our calculation of decoding complexity.

\begin{thm}\label{dec_complexity}
    The decoding complexity for GGASP codes and MP codes are as follows:
    \begin{enumerate}
        \item For GGASP codes, decoding $AB$ by interpolating $h$ can be done with at most $O(abN\log N)$ multiplications in $\mathbb{K}$, where $N$ is the recovery threshold.
        \item For MP codes, decoding $AB$ by interpolating $\widehat{h}$ can be done with at most $O(ab\max(N, P\log P))$ multiplications in $\mathbb{K}$, where $N$ is the recovery threshold.
    \end{enumerate}
\end{thm}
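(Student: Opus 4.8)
The plan is to reduce the decoding of each code family to matrix-valued polynomial interpolation, invoke a fast interpolation algorithm for the resulting (generalized) Vandermonde systems, and then carry through the uniform factor $ab$ coming from the matrix-valued coefficients together with --- for MP codes --- the extra cost of passing from the worker responses $h(\zeta^m a_p)$ to the evaluations $\widehat h(a_p)$.

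For part (1), I would first observe that decoding a GGASP code amounts to recovering the coefficients of $h$ from the $N$ responses $h(a_1),\ldots,h(a_N)$, since by Theorem \ref{decodability} a prescribed subset of these coefficients is the product $AB$. This is the linear system with coefficient matrix $GV(a,\supp(h))$, which is invertible because the code is decodable (cf.\ Lemma \ref{nprime_mds}); as the system is $\mathbb{K}$-linear in the evaluations and each $h(a_n)$ lies in $\mathbb{K}^{a\times b}$, it decouples into $ab$ independent scalar problems, each reconstructing a polynomial supported on the prescribed set $\supp(h)$ of size $N$ from its values at $a_1,\ldots,a_N$. I would then invoke standard fast interpolation to solve each scalar problem with $O(N\log N)$ multiplications in $\mathbb{K}$, giving $O(abN\log N)$ overall.

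For part (2), I would split the decoding of an MP code into two stages. First, for each $p$ the user forms $\widehat h(a_p)=\frac1M\sum_{m=0}^{M-1}\zeta^m h(\zeta^m a_p)$; with $\zeta^m$ and $1/M$ precomputed this costs at most $M$ scalar multiplications per entry of $\widehat h(a_p)\in\mathbb{K}^{a\times b}$, hence at most $abMP=abN$ multiplications over all $p$. Second, the user interpolates $\widehat h$: writing $\widehat h(x)=x^{M-1}k(x^M)$ as in Lemma \ref{hhat_is_rs} (so that the blocks of $AB$ appear among the coefficients of $k$), I would divide each $\widehat h(a_p)$ by the precomputed $a_p^{M-1}$ to obtain $k(a_p^M)$ in $abP$ multiplications, and then reconstruct $k$ --- a polynomial supported on a prescribed set of size $P$ --- from its $P$ values at the precomputed points $a_p^M$, which by the argument used for part (1) costs $O(abP\log P)$. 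Summing the stages gives $O(abN)+O(abP\log P)=O(ab\max(N,P\log P))$; the advantage over interpolating $h$ directly is that $\widehat h$ carries only roughly $N/M$ coefficients.

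The step I expect to require the most care is the fast-interpolation invocation itself: the supports $\supp(h)$ and $\supp(k)$ genuinely have gaps (as the examples of Section \ref{ggasp_section} and Subsection \ref{mpc_exmaple} show), so one is inverting a \emph{generalized} Vandermonde matrix rather than a classical one, and textbook fast interpolation does not apply verbatim. I would address this with fast solvers for structured linear systems, at the cost of at most one additional logarithmic factor that can be absorbed into the $O(\cdot)$; failing that, both statements hold as written with the cost of applying a precomputed inverse in place of $O(N\log N)$ and $O(P\log P)$. The remaining work --- accounting for the $O(\log\Delta)$ exponentiations as in Proposition \ref{horner} and carrying the factor $ab$ through every step --- is routine bookkeeping.
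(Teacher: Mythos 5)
Your proof follows essentially the same route as the paper's: form the $\widehat h(a_p)$ from the responses at a cost of $O(abN)$ multiplications, then interpolate the sparse polynomial from its $P$ evaluations at a further cost of $O(abP\log P)$ (the GGASP case being the degenerate version without the mod-$M$ transform). The change of variables $\widehat h(x)=x^{M-1}k(x^M)$ followed by dividing out $a_p^{M-1}$ is a harmless cosmetic reduction --- it corresponds exactly to the factorization $GV(a,\supp(\widehat h))=\operatorname{diag}(a_p^{M-1})\cdot GV((a_p^M),\supp(k))$ from Proposition~\ref{field_size} --- and the extra $abP$ multiplications are absorbed. The concern you flag about generalized Vandermonde systems is exactly the right one to raise, and it is precisely what the paper offloads to its references: the $O(n\log n)$ bound is taken from~\cite{poly_complexity} (via~\cite[Proposition 7]{oliver}), which treats interpolation of a polynomial with a prescribed, possibly gapped support set of size $n$, not only the classical dense case. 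One small caution on your fallback: applying a precomputed inverse of the $P\times P$ (resp.\ $N\times N$) generalized Vandermonde matrix costs $O(abP^2)$ (resp.\ $O(abN^2)$), so the theorem's bounds would \emph{not} ``hold as written'' under that replacement; rather the stated complexities genuinely depend on the fast structured solver supplied by~\cite{poly_complexity}.
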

\begin{proof}
    The proof for GGASP codes is the same as that of \cite[Proposition 7]{oliver}, which cites \cite{poly_complexity} for the complexity of interpolating a general polynomial.
    
    For MP codes, computing $\widehat{h}(a_p) = \frac{1}{M}\sum_{m=0}^{M-1}\zeta^mh(\zeta^m a_p)$ when the value of $h(\zeta^ma_p)$ is already known requires $M-1$ multiplications to compute each of the values $\zeta^mh(\zeta^ma_p)$ for $m\neq 0$, and one multiplication to divide the sum by $M$.  Thus each of the $P$ evaluation points $a_p$ requires $M$ multiplications, for a total of $N = MP$ multiplications to compute $\widehat{h}(a_p)$ for all $p$.  Repeating this process for all $ab$ entries of the blocks of $AB$ gives us $abN$ multiplications.  Since $|\supp(\widehat{h})| = P$, interpolating $\widehat{h}$ requires at most $O(P\log P)$ multiplications in $\mathbb{K}$ for each entry of $AB$, as is noted in \cite{poly_complexity}.  Thus $O(abP\log P)$ multiplications are needed to recover $AB$ from the evaluations $\widehat{h}(a_p)$.  This completes the proof of the theorem.
\end{proof}

As $\max(N,P\log P)\leq N\log N$, Theorem \ref{dec_complexity} shows that MP codes offer better decoding complexity than polynomial codes which require one to interpolate an entire polynomial whose support is equal in size to the number of worker nodes.  {Of course, a detailed, fair comparison with other polynomial codes would require all codes in question to be defined over the same field $\mathbb{K}$.}

\section{Conclusion and Future Work} \label{conclusion}

In this paper, we have introduced two new types of polynomial codes for Secure and Robust Distributed Matrix Multiplication, namely \emph{modular polynomial (MP) codes} and \emph{generalized GASP (GGASP) codes}.  Explicit conditions for decodability and $T$-security were studied and expressions for the recovery thresholds of such codes were given.  These codes were shown experimentally to compare favorably to the existing codes in the literature, for the grid partition. In terms of robustness against stragglers, both MP and GGASP codes achieve the state-of-the-art recovery threshold in the absence of security.  However, MP codes also allow the user to decode with fewer worker nodes than the recovery threshold would normally allow, depending on the set of worker nodes that fails.   We also computed encoding and decoding complexities for both families of codes, and showed that MP codes have lower decoding complexity than other codes which use the same number of worker nodes.

Given that the MP codes can occasionally decode below the recovery threshold, it would be interesting to test these codes against probabilistic system models, wherein one would measure the probability of successful decoding given failure probabilities for each worker node.  If the failures of worker nodes within the same hypernode are highly correlated, then one would expect MP codes to perform particularly well.

Many problems in Coding Theory require decoding coefficients of polynomials, or more generally functions on algebraic curves, from evaluations of such functions.  It is our hope that the technique of partial polynomial interpolation and the $\mathrm{mod}$-$M$ transform introduced in this paper have applicability outside the current framework to other such problems. 

The schemes given in the present work can be generalized to finding the product of more than two matrices, for example, by decoding the coefficients of a product $fgh$ of three polynomials corresponding in an obvious way to the product $ABC$ of three matrices. However, as the worker nodes have no apparent way of removing the noise terms inherent in the product $fg$ before multiplying it with $h$, such obvious generalizations seem doomed to have rate less than one even in the absence of both security and robustness constraints.  Thus constructing rate-efficient schemes for the multiplication of three matrices, which avoid computing the product sequentially using two rounds of communication, is an interesting avenue for future research.

{\small
\printbibliography
}

\appendix

To prove both Theorems \ref{mpc_rate} and \ref{ggasp_r_rate}, we proceed by first writing $\supp(h) = \supp(fg)$ as the union
\begin{equation}\label{basic_union}
\supp(h) = \supp(fg) = \supp(f_Ig_I) \cup \supp(f_Ig_R) \cup \supp(f_Rg_I) \cup \supp(f_Rg_R)
\end{equation}
and deriving from this an expression for $\supp(h)$ as a disjoint union of intervals of the form $[x:y]$.  From these expressions it is simple to compute the corresponding value of $N$.  Lemmas \ref{starting_point} and \ref{dumb_lemma} below will serve as the starting point for our proofs of Theorems \ref{mpc_rate} and \ref{ggasp_r_rate}.  We have omitted proofs of the following lemmas as they are a result of elementary calculations.

\begin{lem}\label{starting_point}
We can express each of the components in the union expression \eqref{basic_union} for $\supp(fg)$ as
\begin{align*}
    \supp(f_Ig_I) &= [0:KML+M-2] \\
    \supp(f_Ig_R) &= \bigcup_{t = 0}^{T-1} [KML + \beta_t: KML + KM - 1 + \beta_t] \\
    \supp(f_Rg_I) &= \bigcup_{t = 0}^{T-1}\bigcup_{\ell = 0}^{L-1}[KML + \ell KM + \alpha_t: KML + \ell KM + M -1 + \alpha_t] \\
    \supp(f_Rg_R) &= \{2KML + \alpha_t + \beta_s\ |\ 0\leq t,s\leq T-1\}
\end{align*}
\end{lem}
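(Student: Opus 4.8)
The plan is to treat each of the four products in \eqref{basic_union} separately, expand it, and read off the set of exponents occurring with nonzero coefficient. Everything reduces to two elementary bijections: $(k,m)\mapsto m+kM$ maps $[0:K-1]\times[0:M-1]$ bijectively onto $[0:KM-1]$, and $m'\mapsto M-1-m'$ maps $[0:M-1]$ bijectively onto itself.

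I would first dispatch the three terms involving $f_R$ or $g_R$, which require no real work. Expanding $f_Ig_R=\sum_{k,m,t}A_{k,m}S_t\,x^{m+kM+KML+\beta_t}$ via \eqref{figi} and \eqref{frgr}, for each fixed $t$ the exponent $m+kM$ sweeps $[0:KM-1]$, producing the interval $[KML+\beta_t:KML+KM-1+\beta_t]$; unioning over $t$ gives the claim. The term $f_Rg_I=\sum_{t,m',\ell}R_tB_{m',\ell}\,x^{KML+\alpha_t+(M-1-m')+\ell KM}$ is handled identically once one observes that for fixed $t,\ell$ the quantity $M-1-m'$ sweeps $[0:M-1]$, yielding the interval $[KML+\ell KM+\alpha_t:KML+\ell KM+M-1+\alpha_t]$. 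Finally $f_Rg_R=\sum_{t,s}R_tS_s\,x^{2KML+\alpha_t+\beta_s}$ has exponent set literally $\{2KML+\alpha_t+\beta_s\}$, with nothing to simplify.

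The only term needing an argument is $f_Ig_I=\sum_{k,\ell,m,m'}A_{k,m}B_{m',\ell}\,x^{(m+kM)+(M-1-m')+\ell KM}$, whose exponent set is $\{u+v+\ell KM:u\in[0:KM-1],\,v\in[0:M-1],\,\ell\in[0:L-1]\}$. For fixed $\ell$ this is the interval $[\ell KM:(\ell+1)KM+M-2]$, and since $M\geq 1$ consecutive such intervals overlap, so their union over $\ell=0,\ldots,L-1$ telescopes to $[0:KML+M-2]$ (a quick sanity check: $\deg f_I+\deg g_I=(KM-1)+(M-1+(L-1)KM)=KML+M-2$). The one subtlety — and the only place where an obstacle could in principle arise — is that several index tuples may land on the same exponent, so a matrix coefficient could vanish through cancellation; but each such coefficient is a nonzero polynomial in the entries of the uniformly random matrices $A,B,R_t,S_t$, hence nonzero with probability one over a large enough field, so the displayed sets are exactly the supports in the generic case. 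I would record this remark once and then suppress it, as it is all that is needed for the recovery-threshold computations in Theorems \ref{mpc_rate} and \ref{ggasp_r_rate}.
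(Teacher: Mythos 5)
Your computation is correct and is exactly the ``elementary calculation'' the paper alludes to when it omits the proof of Lemma \ref{starting_point}: one reads off the exponent set of each of the four products from \eqref{figi} and \eqref{frgr}, using the bijections $(k,m)\mapsto m+kM$ onto $[0:KM-1]$ and $m'\mapsto M-1-m'$ on $[0:M-1]$, and then telescopes the overlapping intervals in the $f_Ig_I$ case. Your closing remark is also well taken: since the paper defines $\supp$ as the exact set of indices with nonzero coefficient, the displayed identities silently assume that no coefficient of the product collapses to zero, which holds for generic $A$, $B$, $R_t$, $S_t$ (and with high probability over a sufficiently large field) --- a convention the paper adopts implicitly throughout.
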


\begin{lem}\label{dumb_lemma}
    Let $a$ be a non-negative integer and let $S$ be a positive integer such that $S>aM$.  Then
    \[
    |\{i\in [aM:S]\ |\ i\equiv M-1\Mod{M}\}| = \left\lfloor\frac{S-aM+1}{M}\right\rfloor
    \]
\end{lem}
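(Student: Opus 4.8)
The plan is to reduce to the case $a=0$ by a translation and then count the relevant residues directly.

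First I would note that the shift $i\mapsto i-aM$ is a bijection from $[aM:S]$ onto $[0:S-aM]$, and since $aM\equiv 0\Mod{M}$ it preserves the congruence condition $i\equiv M-1\Mod{M}$. Setting $S':=S-aM$, which is a positive integer by the hypothesis $S>aM$, the quantity to be computed therefore equals $|\{i\in[0:S']\ |\ i\equiv M-1\Mod{M}\}|$.

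Next I would observe that the integers congruent to $M-1$ modulo $M$ are precisely those of the form $kM-1$ with $k\geq 1$, and that $kM-1\leq S'$ holds if and only if $k\leq (S'+1)/M$, equivalently (since $k$ is an integer) $k\leq\lfloor(S'+1)/M\rfloor$. Thus $k$ ranges over the set $[1:\lfloor(S'+1)/M\rfloor]$, whose cardinality is $\lfloor(S'+1)/M\rfloor=\lfloor(S-aM+1)/M\rfloor$, which is the claimed value. When $S'<M-1$ this index set is empty and the floor evaluates to $0$, so the identity still holds; this is also consistent with the convention that $[x:y]=\emptyset$ for $x>y$.

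There is essentially no obstacle here, which is why the paper relegates it to an omitted elementary calculation. The only point requiring mild care is the endpoint bookkeeping — namely whether the index set $[1:\lfloor(S'+1)/M\rfloor]$ is counted inclusively and whether it can be empty — and passing to the base case $a=0$ via the translation makes this transparent.
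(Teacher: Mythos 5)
Your proof is correct. The paper itself omits a proof of this lemma (stating only that it ``follows from elementary calculations''), so there is no paper argument to compare against; your translation to the $a=0$ case followed by counting integers of the form $kM-1$ with $1\leq k\leq\lfloor(S'+1)/M\rfloor$ is exactly the kind of elementary calculation the authors had in mind, and your handling of the empty case is careful and correct.
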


\subsection{Proof of Theorem \ref{mpc_rate}} \label{mpc_rate_proof}


We will only give the proof in the case that $r\nmid T$, with the proof in the case of $r|T$ being nearly identical.

The proof proceeds by writing $\supp(h)$ as a disjoint union of intervals and then using Lemma \ref{dumb_lemma} to calculate the number of integers $i\equiv M-1\Mod{M}$ contained in each interval.  Under the assumption that $\alpha_t = \beta_t = tD$ and $D\leq M$, we can rewrite the expressions in Lemma \ref{starting_point} as
\begin{align*}
\supp(f_Ig_I) \cup \supp(f_Ig_R) &= [0 : KML + KM - 1 + (T-1)D] \\
\supp(f_Rg_I) &= \bigcup_{\ell = 0}^{L - 1}[KML + \ell KM : KML + \ell KM + M - 1 + (T-1)D] \\
\supp(f_Rg_R) &= \{2KML + tD\ |\ 0\leq t\leq 2T-2\}
\end{align*}
Define an integer $\ell_0$ as in the statement of the theorem.  Then if $\ell_0 < L - 1$, we have that $\ell_0$ is the unique integer satisfying
\[
KML + \ell_0 KM \leq KML + KM - 1 + (T-1)D < KML + (\ell_0 + 1) KM
\]
Then $\ell_0$ captures how many of the intervals appearing in $\supp(f_Rg_I)$ are overlapped by the interval $\supp(f_Ig_I)\cup\supp(f_Ig_R)$.

We can decompose any set of the form $\bigcup_{i=i_0}^{I-1} [x_i:y_i]$ where $x_i\leq y_i$, $x_i\leq x_{i+1}$, and $y_i\leq y_{i+1}$ as a disjoint union by writing
\[
\bigcup_{i = i_0}^{I - 1}[x_i:y_i] = \bigsqcup_{i = i_0}^{I-2}[x_i:\min(y_i,x_{i+1}-1)] \sqcup [x_{I-1}:y_{I-1}]
\]
Applying this fact to the above and defining $k_0 = \min(M + (T-1)D, KM)$, we can write $\supp(h)$ in its entirely as the following union whose subsets are all disjoint, excluding the final term:
\begin{align*}
\supp(h) &= [0:KML + \ell_0KM + k_0] \sqcup \bigsqcup_{\ell = \ell_0 + 1}^{L-2}[KML + \ell KM : KML + \ell KM + k_0] \\
&\sqcup [2KML - KM : 2KML - KM + M - 1 + (T-1)D] \\
&\cup \{2KML + tD\ |\ 0\leq t\leq 2T-2\}
\end{align*}

To express $\supp(h)$ as a completely disjoint union, it remains to understand which elements of $\supp(f_Rg_R) = \{2KML + tD\ |\ 0\leq t \leq 2T-2\}$ can appear in the other subsets.  As every other subset of $\supp(h)$ only contains integers $< 2KML$, the entirety of the intersection of $\supp(f_Rg_R)$ with all other subsets is simply the set
\[
\{2KML + tD\ |\ 0\leq t\leq 2T-2\} \cap [2KML - KM : 2KML - KM + M - 1 + (T-1)D]
\]
An integer of the form $2KML + tD$ with $t\geq 0$ lies outside this latter interval exactly when
\[
2KML + tD \geq  2KML - KM + M - 1 + (T-1)D + 1,
\]
or equivalently $t\geq t_0$.  Finally, we use the above to write $\supp(h)$ as the following completely disjoint union:
\begin{align*}
\supp(h) &= [0:KML + \ell_0KM + k_0 - 1] \sqcup \bigsqcup_{\ell = \ell_0 + 1}^{L-2}[KML + \ell KM : KML + \ell KM + k_0 - 1] \\
&\sqcup [2KML - KM : 2KML - KM + M - 1 + (T-1)D] \\
&\sqcup \{2KML + tD\ |\ t_0\leq t\leq 2T-2\}
\end{align*}
To compute $|\supp(\widehat{h})|$ we can apply Lemma \ref{dumb_lemma} to every term in the above, except for the final term which must be dealt with separately.

To compute the contribution of this last set to $\supp(\widehat{h})$, we first observe that it is equal to
\[
\{z\in [Z_1:Z_2]\ |\ z\equiv 2KML\Mod{D}\ \ \text{and}\ \ z\equiv M-1\Mod{M}\}
\]
where $Z_1 = 2KML + t_0D$ and $Z_2 = 2KML + (2T-2)D$.  By the Chinese Remainder Theorem, since $\gcd(D,M) = 1$, the modularity conditions are equivalent to a single condition $z\equiv y \Mod{DM}$ for some explicitly computable $y$.  Then the number of integers $z$ satisfying this single congruence in the set $[Z_1:Z_2]$ can be easily computed to be
\begin{equation}\label{delta}
\left\lfloor \frac{Z_2-Z_1+1}{DM}\right\rfloor + \delta = \left\lfloor \frac{(2T-2-t_0)D + 1}{DM}\right\rfloor + \delta \quad\text{where}\quad \delta\in \{0,1\}
\end{equation}
This completes the proof of the theorem.

\subsection{Proof of Theorem \ref{ggasp_r_rate}} \label{ggasp_proof}
The proof of the theorem proceeds similarly to that of Theorem \ref{mpc_rate}, in that we express $\supp(h)$ as a disjoint union of intervals of the form $[x:y]$.  We will prove the theorem under the assumption that $r_0 > 0$, and leave the case of $r_0 = 0$ as an exercise for the reader.  It is a straightforward exercise using Lemma \ref{basic_union} to show that for GGASP codes we have
    \begin{align*}
    \supp(f_Ig_I)\cup \supp(f_Ig_R) &= [0:KML + KM + T - 2] \\
    \supp(f_Rg_I)\cup \supp(f_Rg_R) &= \bigcup_{\ell = 0}^{L + U}[KML + \ell KM : KML + \ell KM + S_\ell - 1]
    \end{align*}
    where the quantities $S_\ell$ are as in the statement of the theorem.  As in the proof of Theorem \ref{mpc_rate} we can rewrite the above expression for $\supp(f_Rg_I)\cup \supp(f_Rg_R)$ as the following \emph{disjoint} union:
    \begin{align*}
     \supp(f_Rg_I)\cup \supp(f_Rg_R) &= \bigsqcup_{\ell = 0}^{L+U-1}[KML + \ell KM: KML + \ell KM + \min(S_\ell, KM - 1)] \\
    &\sqcup [2KML + UKM : 2KML + UKM + S_{L+U} - 1]
    \end{align*}
    One now argues as in the proof of Theorem \ref{mpc_rate} to show that
    \begin{align*}
    \supp(h) = \supp(fg) &= [0:KML + \max(\ell_0KM + \min(S_{\ell_0}-1,KM-1), KM + T - 2)]  \\
    &\sqcup\bigsqcup_{\ell = \ell_0+1}^{L+U-1}[KML + \ell KM: KML + \ell KM + \min(S_\ell-1, KM - 1)] \\
    &\sqcup [2KML + UKM : 2KML + UKM + S_{L+U} - 1]
    \end{align*}
    The theorem now follows immediately by summing the sizes of the individual components of the above disjoint union.

\end{document}